\documentclass[11pt]{article}

\usepackage[letterpaper,margin=1in]{geometry}
\usepackage[T1]{fontenc}
\usepackage{amsmath,amssymb,amsthm,mathtools}
\usepackage{newtxtext,newtxmath}
\usepackage{thmtools}
\usepackage{graphicx}
\usepackage{float}
\usepackage{microtype}
\usepackage{comment}
\usepackage{xcolor}
\usepackage{url}
\usepackage[round,authoryear]{natbib}
\usepackage{hyperref}

\definecolor{linkblue}{RGB}{22,68,122}
\definecolor{citegreen}{RGB}{23,105,74}
\hypersetup{
  colorlinks=true,
  linkcolor=linkblue,
  citecolor=citegreen,
  urlcolor=linkblue,
  pdfauthor={Anonymous Authors},
  pdftitle={A Robustified Greedy Algorithm for Online Transportation with Improved Competitive Guarantees}
}

\newtheorem{theorem}{Theorem}
\newtheorem{lemma}[theorem]{Lemma}
\newtheorem{corollary}[theorem]{Corollary}

\newcommand{\OPT}{\mathrm{OPT}}

\newcommand{\ALG}{\mathrm{ALG}}

\newcommand{\diam}{\operatorname{DIAM}}
\newcommand{\tsp}{\operatorname{TSP}}
\newcommand{\opt}{\operatorname{opt}}

\title{A Robustified Greedy Algorithm for Online Transportation\\
with Improved Competitive Guarantees}
\author{
Ritesh Seth\\
IIIT Delhi
\and
Syamantak Das\\
IIIT Delhi
\and
Sharath Raghvendra\\
North Carolina State University
}
\date{}

\begin{document}
\maketitle

\begin{abstract}
We study the \emph{online transportation problem}, in which $n$ requests arriving sequentially in a metric space must be irrevocably assigned to $k$ capacitated facilities. Beyond classical logistics applications, this problem models resource-allocation tasks arising in machine learning, including online facility assignments, recommender systems, and mixture-of-experts routing.

We introduce \emph{Robustified Greedy} (RG), a deterministic generalization of the Robust Matching algorithm that achieves a competitive ratio of $6.6604k-2.89$, improving upon the state-of-the-art bounds of $8k-7$ (Arndt et al., SOSA 2026) and $8k-5$ (Harada and Itoh, ICALP 2025). RG also retains the metric-sensitive guarantee established for Robust Matching (RM) (Nayyar and Raghvendra, FOCS 2017), achieving a competitive ratio of $O(k^{1-1/d}\log^2 n)$ in $d$-dimensional Euclidean spaces for fixed $d>1$. No comparable metric-sensitive guarantee is known for the transportation algorithms of Arndt et al.\ or Harada and Itoh.

Beyond these competitive guarantees, RG provides a simple explanation for its decisions. It favors the natural nearest-neighbor assignment and, for suitable parameters, departs from this choice only when it identifies a reassignment that reduces the cost of its maintained auxiliary matching, thereby correcting accumulated assignment costs. We also prove that nearest-neighbor assignments account for a guaranteed fraction of RG's total cost, approaching one-half for appropriate parameters, even under adversarial arrivals. Experiments on real-world datasets corroborate the theory: RG achieves lower cost-to-\textsc{Opt} ratios than the competing algorithms while retaining a substantial nearest-neighbor component in its cost.
\end{abstract}

\section{Introduction}

In the \emph{online metric transportation problem}, we are given a set $S$ of $k$ server locations in a metric space. Each $s\in S$ has a positive integral capacity $b_s$, with $\sum_{s\in S}b_s=n$. A sequence of $n$ requests arrives online, each of which must be irrevocably assigned to a server with available capacity, consuming one unit of capacity. The objective is to minimize the total assignment distance. When every server has unit capacity, this is the \emph{online metric matching problem}~\citep{kalyanasundaram1993online}. These problems model sequential allocation in logistics and resource assignment~\citep{harada2025nearly,arndt2026competitive}, as well as machine-learning settings in which requests and resources are embeddings, costs reflect embedding-space dissimilarity, and resources have limited capacities. Examples include capacity-constrained recommendation~\citep{mladenov2020social,li2024twosided}, balanced token-to-expert routing in sparse mixture-of-experts models~\citep{lewis2021base,zhou2022expert}, and $1$-Wasserstein-based imitation learning in robotics~\citep{dadashi2021pwil}.

Assignments must be made before future requests are known but are judged against an optimal solution that knows the entire sequence. The standard measure is the competitive ratio~\citep{sleator1985amortized}: an algorithm is $c$-competitive if, for every request sequence and arrival order, its cost is at most $c$ times the optimum. For transportation, the goal is a guarantee depending on the number of server locations $k$, independently of the total capacity $n$.

The natural Greedy algorithm assigns each request to a nearest available server. This minimizes the current cost and makes the decision easy to explain; in emergency response, for example, proximity to an available responder may be an immediate priority. Greedy also performs well empirically: \citet{tong2016online} found that it outperformed all other evaluated algorithms on almost all tests, including algorithms with substantially better worst-case guarantees. Under independent uniform sampling of servers and requests in $[0,1]^d$, Greedy is $O(1)$-competitive for every fixed $d\neq2$ and $\Theta(\sqrt{\log n})$-competitive for $d=2$~\citep{balkanski2023power,yang2026optimal}.

Nevertheless, Greedy has a worst-case competitive ratio of $2^n-1$. To see the lower bound without relying on tie-breaking, fix an arbitrarily small $\varepsilon>0$ and consider unit-capacity servers on the line at
$$
-(1+\varepsilon),\ 1,\ 3,\ 7,\ \ldots,\ 2^{n-1}-1,
$$
with requests arriving at
$$
0,\ 1,\ 3,\ 7,\ \ldots,\ 2^{n-1}-1.
$$
For request $0$, server $1$ is uniquely nearest. Subsequently, whenever a request arrives at an occupied server, the next server to its right is closer than the remaining server at $-1-\varepsilon$. Greedy therefore assigns request $0$ to server $1$, request $1$ to server $3$, and so on, leaving the final request to be assigned to server $-1-\varepsilon$. Its costs are $
1,\ 2,\ 4,\ \ldots,\ 2^{n-2},\ 2^{n-1}+\varepsilon$,
adding up to $2^n-1+\varepsilon$. The optimum instead assigns the first request to the server at $-(1+\varepsilon)$ and every other request to its own location, at total cost $1+\varepsilon$. Hence the competitive ratio is
$
\left(\frac{2^n-1+\varepsilon}{1+\varepsilon}\right)
$
which approaches $2^n-1$ as $\varepsilon\to0$. Each greedy decision is locally optimal but consumes capacity that the next request could have used at zero cost, causing a cascade of successively longer assignments. Thus, a rule that performs well on many inputs can be exploited by an adversarial sequence.

A retrospective alternative to Greedy maintains an optimal offline matching of the requests seen so far and uses it to guide the next irrevocable assignment. For online metric matching, the Permutation algorithm~\citep{kalyanasundaram1993online} follows this approach and achieves the optimal deterministic competitive ratio of $2n-1$. The RM algorithm~\citep{raghvendra:APPROX-RANDOM.2016:RM} instead combines immediate assignment costs with retrospective information. Using a parameter $t>1$, it balances the cost of serving the current request against the quality of a maintained auxiliary matching. Its analysis also adapts to the underlying metric, yielding competitive ratios of $O(\sqrt{n}\log^2 n)$ in two-dimensional Euclidean space and $O(n^{1-1/d}\log^2 n)$ in metrics of doubling dimension $d>1$~\citep{Nayyar2017Mu}.

These guarantees for online matching do not directly provide the desired guarantees for online transportation, where the number of server locations $k$ can be much smaller than the total capacity $n$. Representing each unit of capacity as a separate server allows matching algorithms to be applied, but their general-metric guarantees may still depend on $n$.~\cite{KalyanasundaramPruhs1998}  conjectured that online transportation
admits a deterministic $(2k-1)$-competitive algorithm, matching the known lower bound.  Obtaining a guarantee linear in $k$, independently of the total capacity, remained open for more than two decades.
Recent work has made substantial progress toward this conjecture. \citet{harada2025nearly} gave a deterministic $8k-5$-competitive algorithm for online transportation and \citet{arndt2026competitive} simplified this algorithm and improved the analysis to $8k-7$. 

The difficulty of extending guarantees on matching algorithms to the transportation setting is particularly clear for the Permutation algorithm. Despite being worst-case optimal for online metric matching, its competitive ratio can be as large as $2n-1$ for the online transportation problem on the line with only four distinct server locations~\citep{nicoletti2026permutation}. For RM, \citet{arndt2026competitive} observe an $O(k\log^2 n)$ bound and identify an $O(k)$ guarantee as an open analysis question. RM also retains metric-sensitive guarantees, whereas comparable guarantees are not known for the transportation algorithms of Harada and Itoh or Arndt et al.

Although these algorithms provide strong competitive guarantees, none of the online matching or transportation algorithms are known to guarantee that a fixed positive fraction of their total online cost comes from nearest-neighbor greedy assignments. 
This motivates the following question:
\begin{quote}
\emph{Can we design an online metric transportation algorithm
that favors greedy assignments while providing strong
worst-case guarantees?}
\end{quote}

\noindent
\textbf{Our Contribution.}
We introduce \emph{Robustified Greedy} (RG), a parameterized generalization of RM. As in RM, the algorithm maintains an irrevocable online matching and an auxiliary offline matching. RG uses a second parameter $\alpha \geq 1$. Setting $\alpha=t$ recovers RM, whereas $\alpha<t$ favors assignments to nearest available servers.

\begin{theorem}\label{thm:main}
The competitive ratio of RG is upper bounded by 
$6.6604k - 2.89$.
\end{theorem}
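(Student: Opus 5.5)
We will analyze RG the way RM is analyzed by \citet{raghvendra:APPROX-RANDOM.2016:RM} and \citet{Nayyar2017Mu}, but with the second parameter $\alpha$ built into the potential. Recall the two matchings RG maintains: the irrevocable online matching $M$ and the auxiliary offline matching $M^*$. The offline matching is a $t$-approximate (dual-feasible) matching of the requests seen so far. When request $r$ arrives, RG compares two costs. The first is the $\alpha$-scaled cost of sending $r$ directly to a nearest free server. The second is the $t$-net-cost of the cheapest augmenting path in $M^*$ ending at a free server. It chooses the cheaper option.

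\textbf{Step 1: charge online cost to auxiliary cost.} The first step is a per-request charging inequality. In both branches, the online edge cost should be at most a constant $c_1(\alpha,t)$ times the increase in the dual potential $\Phi$ of $M^*$, with $\Phi$ given by the sum of $y$-values. In the greedy branch the bound comes from $\alpha\, d(r,s)\le$ (net-cost of the best path). In the augmenting branch it comes from the triangle inequality along the path, as in the RM analysis, where the online cost is at most $(t+1)/(t-1)$ times the net-cost.

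\textbf{Step 2: bound the auxiliary cost by $O(k)\cdot\OPT$.} This is where the transportation structure enters. We will group all unit copies of a server location $s$ into one vertex. The augmenting paths in $M^*$ can then be shortcut so that each path visits each of the $k$ locations at most once. Using the $t$-feasibility conditions, we will show that $w(M^*)$ increases by at most about $(t+1)$ times what a ``path-per-location'' accounting allows. Following the level/trace argument of \citet{arndt2026competitive}, each OPT edge is charged at most once per location it can be ``rerouted'' through, which gives $w(M^*)\le (c_2 k - c_3)\,\OPT$.

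\textbf{Step 3: optimize the constants.} Combining the two steps gives a total of the form $f(\alpha,t)\,k-g(\alpha,t)$ times $\OPT$. We will then minimize $f$ numerically over $1\le\alpha\le t$. The interior optimum with $\alpha<t$ should produce $6.6604$ and $2.89$, which improves on the RM choice $\alpha=t$ because the greedy slack lets the charging in Step 1 use a smaller constant.

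The main obstacle is Step 2. It requires proving that the auxiliary matching's cost is linear in $k$ rather than in $n$, i.e., resolving the $O(k)$ question for RM posed by \citet{arndt2026competitive}. We will first try an induction over augmenting paths with potentials indexed by server locations, using the observation that a minimum net-cost path never needs to revisit a location. That observation bounds the number of distinct ``detour'' terms by $k$.
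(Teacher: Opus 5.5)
Your proposal has a genuine gap. Step 2, which carries all the dependence on $k$, is only a plan, and it targets the wrong quantity. The auxiliary matching needs no bound linear in $k$. Summing the matching-edge lower bounds, using $y(s)=0$ at unsaturated servers and $y(s)\le 0$ everywhere, and then applying the pairwise upper bounds along the optimal matching already gives $w(M^*)\le t\,\OPT$ at all times. What must be shown to be $O(k)\,\OPT$ is the total net-cost of the selected paths.

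The missing idea is to cut the arrival sequence into $k$ \emph{saturation epochs}, delimited by the times at which some server uses its last unit of capacity. Within an epoch starting at time $a$, every server free at time $a$ is still free at each arrival. So decomposing $M_a^*\oplus B$ (with $B$ the optimal matching restricted to arrived requests) into edge-disjoint alternating paths gives each new request a path to a currently free server. Hence the nearest-free-server distances satisfy $\sum_{a<i\le b}d_i\le\OPT+w(M_a^*)$. Because the direct edge is always a candidate, the epoch's net-cost is at most $\alpha\bigl(\OPT+w(M_a^*)\bigr)$. The factor $k$ is simply the number of epochs. Shortcutting paths through locations, or importing the level argument of a different algorithm, is neither needed nor justified for RG.

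Step 1 also fails as stated. For a non-direct path with $I=w(P\setminus M^*)$ and $D=w(P\cap M^*)$, the net-cost $tI-D$ can be nearly zero while $w(P)=I+D$ is large. So no per-request bound $w(P)\le c\,\phi_{t,\alpha}(P)$ holds. You need the identity $2(tI-D)=(t-1)(I+D)+(t+1)(I-D)$, which charges online cost to net-cost plus a telescoping change in $w(M^*)$. Summed over arrivals, it leaves an extra term $2(t-\alpha)G$, where $G$ is the total greedy weight. The discount lowers the net-cost budget but does not by itself give a smaller charging constant, so $G$ must be paid for separately.

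The paper does this with a persistence property specific to RG. A positive-weight greedy edge $(r,s)$ has backward slack $(\alpha-1)d(r,s)>0$. So while $s$ is free, the edge is never removed, neither endpoint's dual is updated, and its dual sum stays exactly $\alpha d(r,s)$. At the end of each epoch this gives $w(M_b^*)+(\alpha-1)G\le t\,\OPT$.

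Plug both epoch bounds into the accounting identity and choose $\alpha=\sqrt t$, so that $(t-\alpha)/(\alpha-1)=\alpha$. Then the coefficients of $w(M_a^*)$ and $w(M_b^*)$ coincide and telescope over the epochs, yielding $\bigl(\frac{2\sqrt t(t+1)}{t-1}k-\frac{t+1+2\sqrt t}{t-1}\bigr)\OPT$, which is minimized at $t=2+\sqrt5$. Without these two epoch-level bounds, there is no explicit $f(\alpha,t)$ for your Step 3 to optimize.
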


This guarantee is independent of the total capacity $n$ and improves upon the previous best-known bound of $8k-7$ by~\cite{arndt2026competitive}. We first provide a somewhat simpler analysis that yields a competitive ratio of $(20/3)k - 3$, while the tighter analysis appears in Section~\ref{appdx:parametric}.

Using our analysis framework, we could also prove a competitive ratio of $11.66k$ for the RM algorithm ($\alpha = t$ regime for RG), thereby resolving a question raised in~\cite{arndt2026competitive} (see Section~\ref{appdx:rm-analysis}). However, we give a much more nuanced analysis for RG, utilizing the greedy choices that the algorithm explicitly makes. Favoring such greedy assignments creates persistent structure in the auxiliary matching - every positive-cost greedy edge remains unchanged until its server becomes saturated. We track this structure through \emph{server saturation epochs} to obtain the improved competitive bound of $6.6604k$ for $t = 2 + \sqrt{5}, \alpha = \sqrt{t}$.   

We also recover the metric-sensitive guarantees of RM for the RG algorithm.
\begin{theorem}[Metric-sensitive guarantee]
\label{thm:metric-sensitive}
For every fixed $t>1$ and $1\le\alpha\le t$, RG has competitive ratio
$
O\!\left(k^{1-1/d}\log^2 n\right)$
in $d$-dimensional Euclidean spaces, for every fixed $d>1$. 
%
\end{theorem}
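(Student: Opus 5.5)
We follow the Nayyar--Raghvendra framework for RM and check that each step survives the extra parameter $\alpha$ and the passage from $n$ unit servers to $k$ capacitated locations. We model the instance as matching by splitting each location $s$ into $b_s$ co-located unit servers. RG maintains an auxiliary matching $M^*$ satisfying $t$-feasibility: dual weights $y$ with $y(a)+y(b)\le t\,d(a,b)$ for every pair, and equality on edges of $M^*$. Each request is processed by finding a minimum $t$-net-cost augmenting path $P$. The online edge is then either the last edge of $P$ or, when RG's $\alpha$-rule selects it, the nearest-neighbor edge. Dual variables change only on the search tree.

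\textbf{Step 1: online cost against the auxiliary matching.} We would show that the total online cost is $O(1)\cdot w(M^*)$, with the constant depending only on $t$ and $\alpha$. For RM this is proved by charging each online edge to the net-cost of its augmenting path. For RG we must also handle a greedy step taken when the nearest server $s'$ is not the path's endpoint. In that case the rule applies only if $d(r,s')\le \alpha\cdot(\text{net-cost of }P)$ up to the appropriate scaling. Since $\alpha\le t$, the greedy edge is charged at most $\alpha$ times the net-cost already charged in the RM analysis. Hence $\ALG=O_{t,\alpha}(w(M^*))$ holds for all $1\le\alpha\le t$.

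\textbf{Step 2: relating $w(M^*)$ to the optimum.} Here we reuse the RM decomposition: $w(M^*)\le O(\OPT)+\sum_{P}\text{(net-cost)}$. The accumulated net-costs are bounded by a charging argument in which every augmenting path is assigned to a cell of a hierarchical partition (a randomly shifted quadtree in $\mathbb{R}^d$). The number of paths charged to a cell at level $i$ is controlled by the number of free servers in that cell. This is the only place where $n$ enters. Because co-located copies of one location share a point, every cell contains copies of at most $k$ distinct locations. The RM bound $O(n^{1-1/d}\log^2 n)$ comes from summing $\min(\#\text{servers in cell},\cdot)$ over cells, which is bounded by a packing argument on $n$ points in $d$ dimensions. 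We would redo this sum with $k$ distinct points in place of $n$. Co-located copies contribute identical distance scales, so the bound on geometric growth per level depends only on the number of distinct locations, namely $k^{1-1/d}$. The two $\log n$ factors come from the levels and from the dual-variable growth, and they remain.

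\textbf{Main obstacle.} The hardest step is the second one. The RM packing argument implicitly assumes distinct points, and with $b_s$ copies at one location the ``number of free servers in a cell'' can be as large as $n$. We would first try to charge by distinct locations. A cell in which all copies of a location remain free contributes only once per level, because a path entering that cell can terminate at any copy at the same cost. Therefore the minimal-net-cost property should let us collapse copies. If that fails, the fallback is to treat the partial capacity at each location as a single server with multiplicity and to bound cell-wise contributions via the $k$-point version of the standard $\sum_i 2^{i}\cdot(\text{occupied cells at level }i)$ estimate.
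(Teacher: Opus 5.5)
The overall shape of your plan matches the paper: bound the online cost by the total net cost, then bound the total net cost geometrically. However, neither link is established, and the first is stated falsely.

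\textbf{Step 1.} You claim $\ALG=O_{t,\alpha}(w(M^*))$. Since $w(M^*)\le t\,\OPT$ after every arrival (Lemma~\ref{lem:dualbound}), this would make RG $O(1)$-competitive, contradicting the $2k-1$ deterministic lower bound. The correct intermediate quantity is the total net cost $\Phi=\sum_i\phi_i$. Applying Lemma~\ref{lem:accounting} to the whole run gives $(t-1)w(M_n)\le 2\Phi+2(t-\alpha)G$. One closes this with $\alpha G\le\Phi$, which needs every selected non-direct path to have nonnegative net cost ($\phi_i=y(r_i)\ge0$ in the primal--dual implementation). The result is $w(M_n)\le\frac{2t}{\alpha(t-1)}\Phi$.

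Your description of RG is also not the algorithm. There is no separate $\alpha$-rule that overrides a minimum $t$-net-cost path with the nearest-neighbor edge. Instead, $\alpha$ is built into the net cost of direct paths, and the online edge is always $(r,s)$ with $s$ the terminal server of the selected path. Because greedy edges are inserted into $M^*$, RG's auxiliary matching diverges from RM's. Charging against ``the net-cost already charged in the RM analysis'' is therefore not meaningful.

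\textbf{Step 2.} This is where the theorem lives, and it contains no argument. Nothing about RG supports your claim that the number of paths charged to a level-$i$ quadtree cell is controlled by the number of free servers in it, and this is not how Nayyar--Raghvendra argue. The missing ingredients are the following.
\begin{itemize}
\item[(i)] Free-server separation: $d(r,s)\ge\phi(r)/\alpha$ for every free $s$ when $r$ arrives. The denominator is $\alpha$, not $t$, because of the free--free dual constraint. Hence a request in net-cost group $j$ has no free server within $r^j_{\mathrm{IB}}=2^{j-1}\OPT/(\alpha n)$.
\item[(ii)] For the requests of a cluster, edge-disjoint alternating paths in $M_{\mathrm{OPT}}\triangle M^*$ to free servers, with $M^*$ taken just before the cluster center arrives. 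They are edge-disjoint, not vertex-disjoint, because of capacities. Each path has length at least $\delta_g/2$ and at least a $1/(t+1)$ fraction of its weight in $M_{\mathrm{OPT}}$. The latter uses the ordinary $t$-constraint on forward edges, which is valid since $\alpha\le t$.
\item[(iii)] A weighted Vitali selection on outer balls. It turns (i)--(ii) into $\sum\phi_i=O(\mu_X(S))\OPT$ for each pair of dyadic net-cost and radius groups. The two $\log n$ factors are the numbers of such groups, not ``dual-variable growth''.
\end{itemize}

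Your ``main obstacle'' (co-located copies) is resolved by a concrete step you lack. First discard requests with $\phi_i\le\OPT/n$ or $\phi_i\le16\alpha\,d(r_i,\opt(r_i))$, whose total net cost is $O(\OPT)$. After that, cluster centers are at least $r^j_{\mathrm{IB}}/2$ apart, yet each lies within $r^j_{\mathrm{IB}}/8$ of its optimal server. So distinct centers have optimal servers at distinct locations more than $r^j_{\mathrm{IB}}/4$ apart. The TSP bound is thus applied to a subset of the $k$ locations, giving $\mu_X(S)=O(k^{1-1/d})$.

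Your proposal to collapse copies because a path can end at any copy at the same cost, with a fallback if that fails, remains a hope rather than a proof.
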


The parameter $\alpha$ biases the RG algorithm towards making more nearest neighbor choices. Indeed, we show that with suitable choice of parameters, a significant fraction of the online cost of RG comes from nearest neighbor assignments, while it still retains a guarantee of $O(k)$ for the competitive ratio. However, greedy assignments increase the cost of the maintained auxiliary offline matching and the accumulated cost can potentially start affecting the online cost.  
The algorithm makes a non-greedy assignment \emph{only when it finds evidence that this accumulated cost can be reduced}.

\begin{theorem}
\label{thm:combined}
Suppose $t \geq 3 + 2\sqrt{2}$ and $1\leq \alpha \leq \sqrt{t}$ are fixed constants. Then  
(a) RG algorithm is $O(k)$-competitive, and,
(b) The total fraction of online cost of RG from greedy assignments is at least 
$\left(
\frac12-\frac{\alpha+1}{2(t-\alpha)}
\right).
$
\end{theorem}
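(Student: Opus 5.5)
We have not yet seen the algorithm's definition, so I will work from the structure the introduction describes. RG, like RM, maintains an irrevocable online matching $M$ and an auxiliary offline matching $M^*$. On a request $r$ it computes a minimum $t$-net-cost augmenting path $P$ from $r$ to a free server, that is, one minimizing $t\,d(P\cap \overline{M^*}) - d(P\cap M^*)$. It compares this against the nearest available server $s_g$, whose greedy edge is weighted by $\alpha$. If $\alpha\, d(r,s_g)$ is no larger than the net cost of $P$, RG assigns greedily and adds the edge $(r,s_g)$ to $M^*$. Otherwise it augments $M^*$ along $P$ and commits $r$ to the free endpoint of $P$.

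I will prove parts (a) and (b) from one pair of inequalities, which I call the dual potential accounting. The plan is to follow the RM analysis: keep $t$-feasible dual weights on $M^*$, so that for every request/server pair $y(r)+y(s)\le t\,d(r,s)$, with equality on the edges of $M^*$.

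The first step is to show that a non-greedy step of net cost $\Phi$ raises the dual objective by $\Phi$. The online cost it incurs is at most $d(P)$, and for a minimum $t$-net-cost path $d(P)\le \frac{t+1}{t-1}\Phi$ plus terms that telescope against $d(M^*)$. The second step handles a greedy step. It increases $d(M^*)$ by $d(r,s_g)$, and the dual variable of $r$ can be raised by $\alpha\, d(r,s_g)$ without losing feasibility. I expect this to follow from the choice rule: $\alpha\, d(r,s_g)$ is no larger than any alternative net cost. The second step is where $\alpha\le\sqrt t$ is needed.

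Adding up over all steps yields two inequalities. The first is
\[
(t-\alpha)\,\mathrm{NonGreedy} \le (\alpha+1)\,\mathrm{Greedy} + c\cdot\text{(dual value)}.
\]
The second says that the dual value is at most $t\cdot\OPT$ times the $O(k)$ factor given by the transportation structure. For that $O(k)$ factor I would reuse the paper's $(20/3)k-3$ analysis framework, which I may assume since it is stated as a result for RG over the parameter range. I will check that the range $t\ge 3+2\sqrt2$, $\alpha\le\sqrt t$ falls within it. Together these give part (a).

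For part (b), I will absorb the dual-value term into a cost bound on the non-greedy steps. The intended inequality is
\[
\mathrm{NonGreedy}\le \mathrm{Greedy}+\frac{\alpha+1}{t-\alpha}\,\mathrm{Total},
\]
which comes from charging each non-greedy path against the greedy edges it removes from $M^*$. The mechanism is as follows. A greedy edge, once placed in $M^*$, persists until its server saturates (the saturation-epoch structure). Each non-greedy path must therefore pay $t$ times its new edges while recovering at most the greedy and older edges it cancels, and the ratio of $\alpha+1$ to $t-\alpha$ arises from this trade. Writing $\mathrm{Total}=\mathrm{Greedy}+\mathrm{NonGreedy}$ and rearranging gives
\[
\mathrm{Greedy}\ge\Bigl(\frac12-\frac{\alpha+1}{2(t-\alpha)}\Bigr)\mathrm{Total}.
\]
The condition $t\ge 3+2\sqrt2$ ensures that this fraction is positive when $\alpha=\sqrt t$: it requires $t-\sqrt t> \sqrt t+1$, which holds exactly when $\sqrt t>1+\sqrt2$.

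The main obstacle is the charging inequality for non-greedy steps. To prove it, I would first try an amortized potential equal to $d(M^*)$ minus the online cost of the non-greedy steps. I would then show that each non-greedy step decreases $(t-\alpha)\,d(M^*)$ by at least its online cost minus $(\alpha+1)$ times the current request's greedy distance. This uses the fact that the path was chosen over the greedy option, so its net cost is below $\alpha\, d(r,s_g)$.
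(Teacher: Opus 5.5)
\paragraph{Part (a).} There is a genuine gap here. The claim that a non-greedy step raises the dual objective by its net cost is false. From the update rule, a search raises $\sum_r y(r)+\sum_s b(s)y(s)$ by \emph{at most} $\delta=\phi_i$, and generally by less. The reason is that every server reached with $\ell(s)<\delta$ is saturated, and each of its $b(s)$ matched requests gains at most $\delta-\ell(s)$. It cannot be otherwise. This objective never exceeds $t\,\OPT$ (Lemma~\ref{lem:dualbound}, with no factor of $k$). Your accounting would therefore give $\Phi\le t\,\OPT$, and by Lemma~\ref{lem:is-online-from-net} an $O(1)$ ratio for a deterministic algorithm, contradicting the $2k-1$ lower bound.

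Nor can you import the $(20/3)k-3$ analysis. It is carried out only at $(t,\alpha)=(4,2)$, and $4<3+2\sqrt2$, so that point is outside the theorem's range.

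The missing idea is the per-epoch bound $\sum_{a<i\le b}d_i\le\OPT+w(M_a^*)$ of Lemma~\ref{lem:distance}. It is proved by decomposing $M_a^*\oplus B$ and using that servers free at the start of a saturation epoch remain free throughout it. This is where the factor $k$ comes from. It gives $\Phi_h\le\alpha(\OPT+W_{h-1})$. Inserting this, together with the direct-cost bound $W_h+(\alpha-1)G_h\le t\,\OPT$, into Lemma~\ref{lem:accounting} leaves $AW_{h-1}-BW_h$ with $A-B=2(\alpha^2-t)/(\alpha-1)$. That telescoping is where $\alpha\le\sqrt t$ is used; dual feasibility holds for every $1\le\alpha\le t$, so it is not the place. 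At $\alpha=1$ the direct-cost bound is vacuous, and one instead uses $G_h\le\sum d_i\le\OPT+W_{h-1}$ together with $W_{h-1}\le t\,\OPT$.

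\paragraph{Part (b).} Your target $N\le G+\frac{\alpha+1}{t-\alpha}(G+N)$, with $N$ the online cost of non-greedy steps, is correct. It is equivalent to $(t-1-2\alpha)N\le(t+1)G$.

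However, your per-step inequality $(t-\alpha)\Delta_i\ge c_i-(\alpha+1)d_i$ does not get you there. Here $\Delta_i$ is the decrease of $w(M^*)$ and $c_i$ is the online cost. Summing gives $N\le(t-\alpha)G+(\alpha+1)\sum d_i$, which is vacuous because $d_i\le c_i$. Your coefficient $\frac{t+1}{t-1}$ on the net cost also has to be $\frac{2}{t-1}$. With $\frac{t+1}{t-1}$ and $\phi_i\le\alpha c_i$, the resulting bound on $c_i$ is vacuous for every $\alpha\ge1$.

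The ingredient you need is the exact identity $2(tI-D)=(t-1)(I+D)+(t+1)(I-D)$ for a non-direct path with inserted weight $I$ and deleted weight $D$. Combined with $tI-D\le\alpha d_i\le\alpha c_i$ and $c_i\le I+D$, it yields $(t-1-2\alpha)c_i\le(t+1)\Delta_i$ (Lemma~\ref{lem:reducing}). Since $w(M^*)$ grows only through greedy edges, by $G$ in total, and stays nonnegative, $\sum\Delta_i\le G$. Summing finishes the proof. No charging to particular greedy edges and no persistence is involved.

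Finally, at $t=3+2\sqrt2$ and $\alpha=\sqrt t$ the fraction is $0$, not positive, so there the claim is just $G\ge0$.
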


  Note that for fixed $\alpha=o(t)$, the guaranteed fraction in Theorem~\ref{thm:combined}(b) approaches 1/2 as $t$ increases. Complete proofs of Theorems~\ref{thm:metric-sensitive} and~\ref{thm:combined} appear in Sections~\ref{appdx:metric-sensitive} and~\ref{appdx:parametric}, respectively.  

\noindent
\textbf{Experiments.} Experiments on real-world datasets show that RG retains Greedy's strong performance on natural arrival orders while being more robust to correlated arrivals, improving over Greedy by up to $15.02\%$. RG also consistently outperforms the algorithm due to ~\cite{arndt2026competitive} while preserving a high fraction of nearest-neighbor assignments.

\paragraph{Preliminaries.}
Let $(X,d)$ be a metric space, where $d(u,v)$ denotes the distance between any two points $u,v\in X$. For a set of edges $E$, define its \emph{cost} by $
w(E) := \sum_{(u,v)\in E} d(u,v)$.

Recall that there are $k$ servers in $S$, each with a positive integral capacity, and that the total capacity of the servers is $n$. Let $R$ denote a set of $n$ requests. A \emph{matching} $M$ assigns requests in $R$ to servers in $S$ such that each request is incident to at most one edge of $M$, and the number of edges in $M$ incident to any server does not exceed its capacity. A request is \emph{free} with respect to $M$ if it is unmatched, while a server is \emph{free} if it has unused capacity. A server whose capacity is fully used is said to be \emph{saturated}. The \emph{size} of $M$, denoted by $|M|$, is the number of requests matched by $M$. Let $\OPT$ denote the cost of a minimum-cost matching of all $n$ requests to the servers, subject to their capacities.

An \emph{alternating path} with respect to a matching $M$ is a path whose edges alternate between edges in $M$ and edges not in $M$. An \emph{augmenting path} is a simple alternating path that starts at a free request and ends at a free server. Given an augmenting path $P$, augmenting $M$ along $P$ removes the edges $M\cap P$ from $M$ and adds the edges $P\setminus M$. We denote the resulting matching by $
M \leftarrow M\oplus P$.
Augmenting along an augmenting path increases the size of the matching by exactly one.

Fix parameters $t>1$ and $1\le\alpha\le t$. An augmenting path consisting of a single edge between a free request and a free server is called a \emph{direct augmenting path}. All other augmenting paths contain at least three edges and are called \emph{non-direct augmenting paths}. 

The $(t,\alpha)$-net-cost of an augmenting path $P$ with respect to $M^*$ is
\begin{equation}\label{eq:netcost}
\phi_{t,\alpha}^{M^*}(P)=
\begin{cases}
\alpha d(r,s), & P=\{(r,s)\}\text{ is direct},\\[2pt]
t\,w(P\setminus M^*)-w(P\cap M^*), & P\text{ is non-direct}.
\end{cases}
\end{equation}
For simplicity, we omit the superscript $M^*$ when the matching is clear from the context and write $\phi_{t,\alpha}(P)$. When $\alpha=t$, this definition coincides with the $t$-net-cost used in the Robust Matching algorithm from~\cite{Nayyar2017Mu}. 





\section{Robustified Greedy Algorithm}
\label{sec:robustified-greedy}
Fix parameters $t>1$ and $1\le\alpha\le t$. Our algorithm maintains an online matching $M$ and an offline matching $M^*$, both initially empty. After each arrival, both matchings serve all requests received so far and assign the same number of requests to each server. To process a new request $r$, the algorithm performs the following steps:
\begin{enumerate}
\item Compute an augmenting path $P$ from $r$ to a free server $s$ that minimizes the $(t,\alpha)$-net-cost with respect to $M^*$.
\item Augment the offline matching along $P$, setting
$M^*\leftarrow M^*\oplus P$.
\item Assign $r$ to $s$ in the online matching, setting
$M\leftarrow M\cup\{(r,s)\}$.
\end{enumerate}
These updates ensure that \(M\) and \(M^*\) continue to assign the same number of requests to each server. Augmenting \(M^*\) increases the number assigned to the terminal server \(s\) by one and leaves this number unchanged at every other server. The online assignment makes the same change in \(M\). Thus, the two matchings have identical remaining capacities and the same free servers after every arrival.

For a request \(r\), the \((t,\alpha)\)-net-cost of a direct augmenting path to a free server \(s\) is \(\alpha d(r,s)\). This cost is minimized by a nearest free server. Therefore, whenever the algorithm selects a direct augmenting path, it makes a greedy assignment.

When \(\alpha=t\), the \((t,\alpha)\)-net-cost coincides with the \(t\)-net-cost defined in~\cite{raghvendra:APPROX-RANDOM.2016:RM}, so our algorithm recovers the Robust Matching algorithm. Choosing \(\alpha<t\) discounts direct augmenting paths while leaving the cost of non-direct paths unchanged, thereby favoring nearest-server assignments.

The following lemma relates the online matching cost to the total $(t,\alpha)$-net-cost of the selected paths and the contribution of greedy choices. It is central both to explaining the algorithm’s decisions and to analyzing its competitive ratio.


Consider any consecutive sequence of arrivals. Let
$M_{\mathrm{start}}^*$ and $M_{\mathrm{end}}^*$ be the offline
matchings immediately before and after this sequence. Let $G$ be
the total weight of the selected greedy edges, and let $\Phi$ be
the sum of the $(t,\alpha)$-net-costs of all selected paths.

\begin{restatable}[Cost accounting]{lemma}{costAccountingLemma}
\label{lem:accounting}
The online cost incurred during the sequence is at most
\begin{equation}
\label{eq:online}
\frac{
    2\Phi+2(t-\alpha)G
    +(t+1)\bigl(
        w(M_{\mathrm{start}}^*)-w(M_{\mathrm{end}}^*)
    \bigr)
}{t-1}.
\end{equation}
\end{restatable}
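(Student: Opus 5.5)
The plan is to prove a per-arrival inequality and then sum it over the sequence, so that the change in $w(M^*)$ telescopes. Fix one arrival of request $r$, let $P$ be the selected augmenting path ending at the free server $s$, and let $M^*_{\mathrm{before}}$ and $M^*_{\mathrm{after}}=M^*_{\mathrm{before}}\oplus P$ be the offline matchings around this step. Write $\Delta=w(M^*_{\mathrm{after}})-w(M^*_{\mathrm{before}})$. Let $g$ be the weight of the greedy edge selected at this step, with $g=0$ if $P$ is non-direct. The target inequality is
\[
d(r,s)\;\le\;\frac{2\phi_{t,\alpha}(P)+2(t-\alpha)g-(t+1)\Delta}{t-1}.
\]
Summing it over the arrivals gives \eqref{eq:online}: $\sum\phi_{t,\alpha}(P)=\Phi$, $\sum g=G$, and $\sum\Delta=w(M^*_{\mathrm{end}})-w(M^*_{\mathrm{start}})$.

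\emph{Direct case.} Here $P=\{(r,s)\}$, so $g=d(r,s)$, $\phi_{t,\alpha}(P)=\alpha g$, and augmenting adds exactly this edge to $M^*$, so $\Delta=g$. The right-hand side becomes
\[
\frac{\bigl(2\alpha+2(t-\alpha)-(t+1)\bigr)g}{t-1}=\frac{(t-1)g}{t-1}=g .
\]
So the inequality holds with equality. This is exactly where the term $2(t-\alpha)G$ is needed: it compensates for the discount from $t$ to $\alpha$ on direct paths.

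\emph{Non-direct case.} Set $A=w(P\setminus M^*_{\mathrm{before}})$ and $B=w(P\cap M^*_{\mathrm{before}})$. Augmentation swaps these two edge sets, so $\Delta=A-B$. By definition $\phi_{t,\alpha}(P)=tA-B$, and $g=0$. The right-hand side is then
\[
\frac{2tA-2B-(t+1)(A-B)}{t-1}=\frac{(t-1)(A+B)}{t-1}=w(P).
\]
By the triangle inequality along the path from $r$ to $s$, we have $d(r,s)\le w(P)$, which gives the desired bound.

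There is no real obstacle in this argument. The only points needing care are the bookkeeping that $G$ counts only direct (greedy) edges, and the fact that $M^*$ changes exactly by the symmetric difference with $P$, so that $\Delta=A-B$. Both follow directly from the algorithm's definition.
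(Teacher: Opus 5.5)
Your proof is correct and follows the paper's argument: the paper uses the same identity $2\phi_{t,\alpha}(P)=(t-1)w(P)+(t+1)(I-D)$ for non-direct paths, accounts for direct paths through their contribution $\alpha G$ to $\Phi$ and $G$ to the change in $w(M^*)$, and bounds the online cost by the total selected path weight using the triangle inequality. The only difference is bookkeeping: you check the inequality per arrival (with equality in the direct case) and then sum, whereas the paper sums first and solves for $(t-1)(G+L)$.
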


\begin{proof}
For each selected non-direct path $P$, let
\[
I=w(P\setminus M^*),\qquad D=w(P\cap M^*),
\]
where $M^*$ is the matching immediately before augmentation. Since
$w(P)=I+D$, $\phi_{t,\alpha}(P)=tI-D$, and augmentation changes
$w(M^*)$ by $I-D$,
\[
2\phi_{t,\alpha}(P)
=2(tI-D)
=(t-1)w(P)+(t+1)(I-D)
\]

Let $L$ be the total weight of the selected non-direct paths. The
direct paths contribute $\alpha G$ to $\Phi$ and increase $w(M^*)$
by $G$; hence the total change due to the non-direct paths is
$w(M_{\mathrm{end}}^*)-w(M_{\mathrm{start}}^*)-G$. Summing the
preceding identity gives
\[
2\Phi
=2\alpha G+(t-1)L
 +(t+1)\bigl(w(M_{\mathrm{end}}^*)-w(M_{\mathrm{start}}^*)-G\bigr).
\]
Therefore,
\[
(t-1)(G+L)
=2\Phi+2(t-\alpha)G
 +(t+1)\bigl(w(M_{\mathrm{start}}^*)-w(M_{\mathrm{end}}^*)\bigr).
\]
Each online edge joins the endpoints of its selected path and hence,
by the triangle inequality, costs at most the path weight. The online
cost is therefore at most $G+L$, and the result follows.
\end{proof}
\subsection{Greedy Assignment Fraction}
\label{sec:explainability}

The following structural guarantees are independent of the competitive-ratio
analysis; their proofs are given below.

\begin{restatable}{lemma}{greedyOrReducingLemma}
\label{lem:reducing}
Suppose $1\le\alpha<(t-1)/2$. On each arrival, the algorithm either
selects a greedy edge or selects a non-direct path $P$ satisfying
\[
w(M^*)-w(M^*\oplus P)
\ge
\frac{t-1-2\alpha}{t+1}\,w(P).
\]
\end{restatable}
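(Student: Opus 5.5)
The plan is to compare the selected path against the direct augmenting path to its own terminal server, and then rewrite the resulting inequality in terms of the change in $w(M^*)$.

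\textbf{The direct case.} Fix an arrival $r$ and let $P$ be the augmenting path selected in step 1, ending at the free server $s$. If $P$ is direct, it is a single edge $(r,s)$ with net-cost $\alpha d(r,s)$. Among direct paths this net-cost is minimized by a nearest free server, so $P$ is a greedy edge, as already observed in Section~\ref{sec:robustified-greedy}. (If ties are broken arbitrarily, $P$ is still a nearest-free-server edge among the minimizers, since it minimizes the net-cost.)

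\textbf{The non-direct case.} Suppose $P$ is non-direct, and let
\[
I=w(P\setminus M^*),\qquad D=w(P\cap M^*),\qquad W=w(P)=I+D .
\]
Augmenting along $P$ removes the edges $P\cap M^*$ and adds the edges $P\setminus M^*$, so
\[
w(M^*)-w(M^*\oplus P)=D-I .
\]
The comparison path is the single edge $(r,s)$. It is a legal direct augmenting path: $r$ is a free request and $s$ is free, since $P$ ends there. Its net-cost is $\alpha d(r,s)$, and because $P$ is a walk from $r$ to $s$, the triangle inequality gives $d(r,s)\le W$. Minimality of $P$ in step 1 therefore yields
\[
tI-D=\phi_{t,\alpha}(P)\le \alpha d(r,s)\le \alpha W .
\]

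\textbf{Rewriting the inequality.} Set $x=D-I$, so that $I=(W-x)/2$ and $D=(W+x)/2$. This is the same identity used in the proof of Lemma~\ref{lem:accounting}, namely $2(tI-D)=(t-1)W+(t+1)(I-D)$. Substituting turns the preceding inequality into
\[
(t-1)W-(t+1)x\le 2\alpha W,
\]
that is,
\[
x\ge \frac{t-1-2\alpha}{t+1}\,W .
\]
This is exactly the claimed bound. The hypothesis $\alpha<(t-1)/2$ is not needed for the inequality itself; it only makes the coefficient positive, so the guarantee is a genuine strict reduction of $w(M^*)$ whenever $W>0$.

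I expect no real obstacle here. The only point needing care is choosing the right comparison path: it must be the direct edge to the terminal server $s$ of $P$, not to the nearest free server, so that the triangle inequality bounds its cost by $w(P)$.
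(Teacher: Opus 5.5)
Your proof is correct and follows the paper's argument. It compares the non-direct path's net-cost $tI-D$ with that of a direct edge, bounds the latter by $\alpha w(P)$ via the triangle inequality, and rearranges using $2(tI-D)=(t-1)w(P)+(t+1)(I-D)$. The paper compares against the nearest free server at distance $d$ instead, which works equally well because $d\le d(r,s)\le w(P)$, so your closing caveat that the comparison must use the terminal server $s$ is unnecessary.
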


\begin{proof}
Suppose that $P$ is non-direct, and write
\[
I=w(P\setminus M^*),\qquad D=w(P\cap M^*).
\]
If $d$ is the distance from the new request to a nearest free server,
then the corresponding direct edge has net-cost $\alpha d$. Since $P$
minimizes net-cost and its terminal server is free, the triangle
inequality gives
\[
tI-D\le\alpha d\le\alpha w(P)=\alpha(I+D).
\]
Using the identity from Lemma~\ref{lem:accounting},
\[
(t+1)(D-I)
=(t-1)(I+D)-2(tI-D)
\ge(t-1-2\alpha)(I+D).
\]
Therefore,
\[
D-I\ge
\frac{t-1-2\alpha}{t+1}(I+D).
\]
Augmentation decreases the offline matching cost by $D-I$, and the
coefficient is positive under the stated assumption.
\end{proof}

\begin{restatable}[Greedy-cost bound]{lemma}{greedyCostLemma}
\label{lem:greedyfraction}
Let $G$ be the total cost of selected greedy edges, and let $M$
and $M^*$ be the final online and offline matchings. For
$1\le\alpha<(t-1)/2$,
\[
G\ge
\left(
    \frac12-\frac{\alpha+1}{2(t-\alpha)}
\right)w(M).
\]
\end{restatable}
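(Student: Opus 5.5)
The plan is to compare two quantities over the whole run (from the empty matching to the final one): the total online cost, split as $w(M)\le G+L$, and the total change in the offline matching cost. Here $L$ is the total weight of the selected non-direct paths, and the split is the triangle-inequality step already used in Lemma~\ref{lem:accounting}. The greedy edges are the only source of increase in $w(M^*)$ that is not paid back, and non-direct paths must substantially decrease $w(M^*)$. Since $w(M^*)$ starts at $0$ and ends nonnegative, $L$ is then controlled by $G$.

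\textbf{Step 1 (per-path inequality).} For each selected non-direct path $P$, write $I=w(P\setminus M^*)$ and $D=w(P\cap M^*)$, with $M^*$ the matching just before augmentation. As in the proof of Lemma~\ref{lem:reducing}, minimality of net-cost against the direct edge to a nearest free server gives $tI-D\le \alpha d\le \alpha(I+D)$. Hence $(t-\alpha)I\le(1+\alpha)D$, that is, $I\le cD$ with
\[
c=\frac{\alpha+1}{t-\alpha}.
\]
The hypothesis $\alpha<(t-1)/2$ is exactly $1+\alpha<t-\alpha$, so $c<1$.

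\textbf{Step 2 (convert to a statement about $w(P)$).} From $I\le cD$ we get $I\le \frac{c}{1+c}(I+D)$. Therefore
\[
D-I\ge \frac{1-c}{1+c}\,w(P).
\]
Summing over all non-direct paths gives $\sum(D-I)\ge\frac{1-c}{1+c}L$.

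\textbf{Step 3 (telescoping the offline cost).} Each greedy edge increases $w(M^*)$ by its length, and each non-direct path changes it by $I-D$. The run starts from the empty matching with cost $0$. This gives
\[
0\le w(M^*_{\mathrm{end}})=G-\sum(D-I).
\]
So $\sum(D-I)\le G$. Combining with Step 2 yields $L\le\frac{1+c}{1-c}G$.

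\textbf{Step 4 (conclude).} We obtain
\[
w(M)\le G+L\le G\Bigl(1+\frac{1+c}{1-c}\Bigr)=\frac{2G}{1-c}.
\]
Rearranging gives $G\ge \frac{1-c}{2}w(M)=\bigl(\frac12-\frac{\alpha+1}{2(t-\alpha)}\bigr)w(M)$.

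There is no real obstacle here. The only delicate point is Step 1: it must bound $d$ by $w(P)$, which holds because the terminal server of $P$ is free, so $d\le d(r,s)\le w(P)$ by the triangle inequality. The condition $c<1$ is exactly what makes the division in Steps 2 and 4 legitimate and the bound meaningful.
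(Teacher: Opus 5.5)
Your proof is correct and gives exactly the paper's constant; the route differs mainly in organization.

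\textbf{Paper's route.} The paper works in aggregate. It bounds the total net-cost by $\Phi\le\alpha\,w(M)$: each selected path costs at most $\alpha d_i$, and $d_i$ is at most the length of the online edge. It then substitutes this into the cost-accounting Lemma~\ref{lem:accounting} for the whole run and drops $(t+1)w(M^*)\ge0$.

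\textbf{Your route.} You compare each non-direct path with the direct edge relative to its own weight $w(P)$. This reproduces the per-path decrease of Lemma~\ref{lem:reducing}; indeed $\frac{1-c}{1+c}=\frac{t-1-2\alpha}{t+1}$. You then telescope $w(M^*)$ by hand, using that only greedy edges increase it.

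\textbf{Comparison.} Both arguments rest on the same identity $2\phi_{t,\alpha}(P)=(t-1)w(P)+(t+1)(I-D)$, so the difference is largely presentational. Your version makes explicit that the greedy-fraction bound is a corollary of the ``certificate'' Lemma~\ref{lem:reducing}. It also states directly the slightly stronger intermediate bound $G\ge\frac{1-c}{2}(G+L)$, which measures $G$ against the total selected-path weight rather than against $w(M)\le G+L$. The paper's version is shorter once Lemma~\ref{lem:accounting} is available.
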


\begin{proof}
On each arrival, the direct edge to a nearest free server is a
candidate augmenting path. The selected path therefore has
net-cost at most $\alpha$ times the nearest-free-server distance.
This distance is at most the cost of the online assignment,
since the assignment also uses a free server. Summing over all
arrivals gives
$
\Phi\le\alpha w(M).
$
Applying Lemma~\ref{lem:accounting} to the entire execution,
whose initial offline matching is empty, yields
\[
(t-1)w(M)
\le
2\Phi+2(t-\alpha)G-(t+1)w(M^*).
\]
Substituting $\Phi\le\alpha w(M)$ and rearranging gives
\[
2(t-\alpha)G
\ge
(t-1-2\alpha)w(M)+(t+1)w(M^*).
\]
The lemma follows using $w(M^*)\ge0$. 
\end{proof}

Lemma~\ref{lem:reducing} quantifies the decrease in the offline matching cost
behind every non-greedy choice, while Lemma~\ref{lem:greedyfraction} shows that
the fraction of online cost contributed by greedy edges approaches $1/2$ for
fixed $\alpha$, or more generally $\alpha=o(t)$. This positive fraction relies
on discounting direct paths: when $\alpha=t$ (RM), the same argument gives no
positive lower bound on $G$.

\section{Primal-Dual Implementation}
\label{sec:primal-dual}
In this section, we present a primal-dual implementation of the Robustified-Greedy algorithm that processes each request in \(O(nk\log n)\) time. More importantly, the evolution of the dual weights maintained by the algorithm plays a critical role in our analysis of its competitive ratio (see Section~\ref{sec:competitive-ratio}).


\paragraph{Feasibility.} Recall that each server $s\in S$ has an integer capacity $b_s$.
The algorithm maintains a dual weight $y(v)$ for every request
and server $v$.

A matching $M^*$, together with the dual weights $y(\cdot)$, is
\emph{$(t,\alpha)$-feasible} if every request-server pair $(r,s)$
satisfies
\begin{equation}\label{eq:upper}
y(r)+y(s)\le
\begin{cases}
\alpha d(r,s),&\text{if both $r$ and $s$ are free},\\
t\,d(r,s),&\text{otherwise},
\end{cases}
\end{equation}
and every matching edge satisfies
\begin{equation}\label{eq:lower}
y(r)+y(s)\ge d(r,s)
\qquad\text{for every }(r,s)\in M^*.
\end{equation}

\paragraph{Slack.}
For each non-matching edge $(r,s)$, define its slack as
\begin{equation}\label{eq:forward}
\sigma(r,s)=
\begin{cases}
\alpha d(r,s)-y(r)-y(s),
    &\text{if both $r$ and $s$ are free},\\
t\,d(r,s)-y(r)-y(s),
    &\text{otherwise}.
\end{cases}
\end{equation}
For each matching edge $(r,s)$, define its slack as
\begin{equation}\label{eq:backward}
\sigma(s,r)=y(r)+y(s)-d(r,s).
\end{equation}
Feasibility guarantees that all these slacks are nonnegative.

\paragraph{Residual graph.}
Given a $(t,\alpha)$-feasible matching $M^*$ and its dual weights,
we construct the residual graph $G_{M^*}$ on the requests and
servers, together with an auxiliary sink vertex $z$. Each
non-matching edge $(r,s)$ is directed from $r$ to $s$ and assigned
weight $\sigma(r,s)$. Each matching edge $(r,s)$ is directed from
$s$ to $r$ and assigned weight $\sigma(s,r)$. Finally, we add a
directed edge of weight zero from every free server to $z$.




\subsection{Processing a request via primal-dual algorithm}
Initially, before any request has arrived the matching is empty and the dual weights for every server is set to zero. As requests arrive, our algorithm will update the dual weights to find the desired augmenting path. We describe this procedure for any request $r$. 

Before each arrival, the algorithm maintains a set of dual weights $y(\cdot)$ on all servers and requests and a
$(t,\alpha)$-feasible matching $M^*$
When a new request $r$ arrives, we initialize $y(r)=0$,
construct the residual graph $G_{M^*}$, and perform the
following steps.

\begin{enumerate}
\item \textbf{Find an augmenting path.}
Execute Dijkstra's shortest-path algorithm from $r$ in
$G_{M^*}$. Let $\ell(v)$ denote the shortest-path distance
from $r$ to $v$, and let $\delta=\ell(z)$.
Choose a shortest $r$-to-$z$ path and truncate it at its first free server $s$ to obtain $P$.

\item \textbf{Update the dual weights.}
For each request or server $v$ with $\ell(v)<\delta$, update
\begin{equation}\label{eq:update}
\begin{aligned}
y(v)&\leftarrow y(v)+\delta-\ell(v),
    &&\text{if }v\in R,\\
y(v)&\leftarrow y(v)-\delta+\ell(v),
    &&\text{if }v\in S.
\end{aligned}
\end{equation}
All other dual weights remain unchanged.

\item \textbf{Augment the matchings.}
Assign $r$ to $s$ in the online matching and augment the
offline matching along $P$:
\[
M\leftarrow M\cup\{(r,s)\},
\qquad
M^*\leftarrow M^*\oplus P.
\]
\end{enumerate}
Every server is initialized with dual weight zero. Each dual update
decreases the weight of a server $s$ by $\delta-\ell(s)$ when
$\ell(s)<\delta$ and leaves it unchanged otherwise. Thus, all server
dual weights remain nonpositive throughout the execution.

\paragraph{Efficiency.} The runtime for the RG algorithm per request arrival is dominated by the time to run Dijkstra's algorithm on a graph with $O(nk)$ edges, which is $O(nk\log n)$ (since $k \leq n$). 
\begin{restatable}{lemma}{runtimeLemma}\label{lem:runtime}
The Robustified Greedy algorithm processes each request
in $O(nk\log n)$ time.
\end{restatable}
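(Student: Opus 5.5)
To prove Lemma~\ref{lem:runtime}, I will bound the work done on one arrival by counting the three steps of the primal-dual procedure and showing that the Dijkstra computation dominates.

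\textbf{Size of the residual graph.} When request $r$ arrives there are at most $n$ requests, counting $r$, and $k$ servers. The residual graph $G_{M^*}$ has one directed edge for each request-server pair: forward if the pair is not in $M^*$, backward if it is. It also has at most $k$ edges into the sink $z$. So $G_{M^*}$ has $O(n+k)=O(n)$ vertices and $O(nk)$ edges.

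\textbf{Building the graph.} Each slack $\sigma(\cdot,\cdot)$ is computed in $O(1)$ time from the dual weights and the distance, using the case split in~\eqref{eq:forward} according to whether both endpoints are free. This needs to know, for each request and each server, whether it is free, and that free status can be kept up to date in $O(1)$ per augmentation step. So the graph is built in $O(nk)$ time.

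\textbf{Dijkstra and correctness of its use.} Feasibility~\eqref{eq:upper}--\eqref{eq:lower} makes every edge weight nonnegative, so Dijkstra's algorithm applies. This is the one point that needs an invariant: $(t,\alpha)$-feasibility must hold before every arrival. I take the invariant as given here, since the lemma only concerns running time. With a binary heap, Dijkstra runs in $O(|E|\log|V|)=O(nk\log n)$ time.

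\textbf{Remaining steps.} Extracting the shortest $r$-to-$z$ path and truncating it at its first free server costs $O(n)$, since the path has length at most $O(n)$. The dual update~\eqref{eq:update} touches each vertex once, costing $O(n+k)$. Augmenting $M^*$ along $P$ costs $O(|P|)=O(n)$, and adding the edge to $M$ costs $O(1)$.

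Summing the steps, the total per arrival is $O(nk\log n)$, using $k\le n$ so that $\log(n+k)=O(\log n)$. The only non-routine point is the nonnegativity of the slacks, which feasibility supplies; everything else is bookkeeping.
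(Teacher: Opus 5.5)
Your proposal is correct and follows essentially the same route as the paper: bound the residual graph by $O(nk)$ edges, charge $O(nk\log n)$ to Dijkstra, and observe that path extraction, the dual update, and augmentation each take linear time. Your extra remarks, that the slacks can be built in $O(nk)$ time and that the edge weights are nonnegative by $(t,\alpha)$-feasibility (maintained by Lemma~\ref{lem:feasibility}), fill in details the paper leaves implicit.
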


\begin{proof}
The residual graph has $O(nk)$ edges. Executing Dijkstra's algorithm to find an
augmenting path takes $O(nk\log n)$ time. Updating the dual weights
and augmenting the matching each take $O(n)$ time.
Therefore, each request is processed in $O(nk\log n)$ time.
\end{proof}

\paragraph{Correctness.}
Initializing the new request with
dual weight zero preserves feasibility because server dual
weights are nonpositive.
Next, we show that the dual update and augmentation preserve
$(t,\alpha)$-feasibility and that the selected path minimizes
the $(t,\alpha)$-net-cost.  The following lemma establishes
feasibility after the remaining steps and shows that the
selected path has zero slack immediately before augmentation.

\begin{restatable}{lemma}{feasibilityLemma}\label{lem:feasibility}
The dual update and augmentation preserve
$(t,\alpha)$-feasibility. Immediately before augmentation,
every residual edge of $P$ has zero slack.
\end{restatable}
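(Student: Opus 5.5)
The plan is the standard Hungarian-style argument for Dijkstra-based dual updates, adapted to the two-regime upper bound in \eqref{eq:upper}. First I would record two facts about the shortest-path distances $\ell(\cdot)$. For every residual edge $u\to v$ of weight $\sigma$ we have $\ell(v)\le\ell(u)+\sigma$, and equality holds along the chosen shortest path. Also $\ell(r)=0$, and $\ell(s)\ge\delta$ for every free server $s$, since $s\to z$ has weight zero.

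I will use the capped distance $\ell'(v)=\min(\ell(v),\delta)$. The update \eqref{eq:update} then reads $y(v)\leftarrow y(v)+(\delta-\ell'(v))$ for requests and $y(v)\leftarrow y(v)-(\delta-\ell'(v))$ for servers. Since $\ell'$ also satisfies the triangle inequality over residual edges, the new slack of a non-matching edge $r'\to s'$ is $\sigma(r',s')+\ell'(r')-\ell'(s')\ge0$. The new slack of a matching edge $s'\to r'$ is $\sigma(s',r')+\ell'(s')-\ell'(r')\ge0$. Every edge on $P$ has $\ell<\delta$ at its tail, except that the terminal $s$ has $\ell(s)=\delta$. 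Hence on $P$ the capped and true distances coincide, the shortest-path equalities hold, and every residual edge of $P$ has zero slack after the update. This proves the second claim of the lemma and shows feasibility after the dual update for the pre-augmentation matching.

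Next I would handle augmentation. Edges of $P\setminus M^*$ become matching edges. Their zero slack means $y(r')+y(s')=t\,d$, or $\alpha d$ in the direct case, and both are at least $d$, so \eqref{eq:lower} holds. Edges of $P\cap M^*$ become non-matching with $y(r')+y(s')=d\le t\,d$. One point needs care: both endpoints could now be free, which would require $\alpha d$. This cannot happen, because after augmentation every vertex of $P$ is matched. The remaining issue is the change of regime for edges not on $P$. Augmentation makes $r$ matched and may saturate $s$. Edges whose requirement switches from $\alpha d$ to $t\,d$ only gain slack, since $\alpha\le t$. No pair changes from "not both free" to "both free", because requests never become free again and $s$ only loses capacity.

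The main obstacle is the $\alpha$-regime inside the Dijkstra step. A residual edge $r'\to s'$ has weight based on $\alpha d$ only when $r'$ is free. Before augmentation the only free request is $r$ itself, so these edges all leave $r$. They are therefore exactly the direct-path candidates, and the triangle-inequality argument above applies to them unchanged. I would check carefully that initializing $y(r)=0$ already gives $\alpha d(r,s)-y(s)\ge0$ for these edges, using the fact that server duals are nonpositive. I would also check that capping at $\delta$ does not break the inequality when the head is a free server with $\ell\ge\delta$.
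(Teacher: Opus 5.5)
Your overall route matches the paper's. You use capped distances $\min\{\ell(v),\delta\}$ and the shortest-path inequality on every residual edge, get equality along $P$, and then do the bookkeeping for augmentation. There is, however, one genuine gap: you never verify the upper bound $y(r')+y(s')\le t\,d(r',s')$ for pairs $(r',s')$ that are \emph{matching} edges of $M^*$.

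The residual graph encodes only one constraint per pair. For a non-matching pair it encodes the upper bound, via the edge $r'\to s'$. For a matching pair it encodes only the lower bound \eqref{eq:lower}, via the edge $s'\to r'$. So your sentence ``this shows feasibility after the dual update'' does not follow from the nonnegativity of the new residual slacks. A priori the dual sum of a matched pair changes by $\ell'(s')-\ell'(r')$. That change is positive whenever $\ell'(r')<\ell'(s')$, and the sum could then exceed $t\,d(r',s')$. The constraint matters for two reasons. Matching edges not on $P$ carry their updated dual sums unchanged through augmentation. In addition, Lemma~\ref{lem:dualbound} later applies the pairwise upper bound to pairs of an optimal matching, and these may lie in $M^*$.

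The missing idea is structural. Every residual edge entering a request comes from a matching edge, so a matched request $r'$ has exactly one incoming residual edge, namely $s'\to r'$ from its partner. If $r'$ is reachable, then $\ell(r')=\ell(s')+\sigma(s',r')\ge\ell(s')$. If $s'$ is unreachable, then so is $r'$, and both capped distances equal $\delta$. Because capping at $\delta$ is monotone, in either case $\ell'(r')\ge\ell'(s')$. Hence the dual sum of a matched pair does not increase, and its bound $t\,d(r',s')$ is preserved. With this step added, the rest of your argument goes through.

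A minor point: interior vertices of $P$ can have $\ell=\delta$ when zero-slack edges are present, so ``$\ell<\delta$ at every tail'' is not quite right. You only need $\ell\le\delta$ along $P$, and this holds because $P$ is a prefix of a shortest $r$-to-$z$ path.
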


\begin{proof}
For the analysis, define
\[
\bar\ell(v)=\min\{\ell(v),\delta\},
\]
where $\bar\ell(v)=\delta$ if $v$ is unreachable.
The update adds $\delta-\bar\ell(v)$ to each request weight
and subtracts $\delta-\bar\ell(v)$ from each server weight.
Let $y'$ denote the updated weights, and let $\sigma$
denote the slacks before the update.

Consider a non-matching edge $(r',s')$, directed from
$r'$ to $s'$. The shortest-path inequality gives
\[
\ell(s')\le\ell(r')+\sigma(r',s').
\]
Since $\sigma(r',s')\ge0$, it follows that
\[
\bar\ell(s')-\bar\ell(r')\le\sigma(r',s').
\]
The change in its dual sum is
\[
\begin{aligned}
y'(r')+y'(s')-y(r')-y(s')
&=(\delta-\bar\ell(r'))
  -(\delta-\bar\ell(s'))\\
&=\bar\ell(s')-\bar\ell(r')\\
&\le\sigma(r',s').
\end{aligned}
\]
Thus, the dual sum increases by at most its slack, preserving
the upper-bound constraint.

Now consider a matching edge $(r',s')$, directed from
$s'$ to $r'$. The shortest-path inequality gives
\[
\ell(r')\le\ell(s')+\sigma(s',r'),
\]
and hence
\[
\bar\ell(r')-\bar\ell(s')\le\sigma(s',r').
\]
Therefore,
\[
\begin{aligned}
y'(r')+y'(s')-d(r',s')
&=\sigma(s',r')
  +\bar\ell(s')-\bar\ell(r')\\
&\ge0.
\end{aligned}
\]
This preserves the matching-edge lower bound.

We must also verify the upper bound on each matching edge.
The only incoming residual edge of the matched request
$r'$ is its edge from $s'$. Consequently, when these
vertices are reachable,
\[
\ell(r')=\ell(s')+\sigma(s',r')\ge\ell(s').
\]
Thus, $\bar\ell(r')\ge\bar\ell(s')$. If the vertices are
unreachable, both truncated distances equal $\delta$.
In either case,
\[
\begin{aligned}
y'(r')+y'(s')
&=y(r')+y(s')
  +\bar\ell(s')-\bar\ell(r')\\
&\le y(r')+y(s')\\
&\le t\,d(r',s').
\end{aligned}
\]
Hence the matching-edge upper bound is preserved.

For every free server $s'$, the zero-weight edge
$s'\to z$ gives
\[
\delta=\ell(z)\le\ell(s').
\]
Therefore, free servers receive no update and retain dual
weight zero. All server weights remain nonpositive because
they only decrease.

Along the selected shortest path, every residual edge
$u\to v$ satisfies
\[
\ell(v)=\ell(u)+\sigma(u,v).
\]
Every vertex on the path has distance at most $\delta$,
so $\bar\ell(v)=\ell(v)$ there. The slack after the update
is therefore
\[
\sigma(u,v)+\bar\ell(u)-\bar\ell(v)=0.
\]
Thus, every residual edge of $P$ has zero slack immediately
before augmentation.

It remains to verify feasibility after augmentation.
If $P$ is direct, its inserted edge has dual sum
$\alpha d(r,s)$. If $P$ is non-direct, it has no internal
free server, so none of its inserted edges joins two free
vertices before augmentation. Each such edge $(r',s')$
therefore has dual sum $t\,d(r',s')$. Since
$1\le\alpha\le t$, every inserted edge satisfies both
matching-edge bounds.

Every deleted edge $(r',s')$ has dual sum $d(r',s')$,
because its residual slack is zero. Its request remains
matched after augmentation, so its new upper bound is
$t\,d(r',s')$, which is satisfied.

Finally, augmentation matches the new request and may
saturate the terminal server. No request or server becomes
free. These changes can only relax an upper bound from
$\alpha d(r',s')$ to $t\,d(r',s')$. All remaining constraints
are unchanged. Therefore, feasibility is preserved.
\end{proof}

\begin{restatable}{corollary}{minPathCorollary}\label{cor:minpath}
The selected augmenting path $P$ has minimum
$(t,\alpha)$-net-cost.
\end{restatable}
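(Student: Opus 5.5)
The plan is a standard primal-dual certificate: compute the net-cost of the selected path $P$ from the dual weights, and show that the same dual expression is a lower bound on the net-cost of every other augmenting path. All duals and slacks below are taken after the dual update and before augmentation. Lemma~\ref{lem:feasibility} gives that these are feasible and that every residual edge of $P$ has zero slack. Recall also that the new request has $y(r)=\delta$ after the update, and that every free server has dual weight zero.

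First, evaluate $P$. If $P$ is direct, $P=\{(r,s)\}$, then zero slack on an edge between two free vertices gives $y(r)+y(s)=\alpha d(r,s)=\phi_{t,\alpha}(P)$. If $P$ is non-direct, no inserted edge joins two free vertices. Hence each non-matching edge $(u,v)$ of $P$ satisfies $y(u)+y(v)=t\,d(u,v)$, and each matching edge satisfies $y(u)+y(v)=d(u,v)$. Summing with alternating signs, the internal vertices telescope away:
\[
t\,w(P\setminus M^*)-w(P\cap M^*)=y(r)+y(s)=y(r),
\]
using $y(s)=0$ for the free terminal server. So in both cases $\phi_{t,\alpha}(P)=y(r)$.

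Second, take any other augmenting path $P'$ from $r$ to a free server $s'$. If $P'$ is direct, the upper bound~\eqref{eq:upper} for two free vertices gives $\alpha d(r,s')\ge y(r)+y(s')=y(r)$. If $P'$ is non-direct, apply the $t\,d$ bound to each non-matching edge and the lower bound~\eqref{eq:lower} to each matching edge (which is subtracted). The same telescoping then yields $\phi_{t,\alpha}(P')\ge y(r)+y(s')=y(r)$. Hence $\phi_{t,\alpha}(P')\ge\phi_{t,\alpha}(P)$.

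The main point needing care is checking which case of~\eqref{eq:upper} applies to each non-matching edge of a non-direct path. Its first edge leaves $r$ toward a matched server. Each later non-matching edge starts at a matched request. So no non-matching edge joins two free vertices, and the $t\,d$ bound is the right one throughout. This is exactly what lets the $(t,\alpha)$ net-cost of every path be certified by the single value $y(r)$.
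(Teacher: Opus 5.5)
Your certificate argument is the same as the paper's. You show $\phi_{t,\alpha}(P)=y(r)$ from zero slack and telescoping, and $\phi_{t,\alpha}(Q)\ge y(r)$ for every augmenting path $Q$ using feasibility and $y(s')=0$. However, one step is justified incorrectly, and it is exactly the step you single out as needing care. In the transportation setting a server is free when it has unused capacity, so a server that is matched can still be free. The first server $s_1$ of a non-direct augmenting path $Q$ must carry an $M^*$-edge for the path to continue, but it may still satisfy $\deg_{M^*}(s_1)<b(s_1)$. In that case $(r,s_1)$ joins two free vertices, and the applicable case of~\eqref{eq:upper} is $y(r)+y(s_1)\le\alpha\,d(r,s_1)$, not the $t\,d$ bound. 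So your claim that ``no non-matching edge joins two free vertices'' is false for such $Q$. As you correctly note, only this first edge can be free--free, because every later non-matching edge starts at a matched request.

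The repair is one line. Since $1\le\alpha\le t$, we have $y(r)+y(s_1)\le\alpha\,d(r,s_1)\le t\,d(r,s_1)$. Hence the $t\,d$ bound you need holds on every non-matching edge, and your telescoping lower bound goes through. This is precisely the ``discounted first edge'' remark in the paper's proof.

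For the selected path $P$, your statement that no inserted edge joins two free vertices is correct, but for a different reason than the one you give. It holds because $P$ is truncated at its first free server, so a non-direct $P$ has a saturated first server. That truncation is also what makes the telescoped inequality an equality for $P$.
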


\begin{proof}
Consider the dual weights and residual slacks immediately
after the update and before augmentation. Let $Q$ be any
augmenting path from the new request $r$ to a free server
$s'$. Summing the slack definitions along $Q$ cancels the
dual contributions of all internal vertices, giving
\[
\phi_{t,\alpha}(Q)
\ge
y(r)+y(s')+\sum_{e\in Q}\sigma(e).
\]
The inequality allows for a discounted first edge when a
non-direct path first visits a free server: the slack
definition uses coefficient $\alpha$ on that edge, whereas
the net-cost of a non-direct path uses coefficient $t$.
Since $y(s')=0$ and all residual slacks are nonnegative,
\[
\phi_{t,\alpha}(Q)\ge y(r).
\]

The selected path $P$ has no internal free server.
Consequently, the first inequality is an equality for $P$.
By Lemma~\ref{lem:feasibility}, all its residual edges have
zero slack. Hence
\[
\phi_{t,\alpha}(P)=y(r),
\]
which proves that $P$ has minimum $(t,\alpha)$-net-cost.
\end{proof}

\subsection{Properties used in the analysis}
We first show that direct edges retain their dual contribution while their server remains free.

\begin{restatable}{lemma}{persistenceLemma}\label{lem:persistence}
Suppose $\alpha>1$. Until a server $s$ becomes saturated, every positive-weight edge $(r,s)$ inserted
into $M^*$ by a direct augmenting path remains in $M^*$ and satisfies
\begin{equation}\label{eq:persistence}
y(r)+y(s)=\alpha d(r,s)
\end{equation}
\end{restatable}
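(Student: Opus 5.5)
The plan is an induction over arrivals after the direct edge $(r,s)$ is inserted, maintaining two invariants while $s$ stays unsaturated: $(r,s)\in M^*$, and $y(r)+y(s)=\alpha d(r,s)$.

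\textbf{Base case.} When $(r,s)$ is inserted by a direct path, Lemma~\ref{lem:feasibility} says its residual slack is zero just before augmentation. Both endpoints are free at that moment, so the slack is measured against $\alpha d(r,s)$, and the equality holds right after insertion.

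\textbf{Inductive step.} Consider a later arrival of a request $r'$ while $s$ is still free, meaning it has remaining capacity. Two facts drive the argument.

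First, $s$ is free, so the proof of Lemma~\ref{lem:feasibility} gives $\delta\le \ell(s)$. Hence $\bar\ell(s)=\delta$ and $y(s)$ is unchanged by the dual update. The only residual edge entering $r$ is $s\to r$, with slack $y(r)+y(s)-d(r,s)=(\alpha-1)d(r,s)>0$; this is where $\alpha>1$ and positive weight are used. Therefore $\ell(r)\ge \ell(s)+(\alpha-1)d(r,s)>\delta$, so $\bar\ell(r)=\delta$ and $y(r)$ is also unchanged. The dual sum on $(r,s)$ is preserved. The same holds if $s$ is unreachable, since then both truncated distances equal $\delta$.

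Second, the edge $(r,s)$ must not be removed by the augmentation. An augmenting path $P$ could delete $(r,s)$ only by traversing $s\to r$. Every vertex on the selected shortest path has $\ell\le\delta$, but $\ell(r)>\delta$, so $r\notin P$. The truncation rule supplies a second reason: $P$ stops at its first free server, and $s$ is free, so $P$ never continues past $s$ to $r$. Either way $(r,s)$ survives. It could alternatively be removed only if the path passed through $r$, which is excluded by the same bound.

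\textbf{Persistence through saturation.} Augmentation may make $s$ receive its final unit and become saturated. That is the endpoint allowed by the statement, and the equality still holds at that moment. No other primal change touches $y(r)$ or $y(s)$.

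\textbf{Main obstacle.} The delicate step is the strict inequality $\ell(r)>\delta$. It needs $r$'s only incoming residual edge to come from $s$, and it needs $\ell(s)\ge\delta$ for the free server $s$. I will verify the slack direction carefully: matched edges are directed from server to request, and since $r$ is matched it has no other incoming edges. With that in place, the zero-update property of the dual rule \eqref{eq:update} gives the invariant.
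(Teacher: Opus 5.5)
Your proposal is correct and follows essentially the same route as the paper: the backward slack $(\alpha-1)d(r,s)>0$, combined with $\ell(s)\ge\delta$ for the free server $s$ and the fact that $s\to r$ is the only edge entering $r$, forces $\ell(r)>\delta$, so neither dual changes and no shortest path of length $\delta$ can traverse $s\to r$. Your additional remark that truncation at the first free server already prevents the path from continuing past $s$ is a valid second reason for non-deletion. It does not use $\alpha>1$, whereas the invariance of the dual sum is argued exactly as in the paper.
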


\begin{proof}
On insertion, the direct edge has zero forward slack, so its dual sum is $\alpha d(r,s)$. Its backward slack is therefore
\[\gamma=(\alpha-1)d(r,s)>0.\]
Consider a subsequent search while $s$ is free. Its zero-length edge to $z$ implies $\ell(s)\ge\delta$. The only incoming residual edge of $r$ is its matching edge from $s$. If reachable, it therefore satisfies
\[\ell(r)=\ell(s)+\gamma>\delta.\]
Neither endpoint receives a dual update. Moreover, a shortest path of length $\delta$ cannot traverse this backward edge and delete the matching edge. Unreachable endpoints also receive no update and cannot occur on the selected path.

Applying this argument to successive arrivals proves the claim.
\end{proof}


We next bound the weight of the offline matching by the dual objective.
\begin{restatable}{lemma}{dualBoundLemma}\label{lem:dualbound}
After processing any arrival,
\begin{equation}\label{eq:dualbound}
w(M^*)\le\sum_r y(r)+\sum_s b(s)y(s)\le t\,\OPT,
\end{equation}
where the first sum ranges over the requests that have arrived.
\end{restatable}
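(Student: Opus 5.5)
The two inequalities in \eqref{eq:dualbound} need separate arguments. The first follows from the lower-bound constraint \eqref{eq:lower} together with the sign of the server duals and the fact that free servers carry dual weight zero. The second follows from weak LP duality applied to a scaled cost, using the upper bound \eqref{eq:upper}.

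For the first inequality, sum \eqref{eq:lower} over all edges of $M^*$. This gives
\[
w(M^*)\le\sum_{(r,s)\in M^*}\bigl(y(r)+y(s)\bigr).
\]
After each arrival $M^*$ matches every arrived request, so the request part of the right-hand side is exactly $\sum_r y(r)$. The server part equals $\sum_s m_s\,y(s)$, where $m_s\le b(s)$ is the number of requests matched to $s$. Two facts established in the proof of Lemma~\ref{lem:feasibility} finish this step:
\begin{itemize}
\item server duals are nonpositive, so replacing $m_s$ by $b(s)$ can only decrease each term, since $(b(s)-m_s)\,y(s)\le0$;
\item free servers retain $y(s)=0$.
\end{itemize}
Hence $\sum_s m_s y(s)\ge\sum_s b(s)y(s)$. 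Because the left side is an upper bound for $w(M^*)$, we obtain the cleaner chain
\[
w(M^*)\le\sum_r y(r)+\sum_s m_s y(s)=\sum_r y(r)+\sum_s b(s)y(s).
\]
The final equality holds because $m_s<b(s)$ only at free servers, where $y(s)=0$. For saturated servers $m_s=b(s)$. So the first inequality is in fact an equality-based bound.

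For the second inequality, let $M_{\OPT}$ be an optimal transportation of the arrived requests. Consider the transportation LP with costs $t\,d(r,s)$ and server capacities $b(s)$. After the arrival every request is matched, so \eqref{eq:upper} gives $y(r)+y(s)\le t\,d(r,s)$ for every pair. The only exception is a pair with both endpoints free, and no such pair exists because no request is free. Summing this bound over the edges of $M_{\OPT}$ gives
\[
\sum_r y(r)+\sum_s \mathrm{deg}_{M_{\OPT}}(s)\,y(s)\le t\,w(M_{\OPT}).
\]
Since $\deg_{M_{\OPT}}(s)\le b(s)$ and $y(s)\le0$, the left side is at least $\sum_r y(r)+\sum_s b(s)y(s)$. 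This is exactly the weak duality step.

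The remaining point is to compare $w(M_{\OPT})$, the optimum for the requests arrived so far, with $\OPT$ for all $n$ requests. This comparison is the main obstacle. Restricting the full optimum to the arrived requests yields a feasible capacity-respecting assignment whose cost is at most $\OPT$ by nonnegativity of distances. Therefore the optimum on the prefix is at most $\OPT$, and the bound $t\,\OPT$ follows. I would be careful here to use the prefix instance in the duality step, with every arrived request matched. This is what guarantees that the $\alpha$-case of \eqref{eq:upper} never applies.
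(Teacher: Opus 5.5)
Your proof is correct and follows the paper's argument. The first inequality comes from summing the matching-edge lower bounds and using $y(s)=0$ at unsaturated servers, so that $\deg_{M^*}(s)\,y(s)=b(s)\,y(s)$ for every $s$; the second comes from summing the pairwise upper bounds over an optimal assignment of the arrived requests and then using $\deg(s)\le b(s)$ and $y(s)\le 0$. The differences are cosmetic: the paper sums directly over the restriction $B$ of the final optimum rather than detouring through the prefix optimum, and your nonpositivity bullet in the first step is not needed, since the equality at free servers already does the work.
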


\begin{proof}
Let $\deg_{M^*}(s)$ be the number of requests assigned to server $s$. Summing the matching-edge lower bounds gives
\[w(M^*)\le\sum_r y(r)+\sum_s\deg_{M^*}(s)y(s).\]
If $s$ is saturated, then $\deg_{M^*}(s)=b(s)$. Otherwise, $y(s)=0$. Thus,
\[w(M^*)\le\sum_r y(r)+\sum_s b(s)y(s).\]
Fix a final optimal matching, and let $B$ be its restriction to the requests that have arrived. Since $\deg_B(s)\le b(s)$ and $y(s)\le0$,
\[
\begin{aligned}
\sum_r y(r)+\sum_s b(s)y(s)
&\le\sum_r y(r)+\sum_s\deg_B(s)y(s)\\
&=\sum_{(r,s)\in B}\bigl(y(r)+y(s)\bigr)\\
&\le t\,w(B)\le t\,\OPT.
\end{aligned}
\]
The penultimate inequality follows from the pairwise upper bounds.
\end{proof}

\section{Competitive ratio}
\label{sec:competitive-ratio}

For the analysis, index requests by their arrival order. Let $M_i$ and $M_i^*$ be the online and offline matchings after arrival $i$, let $P_i$ be the selected path, and let $d_i$ be the distance from $r_i$ to a nearest free server before it is served. Write $R_i=\{r_1,\ldots,r_i\}$ and $\ALG=w(M_n)$. In this section we establish that the competitive ratio of RG is upper bounded by $20k/3 - 3$ for $t=4, \alpha = 2$. 



\textbf{Saturation Epochs.} A saturation event occurs when a server uses its last unit of available capacity. Each arrival increases the load of exactly one server by one. Hence each server becomes saturated once, and at most one server becomes saturated on an arrival. Let
$
0=\tau_0<\tau_1<\cdots<\tau_k=n
$
be the saturation times. Epoch $h$ consists of arrivals $\tau_{h-1}+1,\ldots,\tau_h$. Fix an epoch and write $a=\tau_{h-1}$ and $b=\tau_h$. Immediately before every arrival in this epoch, every server that was free at time $a$ is still free.

Let
\[
G=\sum_{\substack{a<i\le b\\P_i\text{ direct}}}d_i,
\qquad
\Phi=\sum_{i=a+1}^{b}\phi_{t,\alpha}^{M_{i-1}^*}(P_i).
\]


\begin{restatable}[Direct-cost bound]{lemma}{directCostLemma}\label{lem:greedybudget}
For every epoch,
$w(M_b^*)+(\alpha-1)G\le t\,\OPT$
\end{restatable}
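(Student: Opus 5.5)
The plan is to sharpen the first inequality in Lemma~\ref{lem:dualbound} by exploiting the slack of the persistent direct edges. The proof of Lemma~\ref{lem:dualbound} sums the matching-edge lower bound $y(r)+y(s)\ge d(r,s)$ over all edges of $M^*$. By Lemma~\ref{lem:persistence}, the direct edges inserted during the epoch instead satisfy this with equality at the larger value $\alpha d(r,s)$. So the same summation should pick up an extra $(\alpha-1)d(r,s)$ for each of them.

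\textbf{Step 0 (trivial case).} If $\alpha=1$, the claim is exactly Lemma~\ref{lem:dualbound} applied at time $b$. So assume $\alpha>1$, which is the hypothesis of Lemma~\ref{lem:persistence}.

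\textbf{Step 1 (identify the edges).} For each $i$ with $a<i\le b$ and $P_i$ direct, the selected edge is $(r_i,s_i)$ with $d(r_i,s_i)=d_i$. This holds because the minimum-net-cost direct path goes to a nearest free server, as noted after the algorithm description. Edges with $d_i=0$ contribute nothing to $G$, so we may discard them. The remaining edges are distinct, since they have distinct requests.

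\textbf{Step 2 (they survive to time $b$ with full dual sum).} I claim that every such edge with $d_i>0$ lies in $M_b^*$ and satisfies $y(r_i)+y(s_i)=\alpha d_i$ with respect to the duals after arrival $b$. Every server free at time $a$ stays free before each arrival in the epoch. The server $s_i$ is free at arrival $i>a$, hence free at time $a$. Therefore $s_i$ is still free immediately before every later arrival $j\le b$, including arrival $b$ itself.

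The argument in Lemma~\ref{lem:persistence} only uses that $s_i$ is free during the search of arrival $j$. From this it deduces $\ell(r_i)>\delta$ and $\ell(s_i)\ge\delta$. Consequently neither endpoint receives a dual update, and the backward edge $s_i\to r_i$ cannot lie on the selected path.

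This reasoning still applies to the search at arrival $b$, even if arrival $b$ is the one that saturates $s_i$. Saturation happens only at the augmentation step, and augmentation neither removes $(r_i,s_i)$ nor changes any dual. This is the step needing the most care: it is where the epoch boundary $b$ is handled, and Lemma~\ref{lem:persistence} as stated only says ``until $s$ becomes saturated''.

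\textbf{Step 3 (sum).} Let $D$ denote this set of direct edges. Sum the lower bounds over $M_b^*$, using $y(r)+y(s)\ge d(r,s)$ for edges outside $D$ and $y(r)+y(s)=d(r,s)+(\alpha-1)d(r,s)$ for edges in $D$. This gives
\[
w(M_b^*)+(\alpha-1)G\;\le\;\sum_{(r,s)\in M_b^*}\bigl(y(r)+y(s)\bigr)
=\sum_r y(r)+\sum_s \deg_{M_b^*}(s)\,y(s).
\]
Then follow the remainder of the proof of Lemma~\ref{lem:dualbound} verbatim. A free server has $y(s)=0$ and a saturated server has $\deg_{M_b^*}(s)=b(s)$, so the right-hand side equals $\sum_r y(r)+\sum_s b(s)y(s)$. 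Comparing with the restriction of an optimal matching, using $y(s)\le 0$ and the pairwise upper bounds, shows this is at most $t\,\OPT$. This yields the lemma.
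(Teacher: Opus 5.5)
Your proposal is correct and is essentially the paper's proof. Lemma~\ref{lem:persistence} gives $y(r)+y(s)=\alpha d(r,s)$ on the epoch's positive-length direct edges, which remain in $M_b^*$; the matching-edge lower bound covers the remaining edges, and the finish is exactly as in Lemma~\ref{lem:dualbound}. Your extra check of the search at arrival $b$ (where $s_i$ may be the server saturated at $b$) makes explicit a boundary detail the paper leaves implicit when it cites Lemma~\ref{lem:persistence}.
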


\begin{proof}
For $\alpha=1$, the claim follows from
Lemma~\ref{lem:dualbound}. Assume $\alpha>1$.

For every positive-length direct edge $e=(r,s)$ inserted during the
epoch, Lemma~\ref{lem:persistence} implies that $e\in M_b^*$ at the
end of the epoch and $
y(r)+y(s)=\alpha d(e).
$
These edges have total weight $G$. Every remaining edge
$e'=(r',s')\in M_b^*$ satisfies
$
y(r')+y(s')\ge d(e')
$
by the matching-edge lower bound~\eqref{eq:lower}. Therefore,
\[
\begin{aligned}
w(M_b^*)+(\alpha-1)G
&\le \sum_{(r,s)\in M_b^*}\!\bigl(y(r)+y(s)\bigr)\\
&=\sum_{r\in R_b}y(r)+\sum_s\deg_{M_b^*}(s)y(s)
 =\sum_{r\in R_b}y(r)+\sum_s b(s)y(s).
\end{aligned}
\]
The last equality holds because an unsaturated server has $y(s)=0$,
whereas a saturated server has
$\deg_{M_b^*}(s)=b(s)$. Lemma~\ref{lem:dualbound} bounds the final
expression by $t\,\OPT$.
\end{proof}

\begin{restatable}[Nearest-free-distance bound]{lemma}{nearestFreeDistanceLemma}\label{lem:distance}
For every epoch,
$\sum_{i=a+1}^{b}d_i\le\OPT+w(M_a^*)$
\end{restatable}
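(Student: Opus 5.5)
The plan is a flow-decomposition argument that compares the optimal matching with the offline matching $M_a^*$ at the start of the epoch. The key fact, noted in the definition of epochs, is this: every server that is free at time $a$ (that is, has positive residual capacity with respect to $M_a^*$, equivalently with respect to $M_a$) is still free immediately before each arrival $i$ with $a<i\le b$. So if each request $r_i$ with $i>a$ can be linked, by a walk of bounded total length, to some server that is free at time $a$, then the triangle inequality bounds $d_i$ by the length of that walk.

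\textbf{Step 1: set up the flow.} Let $B$ be a fixed optimal matching of all $n$ requests. Define a signed flow on request-server pairs by $f(r,s)=\mathbf{1}[(r,s)\in B]-\mathbf{1}[(r,s)\in M_a^*]$, interpreted as flow from $r$ to $s$ when positive and from $s$ to $r$ when negative. The net supply works out as follows.
\begin{itemize}
\item Each request $r_i$ with $i>a$ has net outflow $1$, since it is matched in $B$ but not in $M_a^*$.
\item Each request $r_i$ with $i\le a$ has net outflow $0$, since it is matched in both.
\item Each server $s$ has net inflow $\deg_B(s)-\deg_{M_a^*}(s)=b(s)-\deg_{M_a^*}(s)\ge 0$, because $B$ saturates every server (total capacity is $n$).
\end{itemize}
Hence the sinks are exactly the servers that are free at time $a$.

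\textbf{Step 2: decompose.} By the standard flow decomposition, $f$ splits into unit-flow paths and cycles. There is one path from each new request $r_i$ ($i>a$) to a sink server, and the total length of all paths is at most $\sum |f(e)|\,d(e)\le w(B)+w(M_a^*)=\OPT+w(M_a^*)$. Cancellation of common edges only helps, and cycles are discarded.

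\textbf{Step 3: conclude.} For $a<i\le b$, let $s_i$ be the endpoint of $r_i$'s path. This server is free at time $a$, hence still free when $r_i$ arrives, so the triangle inequality gives $d_i\le d(r_i,s_i)\le$ the length of $r_i$'s path. Summing over $i\in(a,b]$ gives at most the total length over all paths from requests $r_{a+1},\dots,r_n$, which is at most $\OPT+w(M_a^*)$.

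I expect the main obstacle to be stating the decomposition rigorously with capacities: server sinks may absorb several units, and paths may revisit servers. I would handle this by viewing $f$ as an integral flow in a directed bipartite graph with integral supplies and demands and invoking the integral flow decomposition theorem. Walks that repeat vertices are harmless, since only the triangle inequality along the walk is used.
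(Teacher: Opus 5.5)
Your proposal is correct and follows essentially the same argument as the paper: orient $B\setminus M_a^*$ from requests to servers and $M_a^*\setminus B$ from servers to requests, decompose into edge-disjoint paths from the new requests to servers free at time $a$, then combine the epoch property with the triangle inequality and edge-disjointness. The only difference is cosmetic: you take $B$ to be the full optimal matching rather than its restriction to $R_b$, so every server has nonnegative excess $b(s)-\deg_{M_a^*}(s)$ and the extra paths from $r_{b+1},\dots,r_n$ are simply discarded, which works equally well.
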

\begin{proof}
Fix a final optimal matching and let $B$ be its restriction to $R_b$. In $M_a^*\oplus B$, direct the edges of $B\setminus M_a^*$ from requests to servers and the edges of $M_a^*\setminus B$ from servers to requests. Each old request has equal indegree and outdegree, whereas each new request has one outgoing edge and no incoming edge. At a server $s$, indegree minus outdegree equals $\deg_B(s)-\deg_{M_a^*}(s)$.

Decompose this directed graph into edge-disjoint paths from vertices with excess outgoing edges to vertices with excess incoming edges, together with cycles. This can be obtained by pairing incoming and outgoing edges at each vertex and removing cycles. Each new request $r_i$ starts one such path, which must end at a server $s_i$ with $\deg_B(s_i)>\deg_{M_a^*}(s_i)$. Since $\deg_B(s_i)\le b(s_i)$, that server is free at time $a$.

No server that is free at time $a$ becomes saturated before the last arrival of the epoch. Thus $s_i$ is available immediately before arrival $i$. The triangle inequality gives
\[d_i\le d(r_i,s_i)\le w(\text{path from }r_i\text{ to }s_i).\]
Summing over the new requests and using edge-disjointness gives
$\sum_{i=a+1}^b d_i\le w(M_a^*)+w(B)\le w(M_a^*)+\OPT$.
\end{proof}

A direct edge to a nearest free server has net-cost $\alpha d_i$, so minimum-net-cost selection gives $\phi_{t,\alpha}(P_i)\le\alpha d_i$. A selected direct path consequently has weight $d_i$. Summing and applying the nearest-free-distance bound gives
\begin{equation}\label{eq:modifiedbudget}
\Phi\le\alpha(\OPT+w(M_a^*)).
\end{equation}


Set $t=4$ and $\alpha=2$. Lemma~\ref{lem:greedybudget} and~\eqref{eq:modifiedbudget} give
\[
w(M_b^*)+G\le4\,\OPT,
\qquad
\Phi\le2(\OPT+w(M_a^*)).
\]
Consequently,
\begin{equation}\label{eq:combined}
\begin{aligned}
\Phi+2G
&\le2(\OPT+w(M_a^*))+2(4\,\OPT-w(M_b^*))\\
&=10\,\OPT+2(w(M_a^*)-w(M_b^*)).
\end{aligned}
\end{equation}
Applying Lemma~\ref{lem:accounting} with $t=4$ and $\alpha=2$, the online cost incurred during the epoch satisfies
\[
w(M_b)-w(M_a)
\le\frac{2\Phi+4G+5\bigl(w(M_a^*)-w(M_b^*)\bigr)}{3}.
\]
Using~\eqref{eq:combined}, we obtain
\[
w(M_b)-w(M_a)
\le\frac{20}{3}\OPT+3\bigl(w(M_a^*)-w(M_b^*)\bigr).
\]
Summing this inequality over the $k$ epochs and using $w(M_0^*)=0$ and $w(M_n^*)\ge\OPT$ gives
\[
\ALG
\le \frac{20}{3}k\,\OPT
+3\bigl(w(M_0^*)-w(M_n^*)\bigr)
\le \left(\frac{20}{3}k-3\right)\OPT.
\]

\subsection{Tighter Analysis Using Parameter Optimization}
\label{appdx:parametric}

The epoch bounds hold for every $t>1$ and $1\le\alpha\le t$. Choosing $\alpha=\sqrt t$ makes the offline matching terms telescope with equal coefficients.

\begin{corollary}\label{cor:optimized}
For $t>1$ and $\alpha=\sqrt t$,
\begin{equation}\label{eq:optimized}
\ALG\le
\left[
\frac{2\sqrt t(t+1)}{t-1}k
-\frac{t+1+2\sqrt t}{t-1}
\right]\OPT.
\end{equation}
\end{corollary}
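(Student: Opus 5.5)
We repeat the per-epoch argument given above for $t=4,\alpha=2$, but now with general $t>1$ and $\alpha=\sqrt t$, and then telescope over the $k$ epochs.

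\textbf{Step 1 (per-epoch budgets).} Fix an epoch with endpoints $a<b$. We use two bounds. Lemma~\ref{lem:greedybudget} gives
\[
(\alpha-1)G\le t\,\OPT-w(M_b^*),
\]
and \eqref{eq:modifiedbudget} gives
\[
\Phi\le\alpha\bigl(\OPT+w(M_a^*)\bigr).
\]

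\textbf{Step 2 (accounting).} Lemma~\ref{lem:accounting} bounds the epoch's online cost by
\[
\frac{2\Phi+2(t-\alpha)G+(t+1)\bigl(w(M_a^*)-w(M_b^*)\bigr)}{t-1}.
\]
With $\alpha=\sqrt t$ we have $t-\alpha=\sqrt t(\sqrt t-1)=\sqrt t(\alpha-1)$, so
\[
2(t-\alpha)G=2\sqrt t(\alpha-1)G\le 2\sqrt t\bigl(t\,\OPT-w(M_b^*)\bigr).
\]
We also have $2\Phi\le 2\sqrt t\,\OPT+2\sqrt t\,w(M_a^*)$. Adding these, the numerator is at most
\[
2\sqrt t(t+1)\OPT+(t+1+2\sqrt t)\bigl(w(M_a^*)-w(M_b^*)\bigr).
\]
This is why the choice $\alpha=\sqrt t$ works: the coefficients of $w(M_a^*)$ and $w(M_b^*)$ come out equal, namely $t+1+2\sqrt t=(\sqrt t+1)^2$. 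The per-epoch bound is therefore
\[
w(M_b)-w(M_a)\le\frac{2\sqrt t(t+1)}{t-1}\OPT+\frac{t+1+2\sqrt t}{t-1}\bigl(w(M_a^*)-w(M_b^*)\bigr).
\]

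\textbf{Step 3 (telescoping).} Summing over the $k$ epochs, the $M^*$ terms telescope to $w(M_0^*)-w(M_n^*)$. Since $w(M_0^*)=0$, and $w(M_n^*)\ge\OPT$ because $M_n^*$ is a complete capacitated matching, we obtain
\[
\ALG\le\frac{2\sqrt t(t+1)}{t-1}k\,\OPT-\frac{t+1+2\sqrt t}{t-1}\OPT,
\]
which is \eqref{eq:optimized}.

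\textbf{Main obstacle.} The only delicate point is that Lemma~\ref{lem:greedybudget} requires $\alpha>1$ for the factor $(\alpha-1)$ to be meaningful. Here $\alpha=\sqrt t>1$ holds automatically because $t>1$, so the bound $G\le(t\,\OPT-w(M_b^*))/(\alpha-1)$ is legitimate. The step should nonetheless be written so that it multiplies the inequality $(\alpha-1)G\le t\,\OPT-w(M_b^*)$ by $2\sqrt t>0$, rather than dividing by $\alpha-1$. The rest is direct algebra.
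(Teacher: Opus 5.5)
Your proposal is correct and is essentially the paper's proof: you combine the direct-cost bound with \eqref{eq:modifiedbudget} inside Lemma~\ref{lem:accounting}, using $t-\alpha=\sqrt t(\alpha-1)$ so that the coefficients of $w(M_a^*)$ and $w(M_b^*)$ both equal $t+1+2\sqrt t$. You then telescope over the $k$ epochs with $w(M_0^*)=0$ and $w(M_n^*)\ge\OPT$; multiplying $(\alpha-1)G\le t\,\OPT-w(M_b^*)$ by $2\sqrt t$ instead of dividing by $\alpha-1$ is only a cosmetic difference.
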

\begin{proof}
Since $\alpha>1$, the direct-cost bound implies
\[
G\le\frac{t\,\OPT-w(M_b^*)}{\alpha-1}.
\]
For $t=\alpha^2$,
\[
\frac{t-\alpha}{\alpha-1}=\alpha.
\]
Combining this with~\eqref{eq:modifiedbudget},
\[
\begin{aligned}
\Phi+(t-\alpha)G
&\le\alpha(\OPT+w(M_a^*))+\alpha(t\,\OPT-w(M_b^*))\\
&=\alpha(t+1)\OPT+\alpha(w(M_a^*)-w(M_b^*)).
\end{aligned}
\]
Substitution into Lemma~\ref{lem:accounting} gives
\[
(t-1)\bigl(w(M_b)-w(M_a)\bigr)
\le2\alpha(t+1)\OPT+(2\alpha+t+1)(w(M_a^*)-w(M_b^*)).
\]
Sum over the epochs and use $w(M_0^*)=0$ and $w(M_n^*)\ge\OPT$.
\end{proof}

The coefficient of $k$ in~\eqref{eq:optimized} is
\[
K(t)=\frac{2\sqrt t(t+1)}{t-1}.
\]
Its derivative is
\[
K'(t)=\frac{t^2-4t-1}{\sqrt t(t-1)^2}.
\]
Hence the minimum over $t>1$ occurs at
\[
t=2+\sqrt5,\qquad \alpha=\sqrt{2+\sqrt5}.
\]
For these parameters, the competitive ratio in~\eqref{eq:optimized} is approximately
\[
6.660381353571k-2.890053638264.
\]

This proves Theorem~\ref{thm:main}.

\paragraph{Proof of Theorem~\ref{thm:combined}.}
We first prove the competitive guarantee for the full parameter range in the theorem. Fix an epoch, write $W_{h-1}=w(M_a^*)$ and $W_h=w(M_b^*)$, and suppose first that $\alpha>1$. The direct-cost and nearest-free-distance bounds give
\[
G\le \frac{t\,\OPT-W_h}{\alpha-1},
\qquad
\Phi\le \alpha(\OPT+W_{h-1}).
\]
Substituting these inequalities into Lemma~\ref{lem:accounting}, the epoch cost $C_h=w(M_b)-w(M_a)$ satisfies
\[
\begin{aligned}
(t-1)C_h
&\le
\left(2\alpha+\frac{2t(t-\alpha)}{\alpha-1}\right)\OPT
+ A W_{h-1}-B W_h,
\end{aligned}
\]
where
\[
A=2\alpha+t+1,
\qquad
B=\frac{2(t-\alpha)}{\alpha-1}+t+1.
\]
For $\alpha\le\sqrt t$,
\[
A-B=\frac{2(\alpha^2-t)}{\alpha-1}\le0.
\]
Summing over the $k$ epochs, using $W_0=0$, and dropping the resulting nonpositive intermediate and terminal terms yields
\[
\ALG\le
\frac{2\alpha+2t(t-\alpha)/(\alpha-1)}{t-1}\,k\OPT
=O(k)\OPT.
\]

If $\alpha=1$, then $G\le\sum_{i=a+1}^b d_i\le\OPT+W_{h-1}$ and $\Phi\le\OPT+W_{h-1}$. Lemma~\ref{lem:accounting} and Lemma~\ref{lem:dualbound} therefore give
\[
(t-1)C_h
\le 2t\OPT+(3t+1)W_{h-1}-(t+1)W_h
\le 3t(t+1)\OPT.
\]
Summing again proves the $O(k)$ guarantee.

It remains to prove the greedy-cost fraction. If $t>3+2\sqrt2$, then $\sqrt t<(t-1)/2$, so every $1\le\alpha\le\sqrt t$ satisfies the hypothesis of Lemma~\ref{lem:greedyfraction}. At $t=3+2\sqrt2$, the same is true for $\alpha<\sqrt t$; when $\alpha=\sqrt t=1+\sqrt2$, the claimed fraction equals zero and the conclusion follows from $G\ge0$. This completes the proof.

\section{Competitive Analysis for the RM Algorithm}
\label{appdx:rm-analysis}
In this section, we prove the competitive ratio of the RM algorithm.

 \begin{theorem}[Competitive ratio of RM]
\label{thm:rm-competitive}
For $t=1+\sqrt{2}$, the RM algorithm is
\[
(6+4\sqrt{2})k<11.66k
\]
competitive for online metric transportation.
\end{theorem}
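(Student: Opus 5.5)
The plan is to specialize the epoch analysis of Section~\ref{sec:competitive-ratio} to the regime $\alpha=t$, where RG coincides with RM. Here the greedy term in Lemma~\ref{lem:accounting} vanishes, since its coefficient $t-\alpha$ is zero. So there is no need for the persistence property or the direct-cost bound. The only ingredients will be Lemma~\ref{lem:accounting}, the nearest-free-distance bound (Lemma~\ref{lem:distance}) and the dual bound (Lemma~\ref{lem:dualbound}).

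\textbf{Per-epoch bound.} Fix epoch $h$ with endpoints $a=\tau_{h-1}$ and $b=\tau_h$, and write $W_h=w(M^*_{\tau_h})$ and $C_h=w(M_b)-w(M_a)$.
\begin{itemize}
\item The direct edge to a nearest free server is always a candidate path, so minimum-net-cost selection gives $\phi_{t,t}(P_i)\le t d_i$.
\item Lemma~\ref{lem:distance} then gives $\Phi\le t(\OPT+W_{h-1})$. This is \eqref{eq:modifiedbudget} with $\alpha=t$; that derivation did not use $\alpha<t$.
\item Substituting into Lemma~\ref{lem:accounting} with $\alpha=t$ yields
\[
(t-1)C_h\le 2t\,\OPT+(3t+1)W_{h-1}-(t+1)W_h .
\]
\end{itemize}

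\textbf{Summing over epochs.} Sum over $h=1,\dots,k$.
\begin{itemize}
\item The first term contributes $2tk\,\OPT$.
\item The offline-matching terms telescope. Since $W_0=0$, each intermediate $W_h$ with $1\le h\le k-1$ keeps a net positive coefficient $(3t+1)-(t+1)=2t$, and the terminal term $-(t+1)W_k$ is nonpositive and can be dropped.
\item Lemma~\ref{lem:dualbound} gives $W_h\le t\,\OPT$ for each intermediate $h$.
\end{itemize}
Together these give
\[
(t-1)\ALG\le 2tk\,\OPT+2t\cdot t(k-1)\OPT\le 2t(t+1)k\,\OPT ,
\]
that is, $\ALG\le \frac{2t(t+1)}{t-1}k\,\OPT$.

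\textbf{Choosing $t$.} Minimize $f(t)=2t(t+1)/(t-1)$ over $t>1$. The derivative is proportional to $t^2-2t-1$, so the minimizer is $t=1+\sqrt2$. Then $t-1=\sqrt2$ and $t(t+1)=(1+\sqrt2)(2+\sqrt2)=4+3\sqrt2$, so
\[
f(t)=\frac{2(4+3\sqrt2)}{\sqrt2}=6+4\sqrt2<11.66 .
\]

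The main obstacle I expect is the large positive coefficient $3t+1$ on $W_{h-1}$. With $\alpha=t$ there is no greedy term to absorb it, unlike the RG analysis where $(t-\alpha)G$ is traded against $W_h$. The key step is therefore to let the telescoping cancel only part of it and to control the leftover $2tW_h$ directly by the dual bound $W_h\le t\,\OPT$. This is exactly what makes the leading constant quadratic in $t$, which is why the optimum lands at $t=1+\sqrt2$. I would also double-check that Lemma~\ref{lem:distance} and Lemma~\ref{lem:dualbound} hold verbatim at $\alpha=t$; they do, since their proofs only use feasibility with $1\le\alpha\le t$ and the epoch structure.
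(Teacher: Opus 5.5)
Your proposal is correct and matches the paper's proof essentially step for step. The common steps are the per-epoch inequality $(t-1)C_h \le 2t\,\OPT + (3t+1)W_{h-1} - (t+1)W_h$ obtained from Lemma~\ref{lem:accounting} and Lemma~\ref{lem:distance} with $\alpha=t$, telescoping with the leftover $2tW_h$ controlled by $W_h \le t\,\OPT$, and optimization at $t=1+\sqrt2$. The only difference is that you drop $-(t+1)W_k$, whereas the paper uses $W_k \ge \OPT$ to get a slightly better additive constant; this does not affect the stated $(6+4\sqrt2)k$ bound.
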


\begin{proof}
RM is the special case of RG with $\alpha=t$. Retain the saturation
epochs from Section~\ref{sec:competitive-ratio}, and write
\[
W_h=w(M_{\tau_h}^*),\qquad W_0=0.
\]
For epoch $h$, let
\[
\Phi_h
=
\sum_{i=\tau_{h-1}+1}^{\tau_h}
\phi_{t,t}^{M_{i-1}^*}(P_i).
\]
A direct edge from $r_i$ to a nearest free server is a candidate path
of net-cost $t d_i$. Hence, by minimum-net-cost selection and
Lemma~\ref{lem:distance},
\[
\Phi_h
\le t\sum_{i=\tau_{h-1}+1}^{\tau_h}d_i
\le t(\OPT+W_{h-1}).
\]

Since $\alpha=t$, the greedy-cost term in
Lemma~\ref{lem:accounting} vanishes. Thus, the online cost $C_h$
incurred during epoch $h$ satisfies
\[
\begin{aligned}
C_h
&\le
\frac{2\Phi_h+(t+1)(W_{h-1}-W_h)}{t-1}\\
&\le
\frac{2t\,\OPT+(3t+1)W_{h-1}-(t+1)W_h}{t-1}.
\end{aligned}
\]
Summing over the $k$ epochs gives
\[
\ALG
\le
\frac{
2tk\,\OPT
+2t\sum_{h=1}^{k-1}W_h
-(t+1)W_k
}{t-1}.
\]
By Lemma~\ref{lem:dualbound}, $W_h\le t\,\OPT$ for every $h$, while
$W_k\ge\OPT$ because $M_{\tau_k}^*$ is a feasible matching of all
requests. Therefore,
\[
\ALG
\le
\left[
\frac{2t(t+1)}{t-1}k
-\frac{2t^2+t+1}{t-1}
\right]\OPT.
\]
The leading coefficient
\[
\frac{2t(t+1)}{t-1}
\]
is minimized over $t>1$ at $t=1+\sqrt{2}$. Substituting this value
yields
\[
\ALG
\le
\bigl((6+4\sqrt{2})k-(5+4\sqrt{2})\bigr)\OPT
\le
(6+4\sqrt{2})k\,\OPT.
\]
\end{proof}

\section{Input-sensitive guarantee}
\label{appdx:metric-sensitive}

The preceding competitive analysis depends only on the number of server
locations. We now give a complementary guarantee that adapts to their
geometry. For $S'\subseteq S$, let
$\diam(S')=\max_{u,v\in S'}d(u,v)$, and let $\tsp(S')$ be the length of a
minimum closed tour through $S'$. Define
\begin{equation}
  \mu_X(S)
  =\max_{\substack{S'\subseteq S\\ \diam(S')>0}}
       \frac{\tsp(S')}{\diam(S')}.
  \label{eq:is-mu}
\end{equation}
When $\diam(S)=0$, we set $\mu_X(S)=1$.

\begin{theorem}[Restatement of Theorem~\ref{thm:metric-sensitive}]
\label{thm:input-sensitive-restated}
\[
  \ALG
  \leq O_{t,\alpha}\!\left(\mu_X(S)\log^2 n\right)\OPT.
\]
\end{theorem}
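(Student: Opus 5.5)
We follow the Nayyar--Raghvendra analysis of RM, replacing the $n$ unit-capacity servers by the $k$ capacitated server locations. We also check that the discount $\alpha\le t$ on direct paths does not break any step.

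\textbf{Step 1: reduce to bounding $\Phi$.} Apply Lemma~\ref{lem:accounting} to the whole execution. With $w(M_0^*)=0$ and $w(M_n^*)\ge 0$, this gives
\[
(t-1)\,\ALG\le 2\Phi+2(t-\alpha)G .
\]
Lemma~\ref{lem:dualbound} bounds the total offline cost by $t\,\OPT$. Summing the Direct-cost bound (Lemma~\ref{lem:greedybudget}) over epochs would cost a factor of $k$, so we do not use it. Instead we charge greedy edges through $d_i\le\phi_{t,\alpha}(P_i)$ when $P_i$ is direct; since $\alpha\ge1$, this gives $G\le\Phi$. It therefore suffices to prove
\[
\Phi=\sum_i \phi_{t,\alpha}(P_i)\le O_{t,\alpha}\bigl(\mu_X(S)\log^2 n\bigr)\OPT .
\]
By Corollary~\ref{cor:minpath}, $\phi_{t,\alpha}(P_i)=y(r_i)$ after the $i$th update, and a request's dual never decreases afterwards. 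Hence $\Phi\le\sum_r y_{\mathrm{final}}(r)$.

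\textbf{Step 2: charge duals to $\OPT$ by scale.} Fix an optimal matching $B$. By the pairwise upper bound in Lemma~\ref{lem:dualbound}, $\sum_r y(r)\le t\,\OPT-\sum_s b(s)y(s)$. So we must bound the total negative server dual mass $\sum_s b(s)|y(s)|$.

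Group the arrivals by the geometric scale of $\phi_{t,\alpha}(P_i)$, giving $O(\log n)$ classes after discarding negligible scales. Inside one class at scale $\Delta$, we reproduce the RM argument. The set $A$ of servers whose duals decrease during the class forms, in the residual graph, a connected ``cluster'' of slack-tight edges. Along the shortest-path tree, tightness implies that server duals are tied to distances within $A$. Moreover, the region reached by Dijkstra has net-cost radius $<\Delta$, so every server in $A$ lies within distance $O(\Delta)$ (up to $t$-factors) of a tight path. We then split into two cases according to whether $\diam(A)\lesssim\Delta$ or not.

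\textbf{Step 3: the main obstacle} is the counting lemma: a single scale class charges at most $O(\mu_X(S)\log n)\OPT$. In RM the corresponding bound is paid through $\tsp$ of the matched servers, i.e.\ a quantity over $n$ points. Here the TSP is taken over \emph{distinct locations} $S'\subseteq S$, and $|S'|\le k$, which is why the bound depends on $\mu_X(S)$ and not on $n$.

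To prove it, we order the expensive arrivals of scale $\Delta$ and consider the servers they finally reach. Consecutive such servers are separated by distance $\gtrsim\Delta/t$ in OPT-plus-tour terms. Then, either $\OPT$ pays $\Omega(\Delta)$ per arrival, or the arrivals land at distinct server locations. In the second case, at most $\tsp(S')/\Delta\le\mu_X(S)\diam(S')/\Delta$ of them can occur before $\diam(S')$ is itself charged to $\OPT$. Capacities are handled by treating repeated hits at one location as saturating, which must be charged to $\OPT$ by the transportation argument from Lemma~\ref{lem:distance}. One extra logarithmic factor then comes from summing the $O(\log n)$ scales. Finally, for $d$-dimensional Euclidean spaces, the standard TSP bound gives $\tsp(S')=O(|S'|^{1-1/d}\diam(S'))$, so $\mu_X(S)=O(k^{1-1/d})$. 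This yields Theorem~\ref{thm:metric-sensitive}.
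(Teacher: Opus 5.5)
\paragraph{Verdict: the plan has a genuine gap.} Your Step~1 matches the paper's Lemma~\ref{lem:is-online-from-net}. You should say explicitly that $G\le\Phi$ needs every non-direct net-cost to be nonnegative, which follows from $\phi_i=y(r_i)\ge 0$. The proof breaks in Step~2. Replacing $\Phi$ by $\sum_r y_{\mathrm{final}}(r)$, and then asking for a bound on $\sum_s b(s)|y(s)|$, turns the goal into an inequality that is false.

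\paragraph{Counterexample to the Step~2 target.} Take two servers on a line: $A$ at $0$ with capacity $b$, and $B$ at $L$ with capacity $1$. Send $b$ requests at $0$, then one request at $\varepsilon$. The first $b$ arrivals are matched to $A$ at zero cost, and every dual stays $0$. On the last arrival, Dijkstra reaches $A$, and through the zero-slack backward edges every request at $0$, at distance $t\varepsilon$. Meanwhile $\delta=\alpha(L-\varepsilon)$, via the direct edge to $B$. So $y(A)$ drops to $-(\alpha(L-\varepsilon)-t\varepsilon)$, and each of the $b$ old requests gains that same amount. Hence $\sum_s b(s)|y(s)|\approx\alpha bL$, while $\OPT=L-\varepsilon$, $k=2$ and $\mu_X(S)=2$. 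The ratio is $\Theta(n)$.

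The reason is that final request duals absorb the increments of every later search that passes through their server. They cannot be charged. The paper only ever uses $\phi_i=y(r_i)$ frozen at the moment $r_i$ is processed.

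\paragraph{Step~3 lacks the charging mechanism.} The servers your algorithm's paths reach need not be distinct or separated. Many requests of the same scale can sit at one point. Your appeal to Lemma~\ref{lem:distance} for repeated hits is an epoch-level bound; summed over the $k$ epochs it costs $\Theta(k)\OPT$, which is exactly what the metric-sensitive bound must beat.

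\paragraph{The paper's argument.} It runs as follows.
\begin{itemize}
\item Requests with $\phi_i\le\OPT/n$ or $\phi_i\le16\alpha\,d(r_i,\opt(r_i))$ are charged to $\OPT$ directly. The remaining requests are grouped dyadically by $\phi_i$.
\item Within a group, requests are clustered greedily with radius $r_{\mathrm{IB}}^j/2$.
\item Each request $r_g$ of a cluster gets a path $Q_g$ in $M_{\mathrm{OPT}}\mathbin{\triangle}M^{(k)}$, truncated at the first free server. These paths are edge-disjoint within a cluster; this replaces vertex-disjointness and is how capacities are handled.
\item Free-server separation (Lemma~\ref{lem:is-free-separation}) gives $\delta_g\ge r_{\mathrm{IB}}^j$ and $w(Q_g)\ge\delta_g/2$.
\item Summing the feasibility constraints along $Q_g$ shows that at least a $1/(t+1)$ fraction of $w(Q_g)$ is $\OPT$ weight.
\item Requests are then grouped by the scale of $\delta_g$; this is the second logarithm.
\item Weighted Vitali selection on the outer balls prevents double-counting $\OPT$.
\item The number of balls intersecting a selected ball is bounded by $\tsp$ of the $\OPT$-servers of the cluster centers. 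These servers are pairwise more than $r_{\mathrm{IB}}^j/4$ apart, because each center lies within $r_{\mathrm{IB}}^j/8$ of its $\OPT$-server and centers are $r_{\mathrm{IB}}^j/2$-separated.
\end{itemize}
Your dichotomy between ``$\OPT$ pays'' and ``distinct locations'' gestures at the first step, but none of the other ingredients appears in your sketch.
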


For server locations in Euclidean spaces of dimension $d>1$,
$\mu_X(S)=O_d(k^{1-1/d})$, yielding the desired competitive ratio in Theorem~\ref{thm:metric-sensitive}.

We use the notation introduced above. Thus, $M_i$ and $M_i^*$ are the online
and offline matchings after request $r_i$ is processed, $P_i$ is the selected
augmenting path, and
\[
  \phi_i=\phi_{t,\alpha}^{M_{i-1}^*}(P_i).
\]
In particular, $\ALG=w(M_n)$. Fix a minimum-cost feasible matching
$M_{\mathrm{OPT}}$ of all requests, so that
$w(M_{\mathrm{OPT}})=\OPT$, and let $\opt(r)$ be the server to which $r$ is
assigned in $M_{\mathrm{OPT}}$. For a request set $R'\subseteq R$, let
$\sigma(R')$ denote its requests in arrival order.

\subsection{Algorithmic facts used by the geometric argument}

We first isolate the three consequences of the current primal--dual
implementation that replace the corresponding RM properties in
\citet{Nayyar2017Mu}.

\begin{lemma}[Basic net-cost bounds]
\label{lem:is-basic}
For every arrival $i$,
\[
  0\leq \phi_i\leq t\OPT,
  \qquad
  w(M_i^*)\leq t\OPT.
\]
Moreover, immediately before augmentation in phase $i$,
\[
  \phi_i=y(r_i).
\]
\end{lemma}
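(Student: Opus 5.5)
The plan is to read all three claims off the primal--dual machinery of Section~\ref{sec:primal-dual}, proving the equality $\phi_i=y(r_i)$ first because the other bounds follow from it.

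\textbf{The identity $\phi_i=y(r_i)$.} This is exactly the content of the proof of Corollary~\ref{cor:minpath}. There, with the dual weights taken after the update of phase $i$ but before augmentation, we showed that the selected path $P_i$ has no internal free server. Hence summing the slack definitions along $P_i$ telescopes with equality. By Lemma~\ref{lem:feasibility}, every residual edge of $P_i$ has zero slack. Its terminal server $s$ is free, so $y(s)=0$. Together these give $\phi_{t,\alpha}^{M_{i-1}^*}(P_i)=y(r_i)$. I will simply cite that argument.

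\textbf{The lower bound $\phi_i\ge 0$.} Before the update, $y(r_i)$ is initialized to $0$. During the update, $r_i$ is the Dijkstra source, so $\ell(r_i)=0\le\delta$, and its weight becomes $\delta\ge 0$, since all residual weights are nonnegative slacks. Therefore $\phi_i=y(r_i)=\delta\ge 0$.

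\textbf{The upper bound $\phi_i\le t\,\OPT$.} I would use minimum-net-cost selection. The direct edge from $r_i$ to a nearest free server is a candidate path, so $\phi_i\le\alpha d_i\le t\,d_i$. It then remains to show $d_i\le\OPT$. Fix $M_{\mathrm{OPT}}$ and restrict it to $R_i$, calling the restriction $B$. Build $M_{i-1}^*\oplus B$ and orient it as in the proof of Lemma~\ref{lem:distance}. The new request $r_i$ has outdegree one and indegree zero, so it starts an alternating path in the decomposition. That path ends at a server $s'$ with $\deg_B(s')>\deg_{M_{i-1}^*}(s')$. Because $\deg_B(s')\le b(s')$, this server is free before arrival $i$. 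The path consists of edges of $B$ and of $M_{i-1}^*$, so the triangle inequality gives $d_i\le w(B)+w(M_{i-1}^*)$. This bound is too weak on its own, so I will instead work with duals. Since $y(s')=0$, the slack argument of Corollary~\ref{cor:minpath} applied to the alternating path $Q$ (all slacks nonnegative) gives $y(r_i)\le\phi(Q)$. Telescoping dual sums along $Q$ then gives $y(r_i)\le\sum_{(r,s)\in B\cap Q}(y(r)+y(s))-\sum_{(r,s)\in M^*\cap Q}(y(r)+y(s))$. Each $B$-edge term is at most $t\,d$ by the upper bound~\eqref{eq:upper}, and each $M^*$-edge term is at least $d\ge 0$ by~\eqref{eq:lower}. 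Hence $y(r_i)\le t\,w(B)\le t\,\OPT$.

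\textbf{The bound $w(M_i^*)\le t\,\OPT$.} This is immediate from Lemma~\ref{lem:dualbound}.

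\textbf{Main obstacle.} The main obstacle is the $\phi_i\le t\,\OPT$ step. Its telescoping must handle the first edge of $Q$, which may carry the discounted coefficient $\alpha\le t$; that only helps. It must also use the updated duals, which remain feasible by Lemma~\ref{lem:feasibility}.
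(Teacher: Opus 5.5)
Your argument is correct and is essentially the paper's. Both proofs bound $\phi_i$ by $t$ times the optimal-edge weight of an alternating path $Q$ from $r_i$ to a free server, taken from the oriented symmetric difference of $M_{i-1}^*$ and the optimal matching: the paper uses minimality of $P_i$ and the net-cost formula (after truncating $Q$ at its first free server and splitting into direct and non-direct cases), while you telescope the post-update feasible duals, which is the same weak-duality step written on the dual side and lets you skip the truncation and case split. You should delete the opening detour through $d_i$, since $d_i\le\OPT$ is false in general (unit-capacity servers at $0$ and $D$, $r_1$ at $x\in(0,D/2)$, then $r_2$ at $0$, give $d_2=D>D-x=\OPT$), so that route is a dead end rather than merely too weak.
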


\begin{proof}
The equality $\phi_i=y(r_i)$ follows from Lemma~\ref{lem:feasibility} and
Corollary~\ref{cor:minpath}: after the dual update, every residual edge of the selected path has zero
slack. Request weights are initialized to zero and only increase, so
$\phi_i=y(r_i)\geq0$. The bound
\[
  w(M_i^*)\leq t\OPT
\]
is Lemma~\ref{lem:dualbound}.

It remains to prove $\phi_i\leq t\OPT$. Orient the edges of
\[
  M_{\mathrm{OPT}}\mathbin{\triangle}M_{i-1}^*
\]
from requests to servers when they belong to $M_{\mathrm{OPT}}$, and from
servers to requests when they belong to $M_{i-1}^*$. Pair incoming and
outgoing edges at requests and servers, and consider the resulting directed
trail beginning at the new request $r_i$. Repeatedly erase any closed directed
subtrail. The result is a simple alternating path $Q$ ending at a server whose
$M_{i-1}^*$-load is smaller than its $M_{\mathrm{OPT}}$-load, and hence at a
free server. Truncate $Q$ at the first free server it encounters.

If $Q$ is direct, then
\[
  \phi_{t,\alpha}^{M_{i-1}^*}(Q)
  =\alpha w(Q)
  \leq t\,w(M_{\mathrm{OPT}}\cap Q)
  \leq t\OPT.
\]
If $Q$ is non-direct, then
\[
  \phi_{t,\alpha}^{M_{i-1}^*}(Q)
  =t\,w(Q\setminus M_{i-1}^*)-w(Q\cap M_{i-1}^*)
  \leq t\,w(M_{\mathrm{OPT}}\cap Q)
  \leq t\OPT.
\]
By Corollary~\ref{cor:minpath}, $P_i$ is a minimum-$(t,\alpha)$-net-cost
augmenting path. Therefore
\[
  \phi_i
  \leq\phi_{t,\alpha}^{M_{i-1}^*}(Q)
  \leq t\OPT.
\]
\end{proof}

\begin{lemma}[Online cost from net cost]
\label{lem:is-online-from-net}
Let
\[
  \Phi=\sum_{i=1}^n\phi_i.
\]
Then
\[
  w(M_n)\leq \frac{2t}{\alpha(t-1)}\,\Phi.
\]
\end{lemma}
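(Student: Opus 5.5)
The plan is to apply the cost-accounting identity of Lemma~\ref{lem:accounting} to the whole execution, and then absorb the greedy term $G$ into $\Phi$ using the nonnegativity of the individual net-costs.

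\emph{Step 1: cost accounting over the whole run.} Take the sequence of arrivals to be all $n$ requests. Then $M^*_{\mathrm{start}}=\emptyset$, so $w(M^*_{\mathrm{start}})=0$, and $w(M^*_{\mathrm{end}})=w(M_n^*)\ge 0$. The term $(t+1)\bigl(w(M^*_{\mathrm{start}})-w(M^*_{\mathrm{end}})\bigr)$ is therefore nonpositive and can be dropped. This gives
\[
(t-1)\,w(M_n)\le 2\Phi+2(t-\alpha)G ,
\]
where $G$ is the total weight of the selected greedy (direct) edges.

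\emph{Step 2: bound $G$ by $\Phi$.} Every direct selected path $P_i=\{(r_i,s)\}$ has net-cost $\phi_i=\alpha\,d(r_i,s)$, so the direct paths contribute exactly $\alpha G$ to $\Phi$. By Lemma~\ref{lem:is-basic}, every $\phi_i\ge 0$, so the non-direct paths contribute a nonnegative amount. Hence $\alpha G\le\Phi$, that is, $G\le \Phi/\alpha$.

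\emph{Step 3: combine.} Since $t-\alpha\ge 0$, substituting Step 2 into Step 1 gives
\[
(t-1)\,w(M_n)\le 2\Phi+\frac{2(t-\alpha)}{\alpha}\Phi=\frac{2t}{\alpha}\Phi .
\]
Dividing by $t-1>0$ yields the claimed bound.

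The only point needing care is the nonnegativity of the net-costs of non-direct paths in Step 2. A priori such a net-cost $tI-D$ could be negative, and then the direct contributions alone could exceed $\Phi$. This is exactly what Lemma~\ref{lem:is-basic} supplies, through $\phi_i=y(r_i)\ge 0$. With that fact available, the argument is a short substitution.
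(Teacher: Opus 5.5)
Your proposal is correct and follows essentially the same route as the paper: apply Lemma~\ref{lem:accounting} to the whole run, drop the nonpositive terminal term, then use $\alpha G\le\Phi$, which holds because the non-direct net-costs are nonnegative by Lemma~\ref{lem:is-basic}. Like the paper, you substitute $G\le\Phi/\alpha$ using $t-\alpha\ge 0$ to reach $\frac{2t}{\alpha(t-1)}\Phi$.
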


\begin{proof}
Let $G$ be the total weight of the selected direct paths. The total
contribution of these paths to $\Phi$ is $\alpha G$. Every selected
non-direct path has nonnegative net cost by
Lemma~\ref{lem:is-basic}. Consequently,
\[
  \alpha G\leq\Phi.
\]

Apply Lemma~\ref{lem:accounting} to the complete execution. Since the initial
offline matching is empty, it gives
\[
  w(M_n)
  \leq
  \frac{2\Phi+2(t-\alpha)G-(t+1)w(M_n^*)}{t-1}.
\]
Dropping the nonpositive terminal term and using
$G\leq\Phi/\alpha$, we obtain
\[
  w(M_n)
  \leq
  \frac{2\Phi+2(t-\alpha)\Phi/\alpha}{t-1}
  =
  \frac{2t}{\alpha(t-1)}\,\Phi.
\]
\end{proof}

\begin{lemma}[Free-server separation]
\label{lem:is-free-separation}
Suppose request $r$ is about to be processed and $s$ is free with respect to
the current offline matching. Then
\[
  d(r,s)\geq\frac{\phi_{t,\alpha}(r)}{\alpha},
\]
where $\phi_{t,\alpha}(r)$ denotes the net cost selected while processing $r$.
\end{lemma}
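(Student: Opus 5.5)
The plan is to compare the selected net cost with the direct augmenting path $\{(r,s)\}$. Since $s$ is free with respect to the current offline matching, and $r$ is a newly arrived free request, the single edge $(r,s)$ is a direct augmenting path from $r$ to a free server. By \eqref{eq:netcost} its $(t,\alpha)$-net-cost is exactly $\alpha d(r,s)$.

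By Corollary~\ref{cor:minpath}, the path $P$ selected while processing $r$ has minimum $(t,\alpha)$-net-cost among all augmenting paths from $r$ to a free server. Hence
\[
\phi_{t,\alpha}(r)=\phi_{t,\alpha}^{M^*}(P)\le \phi_{t,\alpha}^{M^*}(\{(r,s)\})=\alpha d(r,s).
\]
Dividing by $\alpha\ge1>0$ gives the claim.

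An equivalent dual view: by Lemma~\ref{lem:is-basic}, $\phi_{t,\alpha}(r)=y(r)$ after the update. Feasibility (Lemma~\ref{lem:feasibility}) on the free pair $(r,s)$ gives $y(r)+y(s)\le\alpha d(r,s)$. Free servers keep $y(s)=0$, so again $y(r)\le \alpha d(r,s)$.

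The only point needing care is confirming that "free with respect to the current offline matching" means the server has residual capacity. This is what makes the single edge a legal direct augmenting path. It holds by definition, and because $M$ and $M^*$ share free servers, so no real obstacle is expected.
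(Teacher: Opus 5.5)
Your proof is correct. Your main argument is primal: the single edge $(r,s)$ is a direct augmenting path with net cost $\alpha d(r,s)$, so minimum-net-cost selection (Step~1 of the algorithm, realized by Corollary~\ref{cor:minpath}) gives $\phi_{t,\alpha}(r)\le\alpha d(r,s)$.

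The paper argues dually. It uses the identity $\phi_{t,\alpha}(r)=y(r)$ from Lemma~\ref{lem:is-basic}, the free--free constraint $y(r)+y(s)\le\alpha d(r,s)$, and the fact that free servers keep $y(s)=0$. This is exactly your ``equivalent dual view.''

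Your primal route is more elementary. It needs only the specification that a minimum-net-cost path is chosen, so it holds for any implementation and any tie-breaking. It is the same comparison the paper uses to obtain $\phi_{t,\alpha}(P_i)\le\alpha d_i$ before \eqref{eq:modifiedbudget}.

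The paper's dual phrasing fits that subsection, which deliberately expresses the needed algorithmic facts through the dual weights of the primal--dual implementation. Both routes ultimately rest on the same ingredients, Lemma~\ref{lem:feasibility} and Corollary~\ref{cor:minpath}.
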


\begin{proof}
Immediately before augmentation, $r$ and $s$ are both free. By
Lemma~\ref{lem:is-basic}, $\phi_{t,\alpha}(r)=y(r)$. The free--free dual
constraint and the free-server normalization therefore give
\[
  \phi_{t,\alpha}(r)
  =y(r)
  \leq \alpha d(r,s)-y(s)
  =\alpha d(r,s).
\]
\end{proof}

The denominator in Lemma~\ref{lem:is-free-separation} is $\alpha$, rather than
$t$ as in the RM proof, because Robustified Greedy discounts direct
free--free edges.

\subsection{Dyadic net-cost groups}

Assume henceforth that $\OPT>0$. If $\OPT=0$, Lemmas
\ref{lem:is-basic} and~\ref{lem:is-online-from-net} immediately give
$w(M_n)=0$.

Partition the requests into $R'$ and $R''$ as follows. Put $r_i\in R'$ if at
least one of
\begin{equation}
  \phi_i\leq\frac{\OPT}{n},
  \qquad
  \phi_i\leq16\alpha\,d(r_i,\opt(r_i))
  \label{eq:is-easy}
\end{equation}
holds, and put all remaining requests in $R''$. Then
\begin{equation}
  \sum_{r_i\in R'}\phi_i
  \leq(16\alpha+1)\OPT.
  \label{eq:is-easy-sum}
\end{equation}
Indeed, charge the requests satisfying the first condition to the $n$ copies
of $\OPT/n$. Charge every remaining request in $R'$ to its distinct edge in
$M_{\mathrm{OPT}}$ using the second condition.

Partition $R''$ into groups $R_1,\ldots,R_m$, where $r_i\in R_j$ if
\begin{equation}
  \frac{2^{j-1}\OPT}{n}
  \leq\phi_i
  <\frac{2^j\OPT}{n}.
  \label{eq:is-net-group}
\end{equation}
By Lemma~\ref{lem:is-basic}, we may take
\begin{equation}
  m=1+\left\lceil\log_2(tn)\right\rceil.
  \label{eq:is-number-net-groups}
\end{equation}
We will prove that, for every $j$,
\begin{equation}
  \sum_{r_i\in R_j}\phi_i
  \leq
  O_{t,\alpha}\!\left(\mu_X(S)\log(2n)\right)\OPT.
  \label{eq:is-one-net-group-goal}
\end{equation}

\subsection{Clusters and optimal paths}

Fix a group $R_j$, write
\[
  \sigma(R_j)=(r_1',\ldots,r_N'),
\]
and define
\begin{equation}
  r_{\mathrm{IB}}^j
  =\frac{2^{j-1}\OPT}{\alpha n}.
  \label{eq:is-inner-radius}
\end{equation}
Process $\sigma(R_j)$ in arrival order. The first request starts a cluster and
is its center. Subsequently, assign a request to an existing cluster if its
distance from the nearest center is less than $r_{\mathrm{IB}}^j/2$;
otherwise, start a new cluster with that request as its center. Denote the
resulting clusters by $C_1,\ldots,C_{\widetilde{k}}$, and denote the center of
$C_k$ by $c^{(k)}$. The construction gives
\begin{align}
  d(r,c^{(k)})
  &<\frac{r_{\mathrm{IB}}^j}{2}
  &&\text{for every }r\in C_k,
  \label{eq:is-cluster-radius}\\
  d(c^{(k)},c^{(k')})
  &\geq\frac{r_{\mathrm{IB}}^j}{2}
  &&\text{for }k\neq k'.
  \label{eq:is-center-separation}
\end{align}

Let $M^{(k)}$ be the offline matching immediately before $c^{(k)}$ arrives.
Because $c^{(k)}$ is the first request of $C_k$ in arrival order, every request
of $C_k$ is free with respect to $M^{(k)}$.

Orient
\[
  M_{\mathrm{OPT}}\mathbin{\triangle}M^{(k)}
\]
as in the proof of Lemma~\ref{lem:is-basic}, and fix one edge-disjoint
decomposition into directed alternating trails and cycles. Repeatedly erase
closed directed subtrails from each trail. Every arrived request is balanced,
every unarrived request has one unit of excess outdegree, and a server $s$ has
indegree excess
\[
  b(s)-\deg_{M^{(k)}}(s)\geq0,
\]
because $M_{\mathrm{OPT}}$ saturates every server. Thus every
$r_g\in C_k$ is a source of the resulting decomposition, and its path ends at
a server with positive indegree excess, equivalently at a server $s$ with
\[
  \deg_{M^{(k)}}(s)<b(s).
\]
The terminal server is therefore free with respect to $M^{(k)}$.

Let $Q_g$ be this path, truncated when it first reaches a free server. We call
$Q_g$ the \emph{optimal path} of $r_g$. Define
\begin{equation}
  \delta_g
  =\max_{v\in V(Q_g)}d(c^{(k)},v).
  \label{eq:is-delta}
\end{equation}
The paths $Q_g$, for $r_g\in C_k$, are edge-disjoint. This is the capacitated
analogue of the vertex-disjoint path family used by \citet{Nayyar2017Mu}.

\begin{lemma}[Radius of an optimal path]
\label{lem:is-delta-range}
For every $r_g\in C_k$,
\[
  r_{\mathrm{IB}}^j
  \leq\delta_g
  \leq
  \left(t+1+\frac{t}{2\alpha}\right)\OPT.
\]
\end{lemma}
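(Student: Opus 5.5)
Both inequalities come from looking at the free server at the end of $Q_g$ and at the vertices of $Q_g$. For the lower bound, we use free-server separation. For the upper bound, we use the triangle inequality from $c^{(k)}$ through $r_g$ and along $Q_g$.

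\emph{Lower bound.} Let $s$ be the terminal server of $Q_g$. It is free with respect to $M^{(k)}$, the offline matching just before $c^{(k)}$ arrives. Apply Lemma~\ref{lem:is-free-separation} to the arrival of $c^{(k)}$ and the server $s$. This gives
\[
d(c^{(k)},s)\ge \phi_{t,\alpha}(c^{(k)})/\alpha .
\]
Since $c^{(k)}\in R_j$, \eqref{eq:is-net-group} gives $\phi_{t,\alpha}(c^{(k)})\ge 2^{j-1}\OPT/n$. Hence
\[
d(c^{(k)},s)\ge 2^{j-1}\OPT/(\alpha n)=r_{\mathrm{IB}}^j .
\]
Because $s\in V(Q_g)$, we conclude $\delta_g\ge r_{\mathrm{IB}}^j$.

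\emph{Upper bound.} Take any vertex $v$ of $Q_g$. By the triangle inequality,
\[
d(c^{(k)},v)\le d(c^{(k)},r_g)+d(r_g,v)\le d(c^{(k)},r_g)+w(Q_g).
\]
We bound the two terms separately.

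For the first term, \eqref{eq:is-cluster-radius} gives $d(c^{(k)},r_g)<r_{\mathrm{IB}}^j/2$. Lemma~\ref{lem:is-basic} gives $\phi\le t\OPT$ for every request, so $2^{j-1}\OPT/n\le t\OPT$. Together these give $r_{\mathrm{IB}}^j\le t\OPT/\alpha$, and therefore $d(c^{(k)},r_g)\le t\OPT/(2\alpha)$.

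For the second term, split $w(Q_g)$ into its $M_{\mathrm{OPT}}$ part and its $M^{(k)}$ part:
\[
w(Q_g)\le w(Q_g\cap M_{\mathrm{OPT}})+w(Q_g\cap M^{(k)})\le \OPT+w(M^{(k)}).
\]
The offline matching $M^{(k)}$ equals $M_i^*$ for some $i$, or is empty when $c^{(k)}$ is the first request. Lemma~\ref{lem:is-basic} then gives $w(M^{(k)})\le t\OPT$. So $w(Q_g)\le (t+1)\OPT$.

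Adding the two bounds gives
\[
d(c^{(k)},v)\le \left(t+1+\frac{t}{2\alpha}\right)\OPT
\]
for every $v\in V(Q_g)$, which proves the upper bound.

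The main point needing care is the lower bound. Free-server separation must be applied at the arrival of $c^{(k)}$, not of $r_g$: the offline matching $M^{(k)}$ defining $Q_g$ is the one just before $c^{(k)}$ is processed. This works because the terminal server of $Q_g$ is, by construction, free with respect to that same $M^{(k)}$, and $c^{(k)}$ belongs to $R_j$.
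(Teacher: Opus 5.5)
Your proof is correct and follows the paper's argument essentially step for step. For the lower bound you apply free-server separation at the arrival of $c^{(k)}$ to the terminal server of $Q_g$. For the upper bound you use the triangle inequality through $r_g$, combining $w(Q_g)\le \OPT+w(M^{(k)})\le (t+1)\OPT$ with $d(c^{(k)},r_g)<r_{\mathrm{IB}}^j/2\le t\OPT/(2\alpha)$.
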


\begin{proof}
The terminal server of $Q_g$ is free with respect to $M^{(k)}$. By
Lemma~\ref{lem:is-free-separation}, every such server $s$ satisfies
\[
  d(c^{(k)},s)
  \geq\frac{\phi_{t,\alpha}(c^{(k)})}{\alpha}
  \geq\frac{2^{j-1}\OPT}{\alpha n}
  =r_{\mathrm{IB}}^j.
\]
This proves the lower bound.

The path $Q_g$ uses only edges of $M_{\mathrm{OPT}}$ and $M^{(k)}$. Therefore,
by Lemma~\ref{lem:is-basic},
\[
  w(Q_g)
  \leq w(M_{\mathrm{OPT}})+w(M^{(k)})
  \leq(t+1)\OPT.
\]
Furthermore, by~\eqref{eq:is-cluster-radius},
\[
  d(c^{(k)},r_g)
  <\frac{r_{\mathrm{IB}}^j}{2}
  \leq\frac{\phi_{t,\alpha}(r_g)}{2\alpha}
  \leq\frac{t}{2\alpha}\OPT.
\]
The triangle inequality now gives the upper bound.
\end{proof}

\begin{lemma}[Optimal-path density]
\label{lem:is-path-density}
For every $r_g\in C_k$,
\begin{align}
  w(Q_g)
  &\geq\frac{\delta_g}{2},
  \label{eq:is-path-length}\\
  w(M_{\mathrm{OPT}}\cap Q_g)
  &\geq\frac{w(Q_g)}{t+1}.
  \label{eq:is-opt-fraction}
\end{align}
\end{lemma}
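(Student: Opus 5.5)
The two inequalities need different tools. I will prove \eqref{eq:is-path-length} using only the triangle inequality and the cluster radius, and \eqref{eq:is-opt-fraction} by telescoping the $(t,\alpha)$-feasibility constraints along $Q_g$, using the dual weights held just before $c^{(k)}$ arrives.

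\emph{Step 1: \eqref{eq:is-path-length}.} The path $Q_g$ starts at $r_g$. Take any vertex $v\in V(Q_g)$. The subpath of $Q_g$ from $r_g$ to $v$ has weight at most $w(Q_g)$, so the triangle inequality and \eqref{eq:is-cluster-radius} give
\[
d(c^{(k)},v)\le d(c^{(k)},r_g)+w(Q_g)<\frac{r_{\mathrm{IB}}^j}{2}+w(Q_g).
\]
Maximizing over $v$ gives $\delta_g< r_{\mathrm{IB}}^j/2+w(Q_g)$. The lower bound of Lemma~\ref{lem:is-delta-range} gives $r_{\mathrm{IB}}^j/2\le\delta_g/2$. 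Hence $\delta_g<\delta_g/2+w(Q_g)$, which is \eqref{eq:is-path-length}.

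\emph{Step 2: \eqref{eq:is-opt-fraction}.} Let $y$ be the dual weights at the moment $M^{(k)}$ is current. These are $(t,\alpha)$-feasible, free servers have $y=0$, and all server weights are nonpositive. The request $r_g$ has not yet arrived, so I give it the fictitious weight $y(r_g)=0$. This is consistent with every constraint \eqref{eq:upper} at $r_g$, because each bound there is $\alpha d$ or $t d$ plus a nonpositive server weight.

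The path $Q_g$ alternates between two kinds of edges. Its forward edges are the edges of $M_{\mathrm{OPT}}\setminus M^{(k)}$, running from requests to servers. Its backward edges are the edges of $M^{(k)}\setminus M_{\mathrm{OPT}}$. The constraints give the following bounds:
\begin{itemize}
\item each forward edge $(r',s')$ has $y(r')+y(s')\le t\,d(r',s')$, where I use $\alpha\le t$ when both endpoints are free;
\item each backward edge $(r',s')$ has $y(r')+y(s')\ge d(r',s')$, by \eqref{eq:lower}.
\end{itemize}
I then form (sum over forward edges of $t\,d$) minus (sum over backward edges of $d$). Every internal vertex of the trail is entered by one edge and left by the other. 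Its dual weight therefore appears once with a plus sign and once with a minus sign, and it cancels. This holds even if a server is visited more than once, because each visit contributes one cancelling pair. Only the endpoints survive:
\[
t\,w(Q_g\cap M_{\mathrm{OPT}})-w(Q_g\cap M^{(k)})\ \ge\ y(r_g)+y(s_{\mathrm{end}})=0,
\]
since the terminal server $s_{\mathrm{end}}$ is free. Consequently
\[
w(Q_g)=w(Q_g\cap M_{\mathrm{OPT}})+w(Q_g\cap M^{(k)})\le (t+1)\,w(Q_g\cap M_{\mathrm{OPT}}).
\]

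\emph{Main obstacle.} The step needing the most care is the telescoping in Step 2. The path $Q_g$ is only an edge-disjoint trail from the decomposition, cut off at its first free server, and $r_g$ carries no real dual weight. I must therefore check three things: the fictitious weight $y(r_g)=0$ keeps every relevant constraint valid; the sign pattern cancels at each internal visit; and no internal vertex is a free server, so only the endpoint terms remain.
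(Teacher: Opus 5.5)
Your proposal is correct and follows the paper's proof essentially step for step. For the first inequality you use the triangle inequality together with the cluster radius and the lower bound $r_{\mathrm{IB}}^j\le\delta_g$. For the second you telescope the $(t,\alpha)$-feasibility constraints of the duals held with $M^{(k)}$ (extended by zero to unarrived requests) along $Q_g$, which gives $t\,w(M_{\mathrm{OPT}}\cap Q_g)\ge w(M^{(k)}\cap Q_g)$.

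The paper treats a direct $Q_g$ separately, whereas your telescoping covers that case automatically. Also, internal dual weights cancel whether or not an internal server is free, so your last check in the "main obstacle" paragraph is harmless but unnecessary.
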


\begin{proof}
By the triangle inequality,
\[
  \delta_g
  \leq d(c^{(k)},r_g)+w(Q_g).
\]
Equations~\eqref{eq:is-cluster-radius} and the lower bound in
Lemma~\ref{lem:is-delta-range} imply
\[
  d(c^{(k)},r_g)<\frac{\delta_g}{2},
\]
which proves~\eqref{eq:is-path-length}.

For~\eqref{eq:is-opt-fraction}, the claim is immediate when $Q_g$ is direct,
because its only edge belongs to $M_{\mathrm{OPT}}$. Suppose therefore that
$Q_g$ is non-direct. Consider the feasible dual vector associated with
$M^{(k)}$, extended by assigning dual weight zero to requests that have not
yet arrived. This extension remains feasible because all server weights are
nonpositive.

Since $\alpha\leq t$, the free--free constraint is stronger than the ordinary
$t$-constraint. Consequently, every forward edge $(r,s)$ of $Q_g$ satisfies
\[
  y(r)+y(s)\leq t\,d(r,s),
\]
regardless of whether its endpoints are free. Every backward matching edge
satisfies
\[
  y(r)+y(s)\geq d(r,s).
\]
Summing these inequalities along $Q_g$ cancels the internal dual weights. Its
initial request has dual weight zero, while its terminal server is free and
therefore also has dual weight zero. Hence
\[
  t\,w(Q_g\setminus M^{(k)})
  -w(Q_g\cap M^{(k)})
  \geq0.
\]

The forward edges belong to $M_{\mathrm{OPT}}$, and the backward edges belong
to $M^{(k)}$. Write
\[
  A=w(M_{\mathrm{OPT}}\cap Q_g),
  \qquad
  B=w(M^{(k)}\cap Q_g).
\]
The preceding inequality is $tA\geq B$. Therefore
\[
  w(Q_g)=A+B\leq(t+1)A,
\]
which proves~\eqref{eq:is-opt-fraction}.
\end{proof}

A direct optimal path has optimal-edge fraction one. For a non-direct optimal
path, the ordinary $t$-bound remains valid on every forward edge because
$\alpha\leq t$.

\subsection{Outer groups and weighted balls}

Partition the requests of $R_j$ into outer groups $R^l$, for $l\geq j$, by
placing $r_g$ in $R^l$ when
\begin{equation}
  \frac{2^{l-1}\OPT}{\alpha n}
  \leq\delta_g
  <\frac{2^l\OPT}{\alpha n}.
  \label{eq:is-outer-group}
\end{equation}
Lemma~\ref{lem:is-delta-range} implies that there are at most
\begin{equation}
  L
  =
  1+\left\lceil
       \log_2\!\left(
         \alpha n\left(t+1+\frac{t}{2\alpha}\right)
       \right)
     \right\rceil
  \label{eq:is-number-outer-groups}
\end{equation}
nonempty outer-group indices.

Fix $l$. Let
\[
  C_k^l=C_k\cap R^l,
  \qquad
  \beta_k^l=|C_k^l|,
\]
omitting empty sets. Associate with $C_k^l$ two concentric weighted balls
centered at $c^{(k)}$:
\begin{align*}
  \mathrm{IB}_k^l
  &=B(c^{(k)},r_{\mathrm{IB}}^j),
  &\text{weight }&\beta_k^l,\\
  \mathrm{OB}_k^l
  &=B(c^{(k)},r_{\mathrm{OB}}^l),
  &r_{\mathrm{OB}}^l&=\frac{2^l\OPT}{\alpha n},
  \qquad \text{weight }\beta_k^l.
\end{align*}
Because $l\geq j$,
\[
  r_{\mathrm{IB}}^j<r_{\mathrm{OB}}^l.
\]
Moreover, every optimal path $Q_g$ with $r_g\in C_k^l$ is contained in
$\mathrm{OB}_k^l$.

\begin{lemma}[Inner-ball bounds]
\label{lem:is-inner-balls}
For the fixed outer group $R^l$,
\begin{align}
  \sum_{r_i\in R^l}\phi_i
  &<2\alpha\sum_k\beta_k^l r_{\mathrm{IB}}^j,
  \label{eq:is-inner-pays-net}\\
  |\widetilde S|\frac{r_{\mathrm{IB}}^j}{4}
  &\leq\tsp(\widetilde S)
  \label{eq:is-inner-pays-tsp}
\end{align}
for every set $\widetilde R$ of at least two cluster centers, where
\[
  \widetilde S=\opt(\widetilde R).
\]
\end{lemma}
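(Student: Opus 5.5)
The plan is to prove the two inequalities separately: \eqref{eq:is-inner-pays-net} follows directly from the dyadic grouping \eqref{eq:is-net-group}, and \eqref{eq:is-inner-pays-tsp} from the fact that requests in $R''$ are close to their optimal servers, combined with the center separation \eqref{eq:is-center-separation}.

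\textbf{Step 1: inequality \eqref{eq:is-inner-pays-net}.} Every $r_i\in R^l$ lies in $R_j$. By \eqref{eq:is-net-group} and the definition \eqref{eq:is-inner-radius},
\[
\phi_i<\frac{2^j\OPT}{n}=2\alpha\,r_{\mathrm{IB}}^j .
\]
The clusters partition $R_j$, so $R^l$ is the disjoint union of the sets $C_k^l$, and $C_k^l$ has $\beta_k^l$ elements. Summing the strict bound over $R^l$ therefore gives $\sum_{r_i\in R^l}\phi_i<2\alpha\sum_k\beta_k^l r_{\mathrm{IB}}^j$. The inequality is strict because $R^l$ is nonempty; if it were empty, the statement would be vacuous or would be read as a non-strict bound.

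\textbf{Step 2: each center is close to its optimal server.} Every center $c=c^{(k)}$ lies in $R''$, so the second condition in \eqref{eq:is-easy} fails for it. Hence
\[
d(c,\opt(c))<\frac{\phi_{t,\alpha}(c)}{16\alpha}<\frac{2\alpha r_{\mathrm{IB}}^j}{16\alpha}=\frac{r_{\mathrm{IB}}^j}{8},
\]
where the second inequality again uses the upper end of \eqref{eq:is-net-group}.

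\textbf{Step 3: the optimal servers of distinct centers are distinct and well separated.} Take two distinct centers $c\neq c'$ in $\widetilde R$. By \eqref{eq:is-center-separation} and the triangle inequality,
\[
d(\opt(c),\opt(c'))\ge d(c,c')-d(c,\opt(c))-d(c',\opt(c'))>\frac{r_{\mathrm{IB}}^j}{2}-\frac{r_{\mathrm{IB}}^j}{4}=\frac{r_{\mathrm{IB}}^j}{4}.
\]
This distance is positive, so $\opt$ is injective on the centers, and $|\widetilde S|=|\widetilde R|\ge2$. The points of $\widetilde S$ are also pairwise at distance greater than $r_{\mathrm{IB}}^j/4$.

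\textbf{Step 4: bound the tour.} A minimum closed tour through $\widetilde S$ visits its $|\widetilde S|\ge2$ distinct points in some cyclic order. Its length is at least the sum of the $|\widetilde S|$ distances between cyclically consecutive points, and each of these exceeds $r_{\mathrm{IB}}^j/4$ by Step 3. This gives \eqref{eq:is-inner-pays-tsp}.

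The only delicate point is Step 2. There I have to use the exact constant $16\alpha$ in \eqref{eq:is-easy} together with the upper end of the dyadic band \eqref{eq:is-net-group}, so that the offset $r_{\mathrm{IB}}^j/8$ is small relative to the separation $r_{\mathrm{IB}}^j/2$. I expect the resulting constants to fit with the required margin.
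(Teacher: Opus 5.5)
Your proposal is correct and follows the paper's proof essentially step for step. You sum the dyadic bound $\phi_i<2\alpha r_{\mathrm{IB}}^j$ over the partition of $R^l$ into the sets $C_k^l$, and use $R''$ membership to get $d(c,\opt(c))<r_{\mathrm{IB}}^j/8$ for each center. Center separation and the triangle inequality then give pairwise distances exceeding $r_{\mathrm{IB}}^j/4$, which makes $\opt$ injective on centers and forces every tour edge to have length greater than $r_{\mathrm{IB}}^j/4$.
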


\begin{proof}
For every $r_i\in R_j$, equations~\eqref{eq:is-net-group} and
\eqref{eq:is-inner-radius} give
\[
  \frac{\phi_i}{2\alpha}<r_{\mathrm{IB}}^j.
\]
Summing over the partition of $R^l$ into the sets $C_k^l$ proves
\eqref{eq:is-inner-pays-net}.

Now consider two distinct centers $c^{(k)}$ and $c^{(k')}$. Since both belong
to $R''$, equations~\eqref{eq:is-easy} and~\eqref{eq:is-net-group} imply
\[
  d(c^{(k)},\opt(c^{(k)}))
  <\frac{\phi_{t,\alpha}(c^{(k)})}{16\alpha}
  <\frac{r_{\mathrm{IB}}^j}{8},
\]
and the same inequality holds for $c^{(k')}$. Combining this with
\eqref{eq:is-center-separation} gives
\[
  d(\opt(c^{(k)}),\opt(c^{(k')}))
  >\frac{r_{\mathrm{IB}}^j}{4}.
\]
In particular, different centers in $\widetilde R$ have different optimal
server locations. Every edge of a tour through $\widetilde S$ therefore has
length at least $r_{\mathrm{IB}}^j/4$, proving
\eqref{eq:is-inner-pays-tsp}.
\end{proof}

For a ball $B\subseteq X$, write $M_{\mathrm{OPT}}\cap B$ for the set of
optimal edges both of whose endpoints lie in $B$.

\begin{lemma}[Optimal weight in an outer ball]
\label{lem:is-outer-density}
For every nonempty $C_k^l$,
\[
  w(M_{\mathrm{OPT}}\cap\mathrm{OB}_k^l)
  \geq
  \frac{\beta_k^l r_{\mathrm{OB}}^l}{4(t+1)}.
\]
\end{lemma}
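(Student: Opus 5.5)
Fix a nonempty $C_k^l$. For each $r_g\in C_k^l$ the optimal path $Q_g$ lies inside $\mathrm{OB}_k^l$: by definition of $\delta_g$ every vertex $v$ of $Q_g$ satisfies $d(c^{(k)},v)\le\delta_g<r_{\mathrm{OB}}^l$, using the upper end of the outer-group condition~\eqref{eq:is-outer-group}. In particular both endpoints of every edge of $M_{\mathrm{OPT}}\cap Q_g$ lie in $\mathrm{OB}_k^l$, so
\[
  M_{\mathrm{OPT}}\cap Q_g\subseteq M_{\mathrm{OPT}}\cap\mathrm{OB}_k^l .
\]

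Next, the paths $Q_g$ for $r_g\in C_k$ are edge-disjoint, because they come from one edge-disjoint decomposition of $M_{\mathrm{OPT}}\mathbin{\triangle}M^{(k)}$. Removing closed subtrails and truncating only discards edges, so disjointness survives. The sets $M_{\mathrm{OPT}}\cap Q_g$, $r_g\in C_k^l$, are therefore pairwise disjoint subsets of $M_{\mathrm{OPT}}\cap\mathrm{OB}_k^l$, and
\[
  w(M_{\mathrm{OPT}}\cap\mathrm{OB}_k^l)\ \ge\ \sum_{r_g\in C_k^l} w(M_{\mathrm{OPT}}\cap Q_g).
\]

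For each term, apply Lemma~\ref{lem:is-path-density}: \eqref{eq:is-opt-fraction} and \eqref{eq:is-path-length} give
\[
  w(M_{\mathrm{OPT}}\cap Q_g)\ \ge\ \frac{w(Q_g)}{t+1}\ \ge\ \frac{\delta_g}{2(t+1)}.
\]
The lower end of \eqref{eq:is-outer-group} gives $\delta_g\ge 2^{l-1}\OPT/(\alpha n)=r_{\mathrm{OB}}^l/2$. Hence each term is at least $r_{\mathrm{OB}}^l/(4(t+1))$. Summing over the $\beta_k^l$ requests in $C_k^l$ yields the claim.

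The only delicate point is the edge-disjointness across different $r_g$ in the same cluster. It depends on all the $Q_g$ being extracted from a single decomposition relative to the same matching $M^{(k)}$, which the construction ensures because every request of $C_k$ is still unarrived, hence a source, at that time. Containment in the ball also needs the strict inequality $\delta_g<r_{\mathrm{OB}}^l$, so that no edge endpoint falls outside the ball.
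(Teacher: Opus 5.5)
Your proof is correct and follows the same route as the paper. For each $r_g\in C_k^l$ it chains Lemma~\ref{lem:is-path-density} with the lower end of the outer-group condition, $\delta_g\ge r_{\mathrm{OB}}^l/2$, to get $w(M_{\mathrm{OPT}}\cap Q_g)\ge r_{\mathrm{OB}}^l/(4(t+1))$, then sums over the $\beta_k^l$ edge-disjoint paths lying inside $\mathrm{OB}_k^l$; you additionally spell out the containment (via $\delta_g<r_{\mathrm{OB}}^l$) and the disjointness (via the single decomposition relative to $M^{(k)}$), which the paper asserts in one line.
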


\begin{proof}
For every $r_g\in C_k^l$, equation~\eqref{eq:is-outer-group} and
Lemma~\ref{lem:is-path-density} give
\[
  w(M_{\mathrm{OPT}}\cap Q_g)
  \geq\frac{w(Q_g)}{t+1}
  \geq\frac{\delta_g}{2(t+1)}
  \geq\frac{r_{\mathrm{OB}}^l}{4(t+1)}.
\]
The paths associated with the requests of $C_k^l$ are edge-disjoint and lie
inside $\mathrm{OB}_k^l$. Summing the displayed inequality over these
$\beta_k^l$ paths proves the claim.
\end{proof}

\subsection{Weighted Vitali selection}

We use the same weighted variant of the Vitali selection as
\citet{Nayyar2017Mu}. Given a collection of equal-radius weighted balls,
repeatedly select a remaining ball of maximum weight and remove it together
with every remaining ball that intersects it. The selected balls are pairwise
disjoint, and their associated intersecting sets partition the original
collection. Every ball in an intersecting set lies inside the threefold
expansion of its selected ball, and its weight is no larger than the weight of
the selected ball.

Apply this procedure to the outer balls $\{\mathrm{OB}_k^l\}_k$. Write the
selected balls as
\[
  \mathrm{OB}_{s_1}^l,\ldots,\mathrm{OB}_{s_q}^l,
\]
and let $K_{s_p}$ be the indices of the balls assigned to
$\mathrm{OB}_{s_p}^l$. Set
\[
  \kappa_{s_p}
  =\frac{\beta_{s_p}^l r_{\mathrm{OB}}^l}{4(t+1)},
  \qquad
  \theta_{s_p}
  =|K_{s_p}|\beta_{s_p}^l r_{\mathrm{IB}}^j.
\]
Lemma~\ref{lem:is-outer-density} and the disjointness of the selected outer
balls give
\begin{equation}
  \sum_{p=1}^q\kappa_{s_p}\leq\OPT.
  \label{eq:is-kappa}
\end{equation}
The maximal-weight selection and the partition into intersecting sets give
\begin{equation}
  \sum_k\beta_k^l r_{\mathrm{IB}}^j
  \leq\sum_{p=1}^q\theta_{s_p}.
  \label{eq:is-theta}
\end{equation}

\begin{lemma}[One outer group]
\label{lem:is-one-outer-group}
For every $j$ and $l$,
\[
  \sum_{r_i\in R^l}\phi_i
  \leq192\alpha(t+1)\mu_X(S)\OPT.
\]
\end{lemma}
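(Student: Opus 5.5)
The plan is to chain the inner-ball bound \eqref{eq:is-inner-pays-net}, the Vitali domination \eqref{eq:is-theta}, and the packing bound \eqref{eq:is-kappa}. Together they reduce the lemma to a per-selected-ball comparison of the form
\[
  \theta_{s_p}\le 24\,(t+1)\,\mu_X(S)\cdot 4\,\kappa_{s_p},
\]
equivalently
\[
  |K_{s_p}|\,r_{\mathrm{IB}}^j\le 24\,\mu_X(S)\,r_{\mathrm{OB}}^l .
\]
Indeed, given this, we get
\[
  \sum_{r_i\in R^l}\phi_i
  <2\alpha\sum_k\beta_k^l r_{\mathrm{IB}}^j
  \le 2\alpha\sum_p\theta_{s_p}
  \le 2\alpha\cdot 96(t+1)\mu_X(S)\sum_p\kappa_{s_p}
  \le 192\alpha(t+1)\mu_X(S)\OPT.
\]
So everything hinges on bounding the number of outer balls absorbed by a single selected ball, using the TSP/diameter ratio $\mu_X(S)$.

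\textbf{Case $|K_{s_p}|=1$.} Here $r_{\mathrm{IB}}^j<r_{\mathrm{OB}}^l$ gives $|K_{s_p}|r_{\mathrm{IB}}^j\le r_{\mathrm{OB}}^l$. It then suffices to note $\mu_X(S)\ge1$. This holds because any two distinct points give $\tsp/\diam=2$, and $\mu_X(S)=1$ is set by convention when $\diam(S)=0$.

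\textbf{Case $|K_{s_p}|\ge2$.} Let $\widetilde R$ be the set of centers $c^{(k)}$, $k\in K_{s_p}$, and let $\widetilde S=\opt(\widetilde R)$.
\begin{itemize}
\item By the proof of Lemma~\ref{lem:is-inner-balls}, distinct centers have distinct optimal servers, so $|\widetilde S|=|K_{s_p}|\ge2$ and $\diam(\widetilde S)>0$.
\item Inequality \eqref{eq:is-inner-pays-tsp} gives $|K_{s_p}|\,r_{\mathrm{IB}}^j/4\le\tsp(\widetilde S)\le\mu_X(S)\diam(\widetilde S)$.
\item Every ball $\mathrm{OB}_k^l$ with $k\in K_{s_p}$ intersects $\mathrm{OB}_{s_p}^l$, and the two have equal radius $r_{\mathrm{OB}}^l$. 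Hence $d(c^{(k)},c^{(s_p)})\le 2r_{\mathrm{OB}}^l$, and any two such centers are within $4r_{\mathrm{OB}}^l$ of each other.
\item Each center is within $r_{\mathrm{IB}}^j/8$ of its optimal server, since centers lie in $R''$; this is the estimate used in Lemma~\ref{lem:is-inner-balls}. Therefore $\diam(\widetilde S)\le 4r_{\mathrm{OB}}^l+r_{\mathrm{IB}}^j/4\le\tfrac{17}{4}r_{\mathrm{OB}}^l$.
\end{itemize}
Combining these gives $|K_{s_p}|r_{\mathrm{IB}}^j\le 17\mu_X(S)r_{\mathrm{OB}}^l$, which is within the required constant $24$.

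The main point needing care is the diameter estimate. First, one must confirm that the intersecting-set structure of the weighted Vitali selection really places all absorbed centers within $2r_{\mathrm{OB}}^l$ of the selected center; this is immediate for equal radii. Second, one must check that the passage from centers to their optimal servers costs only the additive $r_{\mathrm{IB}}^j/8$ slack, which requires each center to satisfy the second condition of \eqref{eq:is-easy} with strict failure. The rest is bookkeeping of constants, with slack to spare.
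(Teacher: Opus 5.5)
Your proposal is correct and follows the paper's proof. It uses the same chain through \eqref{eq:is-inner-pays-net}, \eqref{eq:is-theta} and \eqref{eq:is-kappa}, and bounds $\theta_{s_p}/\kappa_{s_p}$ via the tour-packing inequality \eqref{eq:is-inner-pays-tsp} and $\mu_X(S)$. The only difference is your diameter estimate $\diam(\widetilde S)\le 4r_{\mathrm{OB}}^l+r_{\mathrm{IB}}^j/4\le\frac{17}{4}r_{\mathrm{OB}}^l$, taken from pairwise center distances, where the paper places each $\opt(c^{(k)})$ in the threefold expansion of $\mathrm{OB}_{s_p}^l$ to get $6r_{\mathrm{OB}}^l$. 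Both estimates fit within the constant $96(t+1)$.
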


\begin{proof}
We first bound $\theta_{s_p}/\kappa_{s_p}$. If $|K_{s_p}|=1$, then
$l\geq j$ gives
\[
  \frac{\theta_{s_p}}{\kappa_{s_p}}
  =
  4(t+1)\frac{r_{\mathrm{IB}}^j}{r_{\mathrm{OB}}^l}
  \leq2(t+1).
\]

Suppose $|K_{s_p}|>1$. Let $\widetilde R_{s_p}$ be the centers of the balls
indexed by $K_{s_p}$, and let
\[
  \widetilde S_{s_p}=\opt(\widetilde R_{s_p}).
\]
By Lemma~\ref{lem:is-inner-balls},
\[
  |K_{s_p}|r_{\mathrm{IB}}^j
  \leq4\tsp(\widetilde S_{s_p}).
\]
Furthermore,
\[
  \opt(c^{(k)})
  \in\mathrm{IB}_k^l
  \subseteq\mathrm{OB}_k^l
\]
for every $k$: the proof of Lemma~\ref{lem:is-inner-balls} gives
\[
  d(c^{(k)},\opt(c^{(k)}))
  <\frac{r_{\mathrm{IB}}^j}{8}.
\]
Every ball indexed by $K_{s_p}$ lies inside
$3\mathrm{OB}_{s_p}^l$. Hence
\[
  \diam(\widetilde S_{s_p})
  \leq6r_{\mathrm{OB}}^l.
\]
It follows that
\begin{align*}
  \frac{\theta_{s_p}}{\kappa_{s_p}}
  &=
  4(t+1)
  \frac{|K_{s_p}|r_{\mathrm{IB}}^j}{r_{\mathrm{OB}}^l}\\
  &\leq
  16(t+1)
  \frac{\tsp(\widetilde S_{s_p})}{r_{\mathrm{OB}}^l}\\
  &\leq
  96(t+1)
  \frac{\tsp(\widetilde S_{s_p})}
       {\diam(\widetilde S_{s_p})}\\
  &\leq
  96(t+1)\mu_X(S).
\end{align*}

The same final bound covers $|K_{s_p}|=1$. Indeed, when
$\diam(S)>0$, every closed tour through a positive-diameter set has length at
least twice its diameter, and hence $\mu_X(S)\geq2$. Combining
\eqref{eq:is-kappa} and~\eqref{eq:is-theta}, we obtain
\[
  \sum_k\beta_k^l r_{\mathrm{IB}}^j
  \leq96(t+1)\mu_X(S)\OPT.
\]
Equation~\eqref{eq:is-inner-pays-net} now gives
\[
  \sum_{r_i\in R^l}\phi_i
  <2\alpha
    \sum_k\beta_k^l r_{\mathrm{IB}}^j
  \leq192\alpha(t+1)\mu_X(S)\OPT.
\]
\end{proof}

\subsection{Completing the proof}

Summing Lemma~\ref{lem:is-one-outer-group} over the at most $L$ outer groups
gives, for every net-cost group $R_j$,
\begin{equation}
  \sum_{r_i\in R_j}\phi_i
  \leq
  192\alpha(t+1)L\mu_X(S)\OPT.
  \label{eq:is-one-net-group}
\end{equation}
Summing~\eqref{eq:is-one-net-group} over the $m$ net-cost groups and adding
\eqref{eq:is-easy-sum} yields
\begin{equation}
  \Phi
  \leq
  \Bigl(
    16\alpha+1
    +192\alpha(t+1)mL\mu_X(S)
  \Bigr)\OPT.
  \label{eq:is-total-net}
\end{equation}
Here
\begin{align*}
  m
  &=1+\left\lceil\log_2(tn)\right\rceil,\\
  L
  &=1+\left\lceil
       \log_2\!\left(
         \alpha n\left(t+1+\frac{t}{2\alpha}\right)
       \right)
     \right\rceil.
\end{align*}
For fixed $t$ and $\alpha$, both $m$ and $L$ are $O(\log(2n))$.

Applying Lemma~\ref{lem:is-online-from-net} to
\eqref{eq:is-total-net}, we obtain the explicit estimate
\begin{equation}
  \ALG=w(M_n)
  \leq
  \frac{2t}{\alpha(t-1)}
  \Bigl(
    16\alpha+1
    +192\alpha(t+1)mL\mu_X(S)
  \Bigr)\OPT.
  \label{eq:is-explicit-final}
\end{equation}
If $\diam(S)=0$, all server locations coincide and every feasible assignment
has the same cost, so the theorem is immediate. Otherwise
$\mu_X(S)\geq2$, and the additive term $16\alpha+1$ is absorbed by the
$\mu_X(S)mL$ term. Since $mL=O_{t,\alpha}(\log^2 n)$, we conclude that
\[
  \ALG
  \leq
  O_{t,\alpha}\!\left(\mu_X(S)\log^2 n\right)\OPT.
\]
This proves Theorem~\ref{thm:input-sensitive-restated}.
\qed

\section{Experiments}
\label{sec:experiments}
\suppressfloats[t]

\noindent\textbf{Setup.}
We compare RG with Greedy, RM,  the algorithm of \citet{arndt2026competitive}, and the offline optimum (OPT). Every method receives the same requests, sites, capacities, and arrival order. We report mean cumulative assignment cost, per-request runtime, and the fraction of decisions and cost attributable to nearest-available-server assignments for RG and, on NYC, Arndt et al.\ (ties count as nearest). Results are averaged over ten instances of $10{,}000$ requests. Our C++ implementation runs on one core of an Apple M2 processor with 8~GB RAM.

\noindent\textbf{Datasets.}
\textbf{NYC-Taxi} uses the samples of \citet{seth2026efficient} from the NYC Taxi data~\citep{KaggleNYCdata}: pickups are requests, drop-offs form the server-side point cloud, and costs are Euclidean distances in the original two-dimensional coordinates. Each sample contains $10{,}000$ trips from the first week of a fixed month. \textbf{MovieLens-25M}~\citep{HarperKonstan2015} uses the 1,128-dimensional Tag Genome vectors~\citep{VigSenRiedl2012} available for 13,816 movies. We standardize each tag coordinate and normalize each movie vector. A user's location is the normalized weighted sum of genome movies that the user rated at least four stars, with weight equal to the rating minus three. Restricting to users with at least 20 such ratings leaves 13,758 request vectors; the server-side points are the 10,000 genome movies with the most ratings. We fit PCA on these movies, map movies and users to 100 dimensions, and normalize them. For both datasets, $k$-means clusters the 10,000 server-side points into $k$ sites, with each site's capacity equal to its cluster size; clustering is rerun whenever $k$ changes.

\noindent\textbf{Arrival orders and parameters.}
For NYC, we use both the natural timestamp order and a \emph{directional} order obtained by sorting the same requests by their $y$-coordinate. The latter is a simple, non-adaptive stress test for spatially correlated arrivals: it changes only the order, while retaining every real request and server location. We set $k=500$, $t=6$, and $\alpha\in\{1,1.5,2\}$; we also vary $k\in\{50,100,200,500\}$ at $(t,\alpha)=(6,2)$. For MovieLens, we use a directional arrival order with $k=500$, $t=1.12$, and $\alpha\in\{1.02,1.05,1.08,1.10\}$. For computational efficiency, all online algorithms operate on the same fixed 12-dimensional Gaussian projection, while their resulting assignments and $\mathrm{OPT}$ are evaluated in the 100-dimensional reference space.

\begin{figure}[H]
\centering
\begin{minipage}[t]{0.495\linewidth}
\vspace{0pt}
\centering
{\scriptsize
\textbf{(a) NYC timestamp order}\\[-1pt]
$(k,t)=(500,6);\ \alpha=1,2$\par}
\includegraphics[width=.49\linewidth]{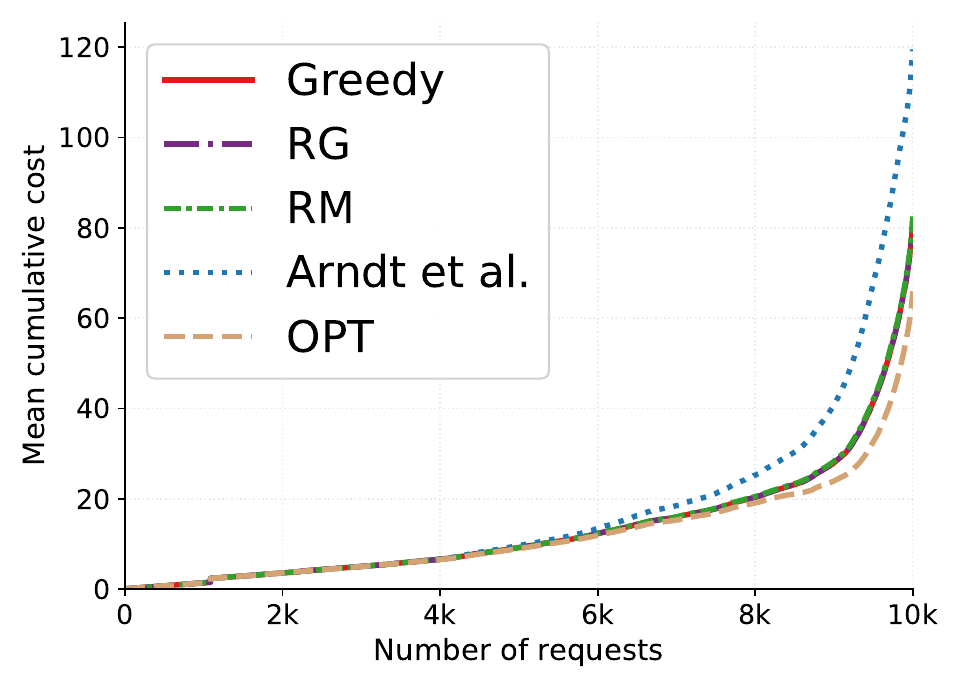}\hfill
\includegraphics[width=.49\linewidth]{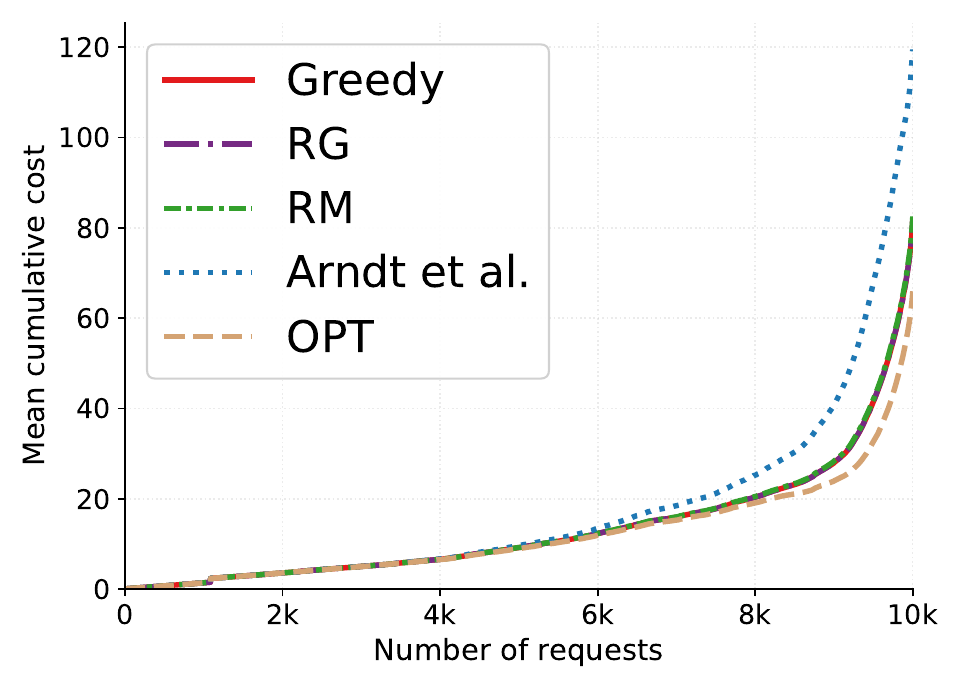}
\end{minipage}
\hfill
\begin{minipage}[t]{0.495\linewidth}
\vspace{0pt}
\centering
{\scriptsize
\textbf{(b) NYC directional order}\\[-1pt]
$(k,t)=(500,6);\ \alpha=1,2$\par}
\includegraphics[width=.49\linewidth]{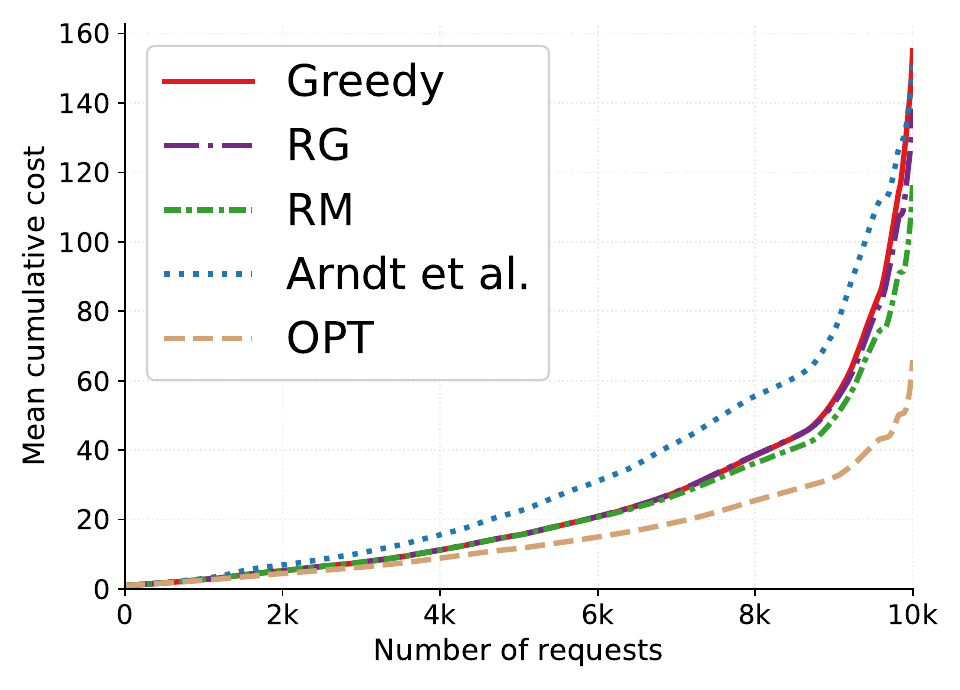}\hfill
\includegraphics[width=.49\linewidth]{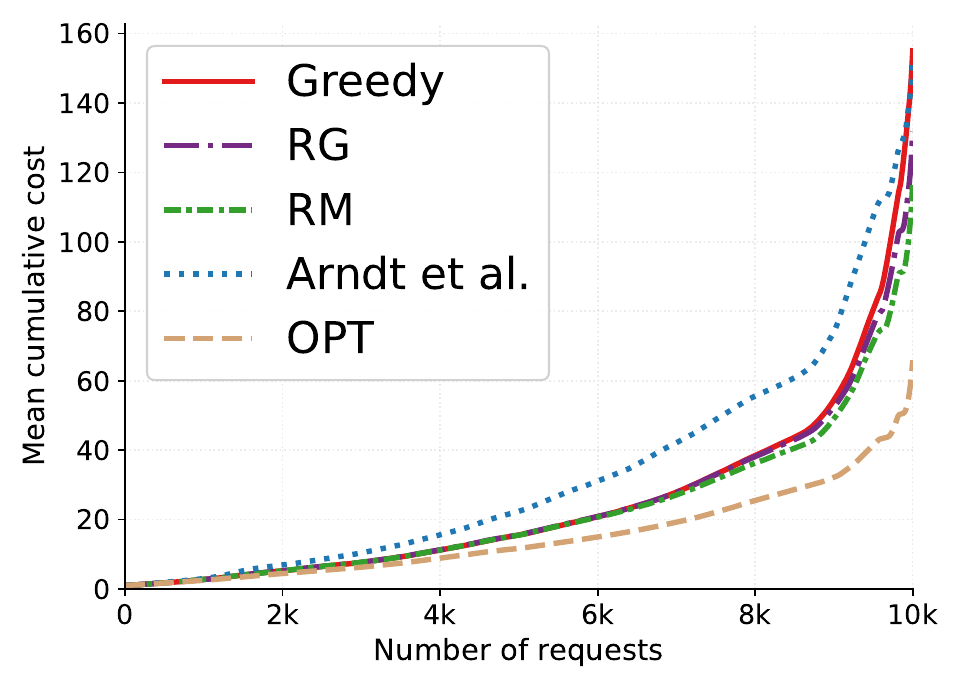}
\end{minipage}

\par\vspace{4pt}

\begin{minipage}[t]{0.495\linewidth}
\vspace{0pt}
\centering
{\scriptsize
\textbf{(c) NYC directional order: varying $k$}\\[-1pt]
$(t,\alpha)=(6,2);\ k=50,500$\par}
\includegraphics[width=.49\linewidth]{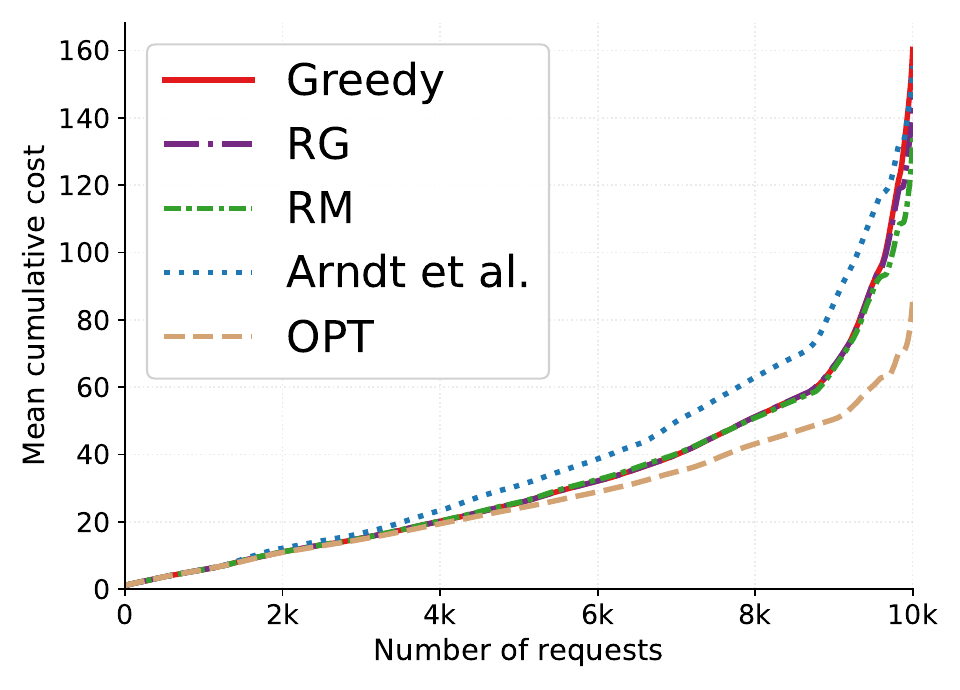}\hfill
\includegraphics[width=.49\linewidth]{Figures/OTR_Clustered_K500_DirectionalStochastic_T6_Alpha2_WithRM_Cost.pdf}
\end{minipage}
\hfill
\begin{minipage}[t]{0.495\linewidth}
\vspace{0pt}
\centering
{\scriptsize
\textbf{(d) MovieLens directional order}\\[-1pt]
$(k,t)=(500,1.12);\ \alpha=1.02,1.10$\par}
\includegraphics[width=.49\linewidth]{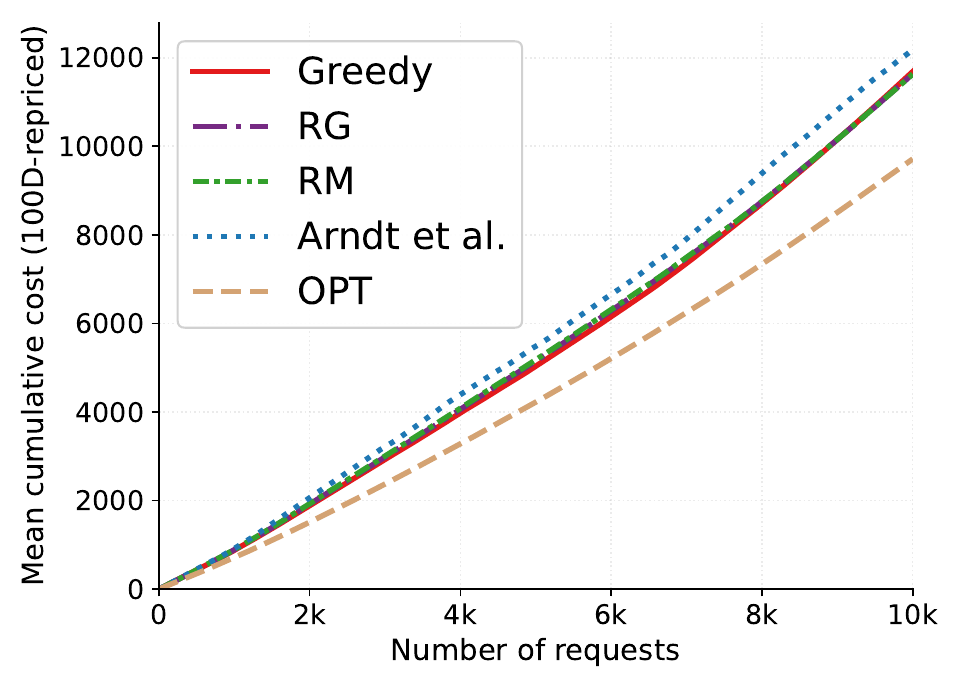}\hfill
\includegraphics[width=.49\linewidth]{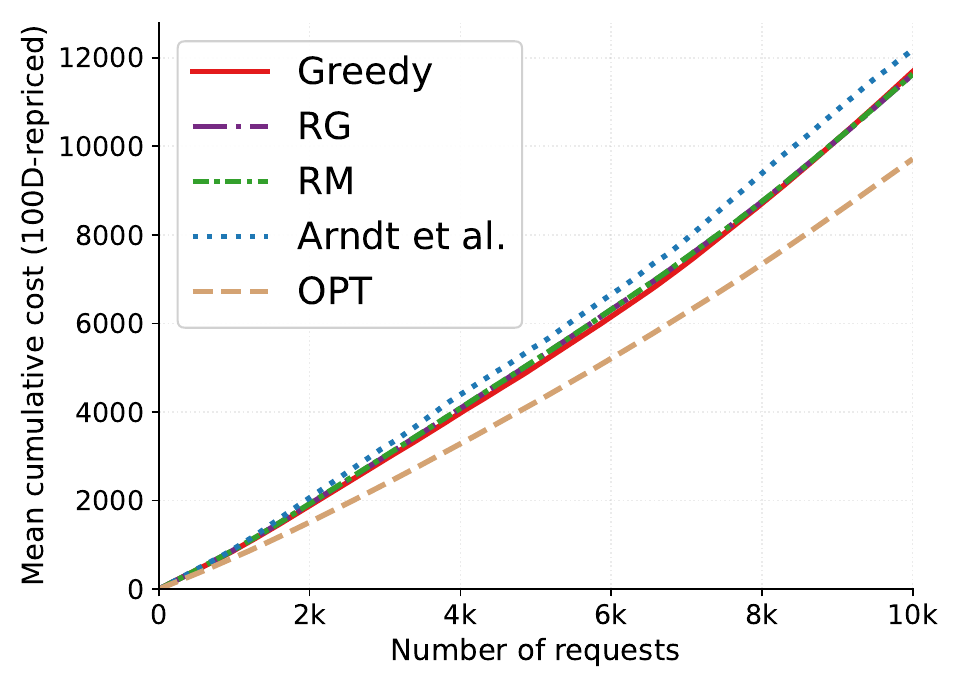}
\end{minipage}

\par\vspace{3pt}
\caption{Mean cumulative assignment cost over ten instances, shown at the endpoints of each parameter sweep. RG tracks Greedy on NYC's timestamp order, but is distinctly better under directional ordering; this advantage grows with $k$. On MovieLens, RG consistently improves on Greedy and both outperform Arndt et al. Complete sweeps appear in Appendix~\ref{app:additional-experiments}.}
\label{fig:experiments}
\end{figure}

\noindent\textbf{Assignment cost.}
On NYC's timestamp traces, RG stays within $0.5\%$ of Greedy for every tested $\alpha$, while both cost about $31\%$ less than Arndt et al. (Figure~\ref{fig:experiments}(a) and Appendix~\ref{app:additional-experiments}). RM costs $1.0\%$ more than Greedy; but still costs about $31\%$ less than Arndt et al. This is the expected behavior on natural traces, where Greedy is already known to be empirically strong~\citep{tong2016online}: RG preserves that favorable behavior while retaining its worst-case guarantee. The directional trace exposes the benefit of robustification: as $\alpha$ increases from $1$ to $2$, RG's reduction in final cost relative to Greedy grows from $10.75\%$ to $15.02\%$ (Figure~\ref{fig:experiments}(b)). The gain also strengthens with the number of sites, rising from $8.54\%$ at $k=50$ to $15.02\%$ at $k=500$ (Figure~\ref{fig:experiments}(c)). 
RM extends this trend to its limit; at $k=500$, $t=6$, it reduces cost by $24.3\%$ relative to Greedy.


On MovieLens, RG costs less than Greedy in all ten repetitions for every tested $\alpha, t=1.12$, with mean reductions between $0.41\%$ and $0.51\%$ across the sweep (Figure~\ref{fig:experiments}(d) and Appendix~\ref{app:additional-experiments}). Arndt et al. has higher cost than both methods in every run. RG therefore retains a consistent edge even in this high-dimensional setting where Greedy is already a strong baseline~\cite{tong2016online}. For MovieLens, we show the results for $t=1.12$ in the main text since it reveals a regime in which RG makes a substantial fraction of non-greedy assignments. At $t=6$, as used for NYC, RG matches Greedy's cost on MovieLens, while both outperform Arndt et al.; these results appear in Appendix~\ref{app:additional-experiments}.
Additional results for other tested value of $t, \alpha$ and $k$ on both the datasets are reported in Appendix~\ref{app:additional-experiments}.

\noindent\begin{minipage}{\linewidth}
\noindent\textbf{Runtime and nearest-neighbor behavior.}
\begin{center}
\begin{minipage}[t]{0.37\linewidth}
\centering
\footnotesize
\setlength{\tabcolsep}{3pt}
\renewcommand{\arraystretch}{1.05}
\textbf{Runtime per request (ms)}\\[2pt]
\begin{tabular*}{\linewidth}{@{\extracolsep{\fill}}lrr@{}}
\hline
Algorithm & NYC & MovieLens\\
\hline
RG & 0.04291 & 0.2647\\
RM & 0.06312 & 0.2674\\
Greedy & 0.00114 & 0.0018\\
Arndt et al. & 0.00152 & 0.0033\\
OPT & 0.5403 & 1.1660\\
\hline
\end{tabular*}
\end{minipage}
\hfill
\begin{minipage}[t]{0.60\linewidth}
\centering
\footnotesize
\setlength{\tabcolsep}{2pt}
\renewcommand{\arraystretch}{0.98}
\textbf{Nearest-neighbor contribution}\\[2pt]
\begin{tabular*}{\linewidth}{@{\extracolsep{\fill}}llcrr@{}}
\hline
Dataset & Algorithm & $\alpha$ & Decisions (\%) & Cost (\%)\\
\hline
NYC & RG ($t=6)$ & 1.00 & 98.05 & 87.78\\
    &    & 1.50 & 97.62 & 85.48\\
    &    & 2.00 & 97.37 & 84.05\\
\cline{2-5}
    & RM ($t=6$) & -- & 93.44 & 62.46\\
\cline{2-5}
    & Arndt et al. & -- & 51.34 & 9.06\\
\hline
MovieLens & RG ($t=1.12$) & 1.02 & 63.76 & 52.53\\
           &    & 1.10 & 51.79 & 40.79\\
\cline{2-5}
    & RM ($t=1.12$) & -- & 47.40 & 36.66\\
\cline{2-5}
    & Arndt et al. & -- & 14.10 & 7.17\\
\hline
\end{tabular*}
\end{minipage}
\end{center}
\end{minipage}
\par\smallskip

\noindent\textbf{Runtime trade-off.}
The runtime and nearest-neighbor entries are averaged over ten instances of $10{,}000$ requests; the runtime comparison uses $(t,\alpha)=(6,2)$ for NYC and $(1.12,1.10)$ for MovieLens. RG and RM perform more computation than the two online baselines, while both remain substantially faster than exact offline optimization. This is the computational trade-off for simultaneously obtaining a robust competitive guarantee and explicitly favoring nearest-neighbor assignments. The latter remain dominant: they constitute $97.37$--$98.05\%$ of RG's NYC decisions and $51.79$--$63.76\%$ of its MovieLens decisions. For comparison, Arndt et al.'s nearest-server assignments on NYC constitute only $51.34\%$ of its decisions and $9.06\%$ of its cost.

\noindent\textbf{Key takeaway.}
RG preserves Greedy's strong performance on natural traces, but is markedly more stable when arrivals have even mild, fixed spatial directionality: on unchanged NYC locations it improves over Greedy by up to $15.02\%$ in the regime $t=6, \alpha < (t-1)/2$, with a widening advantage as $k$ grows. Across both datasets it also consistently outperforms Arndt et al., while a majority of its decisions remain nearest-neighbor assignments.

\section{Conclusion}
We introduced Robustified Greedy, a deterministic algorithm for online transportation that interpolates between nearest-neighbor decisions and the robust augmenting-path corrections of Robust Matching. Its competitive ratio of $6.6604k-2.89$ improves the best previously known guarantee while remaining independent of the total capacity. The same framework also yields an $O(k)$ guarantee for Robust Matching and preserves the metric-sensitive $O(k^{1-1/d}\log^2 n)$ bound in fixed-dimensional Euclidean spaces.

The analysis identifies a structural reason for the improvement: positive-cost greedy edges persist until their servers saturate, allowing the cost to be accounted for over saturation epochs. This structure also provides an interpretable certificate for non-greedy decisions and guarantees that a substantial fraction of the total cost comes from nearest-neighbor assignments. The experiments are consistent with this picture: RG remains close to Greedy on natural arrival orders and is more robust under correlated arrivals. Natural directions for future work include narrowing the remaining gap to the $(2k-1)$ lower bound and determining whether the metric-sensitive logarithmic factors can be reduced.

\section*{AI use statement}

In this work, we used generative AI tools for  formulating mathematical claims, provide critical ingredients for proving mathematical claims and assisting in the writing of proofs, implementing methods and provide feedback on experiments.
We have reviewed all AI-assisted work.
We take responsibility for the final content of this work,
including text, claims or artifacts produced with the aid of generative AI.

\bibliographystyle{plainnat}
\bibliography{iclr2027_conference}

\clearpage
\appendix
\setcounter{section}{3}
\section{Additional Experimental Results}
\label{app:additional-experiments}

Figures~\ref{fig:experiments-full},~\ref{fig:experiments-full-NYC-k},~\ref{fig:experiments-full-movie-k1.12}, and~\ref{fig:experiments-full-movie-k6} report the complete parameter sweeps on
the NYC and MovieLens data.
All settings, instances, and evaluation procedures are identical to those in Section~\ref{sec:experiments}. 

\begin{figure}[!ht]
    \centering
    \setlength{\tabcolsep}{0pt}
    \renewcommand{\arraystretch}{1}
    \begin{tabular}{@{}ccc@{}}
        \multicolumn{3}{c}{\scriptsize\textbf{NYC timestamp order} ($k=500,t=6$; columns: $\alpha=1,1.5,2$)}\\[1pt]
        \includegraphics[width=.33\linewidth]{Figures/OTR_Clustered_K500_UniformStochastic_T6_Alpha1_WithRM_Cost.pdf} &
        \includegraphics[width=.33\linewidth]{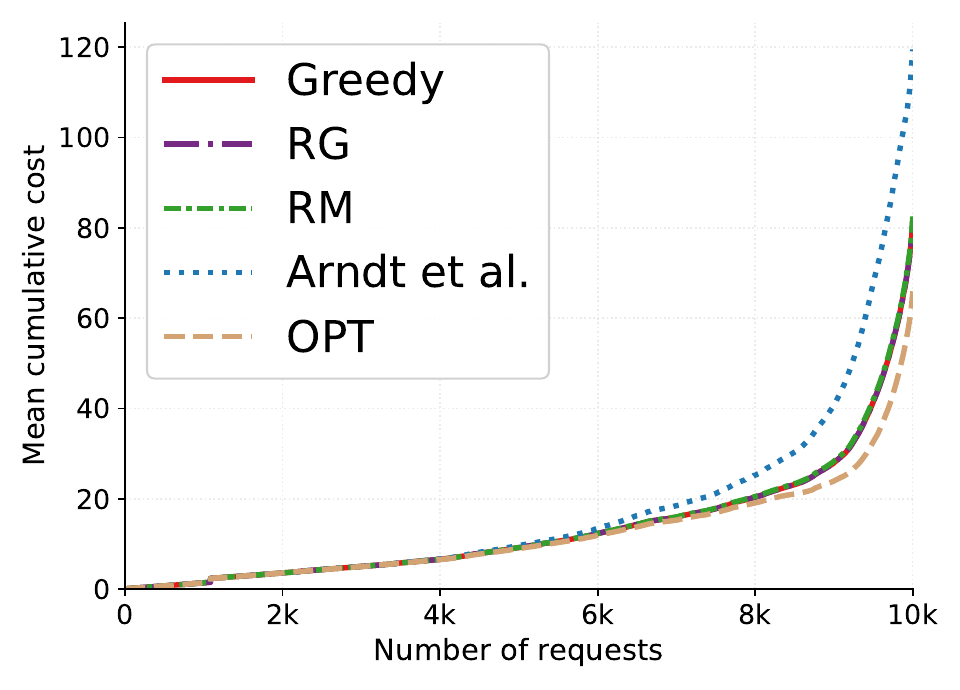} &
        \includegraphics[width=.33\linewidth]{Figures/OTR_Clustered_K500_UniformStochastic_T6_Alpha2_WithRM_Cost.pdf}\\[5pt]
        \multicolumn{3}{c}{\scriptsize\textbf{NYC directional order} ($k=500,t=6$; columns: $\alpha=1,1.5,2$)}\\[1pt]
        \includegraphics[width=.33\linewidth]{Figures/OTR_Clustered_K500_DirectionalStochastic_T6_Alpha1_WithRM_Cost.pdf} &
        \includegraphics[width=.33\linewidth]{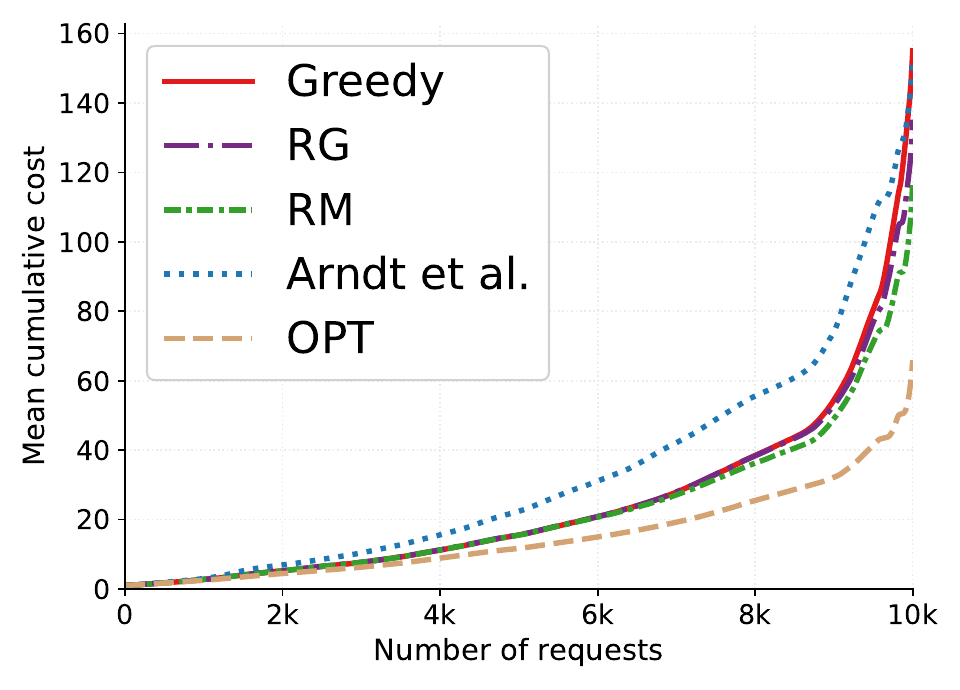} &
        \includegraphics[width=.33\linewidth]{Figures/OTR_Clustered_K500_DirectionalStochastic_T6_Alpha2_WithRM_Cost.pdf}\\[5pt]

    \end{tabular}
    \par\vspace{3pt}
    \caption{Complete mean cumulative assignment-cost sweeps over ten
    instances. The two rows vary $\alpha$ on NYC's timestamp and
    directional orders.}
    \label{fig:experiments-full}
\end{figure}

\begin{figure}[!ht]
    \centering
    \setlength{\tabcolsep}{0pt}
    \renewcommand{\arraystretch}{1}
    \begin{tabular}{@{}cccc@{}}
        \multicolumn{4}{c}{\scriptsize\textbf{NYC directional order, varying $k$} ($(t,\alpha)=(6,2)$; columns: $k=50,100,200,500$)}\\[1pt]
        \includegraphics[width=.245\linewidth]{Figures/OTR_Clustered_K50_DirectionalStochastic_T6_Alpha2_WithRM_Cost.pdf} &
        \includegraphics[width=.245\linewidth]{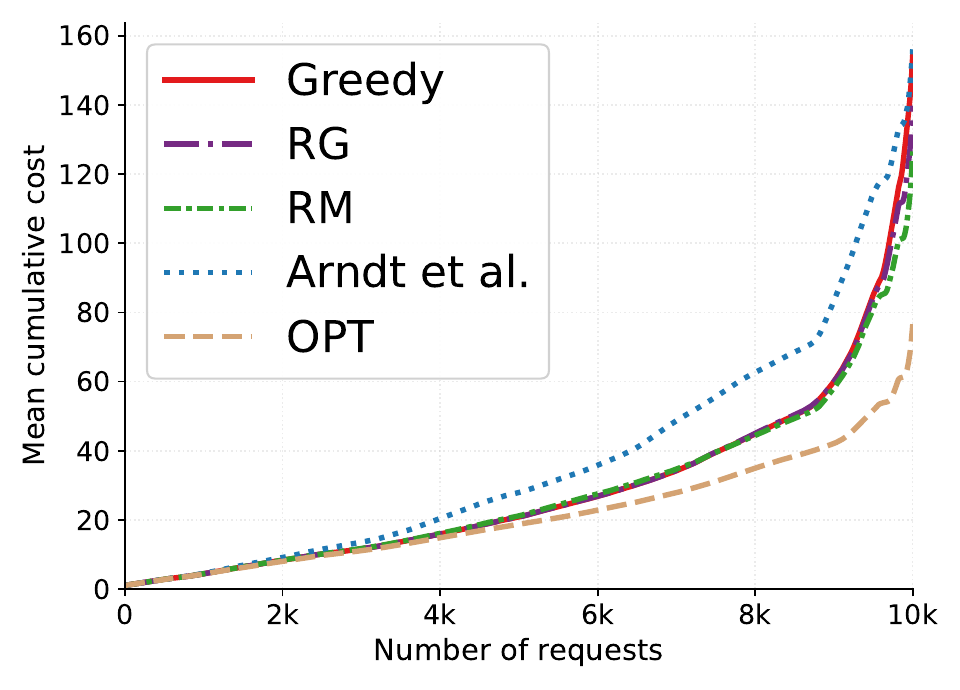} &
        \includegraphics[width=.245\linewidth]{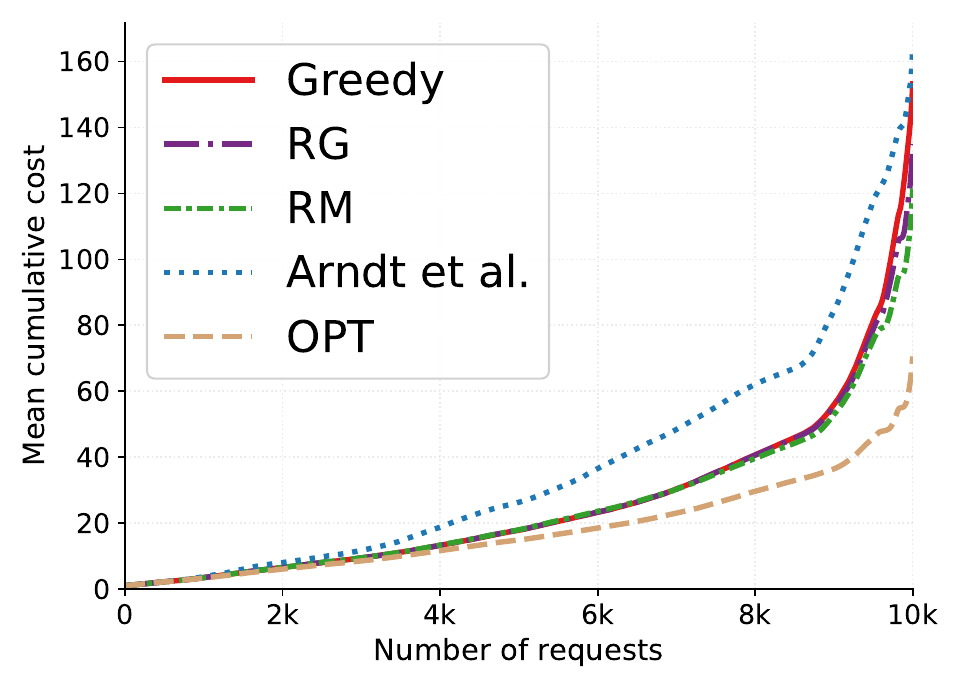} &
        \includegraphics[width=.245\linewidth]{Figures/OTR_Clustered_K500_DirectionalStochastic_T6_Alpha2_WithRM_Cost.pdf}\\[5pt]
    \end{tabular}
    \par\vspace{3pt}
    \caption{Complete mean cumulative assignment-cost sweeps over ten
    instances. Varies $k$ on NYC's directional order.}
    \label{fig:experiments-full-NYC-k}
\end{figure}

\begin{figure}[!ht]
    \centering
    \setlength{\tabcolsep}{0pt}
    \renewcommand{\arraystretch}{1}
    \begin{tabular}{@{}cccc@{}}
        \multicolumn{4}{c}{\scriptsize\textbf{MovieLens directional order} ($k=500,t=1.12$; columns: $\alpha=1.02,1.05,1.08,1.10$)}\\[1pt]
        \includegraphics[width=.245\linewidth]{Figures/MovieLens_K500_Directional_T1.12_Alpha1.02_WithRM_Cost_Repriced100D.pdf} &
        \includegraphics[width=.245\linewidth]{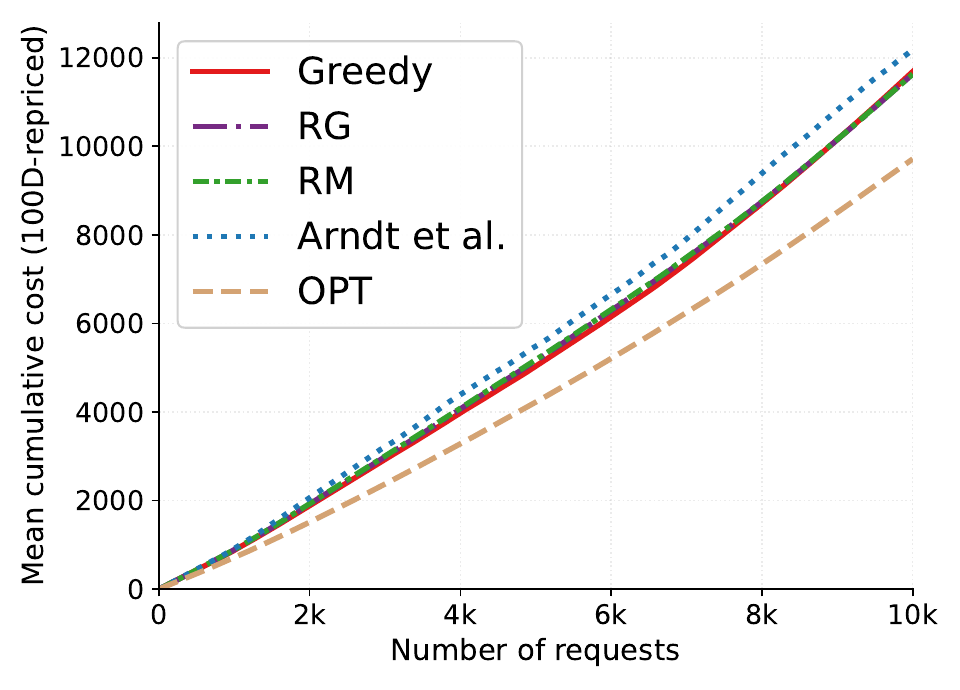} &
        \includegraphics[width=.245\linewidth]{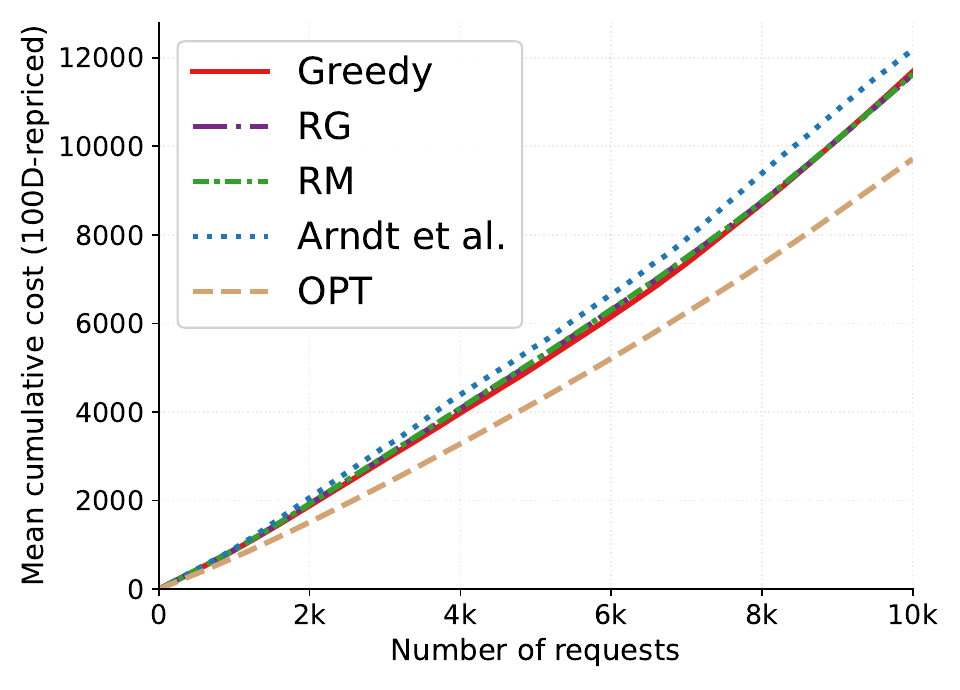} &
        \includegraphics[width=.245\linewidth]{Figures/MovieLens_K500_Directional_T1.12_Alpha1.10_WithRM_Cost_Repriced100D.pdf}\\
    \end{tabular}
    \par\vspace{3pt}
    \caption{Complete mean cumulative assignment-cost sweeps over ten
    instances on MovieLens for $t=1.12$ and $\alpha$.}
    \label{fig:experiments-full-movie-k1.12}
\end{figure}

\begin{figure}[!ht]
    \centering
    \setlength{\tabcolsep}{0pt}
    \renewcommand{\arraystretch}{1}
    \begin{tabular}{@{}ccc@{}}
        \multicolumn{3}{c}{\scriptsize\textbf{MovieLens directional order} ($k=500,t=6$; columns: $\alpha=1,1.5,2$)}\\[1pt]
        \includegraphics[width=.33\linewidth]{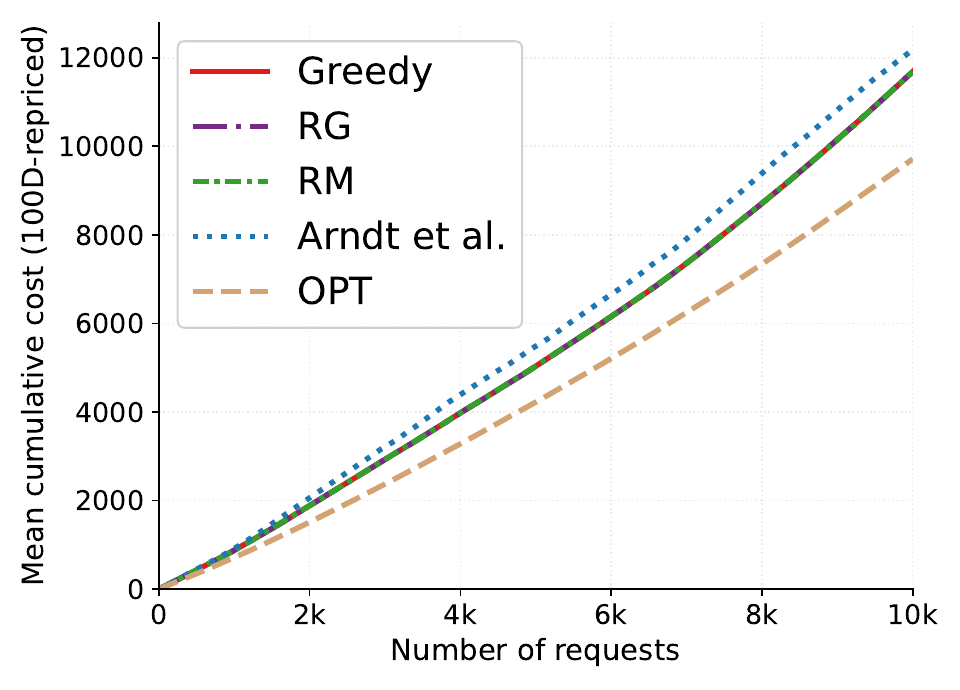} &
        \includegraphics[width=.33\linewidth]{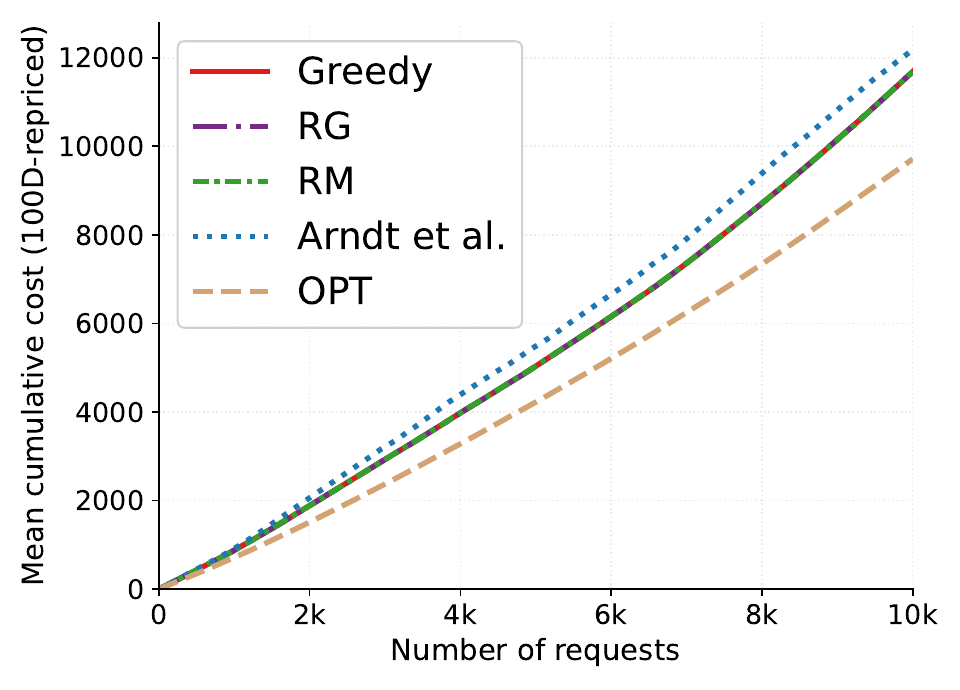} &
        \includegraphics[width=.33\linewidth]{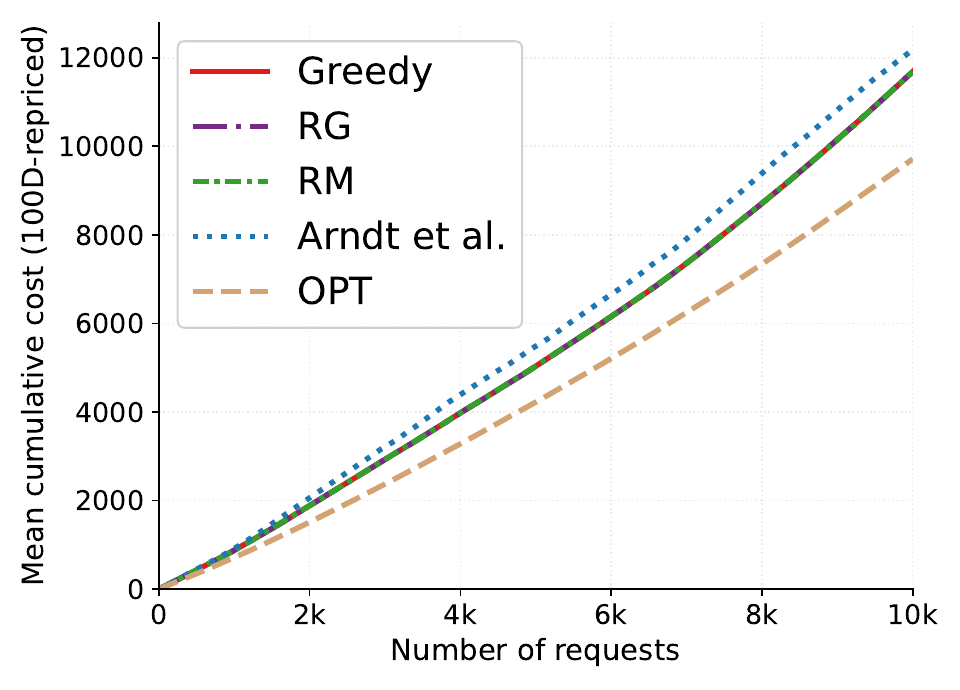}
    \end{tabular}
    \par\vspace{3pt}
    \caption{Complete mean cumulative assignment-cost sweeps over ten
    instances on MovieLens for $t=6$ and $\alpha$.}
    \label{fig:experiments-full-movie-k6}
\end{figure}

\end{document}